\renewcommand{\arraystretch}{0.8}
\newcommand{\ignore}[1]{}
\newtheorem{theorem}{Theorem}[section]
\newtheorem{conjecture}{Conjecture}[section]
\newtheorem{corollary}{Corollary}[theorem]
\newtheorem{proposition}{Proposition}[theorem]
\newtheorem{lemma}{Lemma}[theorem]
\newtheorem{definition}{Definition}[section]
\newcommand{\powerset}{\raisebox{.15\baselineskip}{\large\ensuremath{\wp}}}
\newcommand{\si}{\mathscr{I}}
\definecolor{shccolor}{rgb}{0.9,0.1,0.1}
\definecolor{vpscolor}{rgb}{0.1,0.4,0.9}
\definecolor{ncccolor}{rgb}{0.4,0.9,0.4}
\definecolor{nbcolor}{rgb}{0.5,0.3,0.4}
\DeclareMathOperator{\modulo}{mod}
\DeclareMathOperator{\area}{area}
\def\@fpheader{ }
\title{The Quantum Entropy Cone of Hypergraphs}
\author[a,b]{Ning Bao,}
\author[b]{Newton Cheng,}
\author[c]{Sergio Hern\'andez-Cuenca,}
\author[b]{Vincent P. Su}
\affiliation[a]{Computational Science Initiative, Brookhaven National Lab, Upton, NY, 11973, USA}
\affiliation[b]{Center for Theoretical Physics, Department of Physics, University of California, Berkeley, CA 94720, USA}
\affiliation[c]{Department of Physics, University of California, Santa Barbara, CA 93106, USA}
\emailAdd{ningbao75@gmail.com}
\emailAdd{newtoncheng@berkeley.edu}
\emailAdd{sergiohc@physics.ucsb.edu}
\emailAdd{vipasu@berkeley.edu}
\abstract{
In this work, we generalize the graph-theoretic techniques used for the holographic entropy cone to study hypergraphs and their analogously-defined entropy cone.
This allows us to develop a framework to efficiently compute entropies and prove inequalities satisfied by hypergraphs.
In doing so, we discover a class of quantum entropy vectors which reach beyond those of holographic states and obey constraints intimately related to the ones obeyed by stabilizer states and linear ranks.
We show that, at least up to $4$ parties, the hypergraph cone is identical to the stabilizer entropy cone, thus demonstrating that the hypergraph framework is broadly applicable to the study of entanglement entropy.
We conjecture that this equality continues to hold for higher party numbers and report on partial progress on this direction.
To physically motivate this conjectured equivalence, we also propose a plausible method inspired by tensor networks to construct a quantum state from a given hypergraph such that their entropy vectors match.
}
\begin{document}
\maketitle

\section{Introduction}
The study of entanglement entropy in holography, as originated by the Ryu-Takayanagi (RT) formula \cite{ryu2006holographic} and its covariant generalization \cite{hubeny2007covariant}, has had a profound impact on the conceptualization and formulation of the holographic duality, from energy conditions \cite{akers2017quantum,balakrishnan2019general,koeller2016holographic,bousso2016proof} and $c$-theorems \cite{casini2004finite,casini2017markov,casini2017modular} to the emergence of spacetime itself \cite{swingle2014universality,dong2016reconstruction,cotler2019entanglement,bao2018beyond}. In addition, connections to random stabilizer states and bit threads have led to progress in better understanding the form of holographic states and their entanglement structure \cite{nezami2016multipartite,freedman2017bit,cui2018bit,akers2019entanglement}. All of these ideas continue to be quite promising research directions.

A natural question to ask regarding holographic entanglement entropy is what constraints it obeys. Because of its geometric nature, it satisfies stronger entropy inequalities than those of generic quantum states. The first of these inequalities to be discovered was the monogamy of mutual information (MMI),
\begin{equation}\label{eq:mmi}
    I_{3}(A:B:C) \coloneqq S({A}) + S({B}) + S({C}) - S({AB}) - S({AC}) - S({BC}) + S({ABC}) \leq 0,
\end{equation}
originally proven on time-reflection symmetric Cauchy slices \cite{hayden2013holographic}, and later covariantly \cite{wall2014maximin}. By utilizing graph-theoretic techniques, several more inequalities were discovered in \cite{bao2015holographic} and used to define the holographic entropy cone of all entropy vectors allowed for holographic states with a classical geometric dual. At present, the complete set of holographic entropy inequalities obeyed by time-reflection symmetric states is known for up to and including $5$ parties \cite{bao2015holographic,hernandezcuenca2019}, and while a general time-dependent proof remains an open question, all of them have been shown to be obeyed covariantly in specific situations \cite{czech2019holographic,bao2018entropy}. 

Recently, there has also been much work in attempting to characterize holographic entanglement and streamline the discovery of entropy inequalities by endowing entropy space with additional structures and studying useful reparameterizations thereof \cite{hubeny2018holographic,hubeny2019holographic,he2019holographic,rota2018new,hernndezcuenca2019quantum}. Progress along these lines holds the potential to improve our understanding of the meaning of holographic constraints and aid in the computational tractability of finding and proving new holographic entanglement entropy inequalities.

The original motivation of the holographic entropy cone work, however, was only partially accomplished in the form of new constraints for holographic states. It was unsuccessful in finding new inequalities which could have been true for all quantum states, as in particular all holographic inequalities discovered from MMI onward are violated by Greenberger–Horne–Zeilinger (GHZ) states for sufficiently high party number \cite{greenberger1989going}. The possibility to adapt holographic entanglement techniques to find results true for more general quantum states remains a tantalizing research direction, with some progress in the study of entanglement of purification \cite{umemoto2018entanglement,nguyen2018entanglement}, and its multipartite and conditional extensions \cite{bao2018holographic,umemoto2018entanglement2,bao2019conditional}.

In this work, we generalize the graph-theoretic techniques of \cite{bao2015holographic} to the setting of hypergraphs\footnote{It bears mentioning at this point that the graph and hypergraph states described here are not the same as the graph and hypergraph states discussed in the context of, for example, \cite{hein2004multiparty,rossi2013quantum}. The construction methodology and connection to entanglement entropies differs greatly between the two formulations.} in order to capture richer, higher-partite entanglement structures. In doing so, we find a set of entropy vectors which contains the holographic entropy vectors as a strict subset. We show that the contraction map proof method for holographic entropy inequalities generalizes appropriately to hypergraphs, and use this generalization to prove new inequalities different from those obeyed by holographic states. We establish that the resulting \textit{hypergraph entropy cone} is the same for up to $4$ parties as the stabilizer entropy cone defined in \cite{linden2013quantum}, which is itself also equivalent to the cone defined by all balanced linear-rank inequalities \cite{matus2007infinitely,linden2013quantum}. Additionally, we make significant progress in extending these equivalences to higher party number, which we formulate in conjectural form.

The organization of the paper is as follows. In section \ref{sec:review}, we review known results and techniques from the holographic entropy cone that are relevant for this work, and briefly review certain inequalities from the context of stabilizer states and linear ranks which turn out to be pertinent to hypergraphs. We introduce hypergraphs in section \ref{sec:hypergraphcone}, with an emphasis on the features that we subsequently exploit. In the first part of this section, we prove requisite extensions of the holographic techniques from graphs to hypergraphs, and use these generalizations to prove some of the inequalities mentioned in the preceding section. In the second part, we construct extreme ray realizations in the form of hypergraphs to demonstrate completeness of the hypergraph cone for $4$ parties and thereby show that it is identical to the $4$-party stabilizer cone, and detail progress for $5$ parties. In section \ref{sec:hyperstates}, we propose a plausible prescription for constructing quantum states from a given hypergraph that exactly reproduce the hypergraph entropies as subsystem entropies. Finally, we conclude in section \ref{sec:future} with some conjectures about the equivalences of the hypergraph cone to both the stabilizer cone and the cone of balanced linear rank inequalities, as well as a discussion of potential future directions.

\section{Review of Entropy Inequalities}\label{sec:review}
In this section, we review the tools and methods used to derive entropy inequalities that apply to holographic states with a classical bulk dual. Subsequently, we briefly review several classes of entanglement entropy inequalities that are true for known subclasses of quantum states, with an eye towards assessing their validity for hypergraph entropies. Those familiar with the holographic entropy cone, stabilizer states, and linear rank inequalities may skim this section for terminology or jump to topics which are less familiar.

\subsection{The Holographic Entropy Cone}
Holographic states are a special subclass of quantum states whose bulk duals are classical geometries. For such states, the entanglement entropy $S(A)$ of a boundary subregion $A$ is given by the RT formula
\begin{equation}\label{eq:rtformula}
    S(A)=\frac{\area\mathcal{A}}{4G},
\end{equation}
where $\mathcal{A}$ is the minimal-area bulk surface homologous to $A$ and anchored to its boundary, i.e. $\partial \mathcal{A} = \partial A$. The success of the holographic entropy cone methods lies in the ability to convert this holographic principle to simple combinatorial and graph constructions. 

Holographic states admit simple descriptions of their entropies in terms of graphs, where entropies of subsystems are given by minimal cuts that separate their representative vertices from those of the complement. Notably, not all quantum states admit such a description. In this discrete language, we can then import the machinery of graph theory to prove entropy inequalities based on the minimality of cuts. Later, we ask which entropic constraints remain valid if one promotes graphs to hypergraphs, and find that we are able to probe non-holographic entanglement structures while preserving consistency with universal quantum inequalities.

In this section, we review the basic tools used for constructing the holographic entropy cone as applied in \cite{bao2015holographic,hernandezcuenca2019}, and which will be useful in our subsequent analysis of the hypergraph entropy cone. Note that some of the following definitions and results concern standard graphs only; thus, when discussing hypergraphs in section \ref{sec:hypergraphcone}, only suitable generalizations of these will continue to hold. 

\subsubsection{Graph Models}
All graphs considered in this paper are \textit{undirected}. An \textit{undirected graph} is a pair $(V,E)$ consisting of a set of vertices $V$ and a set of edges $E$ which connect pairs of vertices. Formally, edges are cardinality-$2$ subsets of $V$, and thus the edge set $E$ is easily seen to be a subset $E\subseteq\powerset(V)$, where $\powerset(V)$ denotes the power set of $V$. The \textit{degree} of a vertex $v \in V$ is the number edges in the graph that contain $v$. To each edge, we can associate a numerical \textit{weight} via a function $w:E \to \mathbb{R}_{\geq0}$. The weight $\abs{F}$ of a subset of edges $F \subseteq E$ is defined to be the sum of the individual weights: $\abs{F} = \sum_{f \in F}w(f)$. 

A \textit{cut} is a bipartition of the vertex set $V$ into a set $W \subseteq V$ and its complement $W^\complement\subseteq V$, and can thus be uniquely specified by $W$ alone. Pictorially, one thinks of a cut as splitting the graph by means of \textit{cutting} all edges bridging between $W$ and its complement. With this intuition, one defines the set of cut edges of $W$ to be
\begin{equation}\label{eq:cutedges}
    C(W) = \{(v, v') \in E \,:\, v \in W ,\, v' \in W^\complement\}.
\end{equation}
The \textit{cut weight} $\abs{C(W)}$ is defined as the sum of the weights of the cut edges. We say that a subset of vertices $W \subseteq V$ is \textit{colored} by a set $X$ given a map $b: W \to X$, and refer to $b$ as a \textit{coloring}. A standard choice of coloring will be the set $[n+1]\coloneqq \{1,\dots,n+1\}$. Because the motivation behind these graph models is to encode the entanglement structure of quantum states, we introduce the following nomenclature and definition of a \textit{discrete entropy} via the min-cut prescription (see Figure \ref{fig:graph_entropy}):
\begin{definition}
Let $(V,E)$ be an undirected graph with non-negative edge weights. One defines a subset of the vertex set $\partial V \subseteq V$ to be \textbf{boundary (external) vertices}, and assigns to them a coloring via a map $b:\partial V \to [n+1]$, where $n\in\mathbb{Z}^+$ is interpreted as the number of subsystems of interest in a pure quantum state on $n+1$ parties. The remaining vertices are called \textbf{bulk (internal) vertices}. Given a subset $I \subseteq [n]$, the \textbf{discrete entropy} of $I$ is defined by the cut of minimum total weight, or \textbf{min-cut}, that contains precisely those boundary vertices colored by $I$, i.e.
\begin{equation}\label{eq:discreteentropy}
    S(I) \coloneqq  \min \, \left\{\abs{C(W)} \,:\, W\cap\partial V = b^{-1}(I) \right\}.
\end{equation}
\end{definition}

\begin{figure}
    \centering
    \includegraphics[width=.7\textwidth]{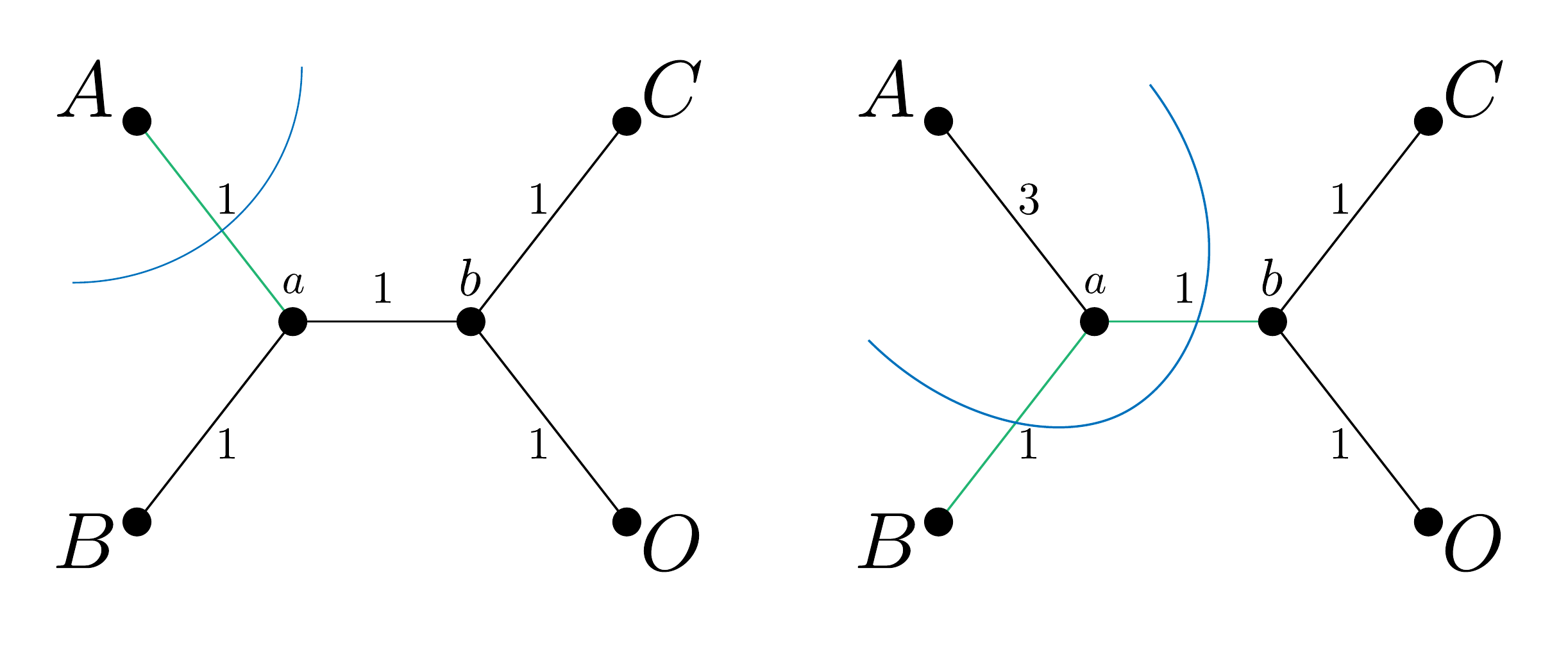}
    \caption{A simple graph with boundary vertices $A,B,C,O$ and $2$ bulk vertices $a,b$. The entropy of subsystem $A$ is given by the minimal cut that separates $A$ from $B,C,O$. On the left, all edge weights are equal and the min-cut is simply given by the partition of the vertex set into $\{A\}$ and its complement. The cut is graphically represented by the blue arc, and equivalently specified by the cut edges, colored green. With different edge weights as in the right figure, the min-cut becomes $\{A,a\}$.}
    \label{fig:graph_entropy}
\end{figure}

We will often use subsets $I \subseteq [n]$ to refer to their unique pre-image $b^{-1}(I)$ in $\partial V$. In reference to vertices, we will use the terminology of \textit{boundary} and \textit{external} interchangeably, and similarly for \textit{bulk} and \textit{internal}. The quantum mechanical interpretation of the graph is accomplished by identifying the $n+1$ boundary vertices as the avatars of the subsystems of a pure quantum state on $n+1$ parties. To account for general mixed states, we take $n$ boundary vertices as the subsystems of interest and relegate the remaining one, usually labeled by $O$, to stand for the purification of the state. Bulk vertices do not correspond to physical subsystems; instead, they encode the entanglement structure of the quantum state. As was shown in \cite{bao2015holographic}, any holographic geometry can be turned into a graph whose discrete entropies match the boundary von Neumann entropies computed via the RT prescription and, conversely, graphs of the form described above can be realized geometrically with matching entropy vectors. Therefore, entanglement entropies of holographic states may be equivalently studied by considering graph models with the appropriate discrete entropies. 

A general linear entropy inequality may be written
\begin{equation}\label{eq:ineq}
    \sum_{l=1}^{L} \alpha_l S(I_l) \geq \sum_{r=1}^{R} \beta_r S(J_r),
\end{equation}
where $L$, $R$ denote the number of terms on the LHS and RHS respectively, $I_l$, $J_r$ are the subsystems appearing in the inequality with non-zero coefficient, and $\alpha_l$, $\beta_r$ are positive coefficients. For concreteness, one might have in mind e.g. strong subadditivity (SSA),
\begin{equation}\label{eq:ssa}
    S(AB) + S(BC) \geq S(ABC) + S(B) 
\end{equation}
for which $L = R = 2$ and non-trivial coefficients have $\alpha_l = \beta_r = 1$. To study this inequality one may consider graphs with no more than $n+1 = 4$ boundary vertices $\{A,B,C,O\}$. To set the stage for future constructs, we define $n$ \textit{occurrence} vectors for the LHS and RHS of an inequality as
\begin{align}
    x^{(i)}_l & \coloneqq \mathds{1}[i \in I_l], \\
    y^{(i)}_r & \coloneqq  \mathds{1}[i \in J_r],
\end{align}
where $i\in[n+1]$ refers to each of the boundary vertices and $\mathds{1}[i \in I_l]$ is the indicator function, equaling one when $i\in I_l$ and zero otherwise. Each of the $n+1$ vectors $x^{(i)}$ ($y^{(i)}$) is a bit string of length $L$ ($R$). For example, in the case of SSA, one has the following bit strings as the occurrence vector for the LHS:
\begin{align}
    x^A &= (1,0), \\
    x^B &= (1,1), \\
    x^C &= (0,1), \\
    x^O &= (0,0).
\end{align}

\subsubsection{The Contraction Map Method}\label{sssec:cmmethod}
We now review the method of \textit{proof by contraction} developed in \cite{bao2015holographic} as a tool to prove inequalities obeyed by discrete entropies on graphs. The derivation of this result is also outlined as it will be relevant to our hypergraph generalization. We first introduce the notion of a \textit{weighted Hamming norm} $\norm{\cdot}_{\alpha}$ on the space of $m$-dimensional bit strings $x \in \{0,1\}^m$. For non-negative weights $\alpha_l$ collected into a vector $\alpha \in \mathbb{R}^m_{\geq 0}$, this is defined as
\begin{equation}\label{eq:hamming}
    \norm{x}_{\alpha} = \sum_{l = 1}^m\alpha_l x_l.
\end{equation}
The contraction map method can then be stated as follows:

\begin{theorem}\label{thm:contraction}
    Let $f:\{0,1\}^L\to\{0,1\}^R$ be a $\norm{\cdot}_{\alpha}$-$\norm{\cdot}_{\beta}$ contraction, i.e.
    \begin{equation}
        \norm{x - x'}_{\alpha} \geq \norm{f(x) - f(x')}_{\beta} \qquad \forall x, x'\in\{0,1\}^L.
    \end{equation}
    If $f\left(x^{(i)}\right) = y^{(i)}$ for all $i=1,\ldots,n+1$, then \eqref{eq:ineq} is a valid entropy inequality on graphs.
\end{theorem}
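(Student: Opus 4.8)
The plan is to fix an arbitrary undirected weighted graph with boundary coloring $b:\partial V\to[n+1]$ and show that its discrete entropies satisfy \eqref{eq:ineq}; since holographic entropy vectors are precisely those realizable by such graphs, this suffices. First I would realize the left-hand side concretely: for each $l$, let $W_l$ be a min-cut achieving $S(I_l)$, so that $S(I_l)=\abs{C(W_l)}$ and $W_l\cap\partial V = b^{-1}(I_l)$. I would then assign to every vertex $v\in V$ (bulk or boundary) the bit string $z(v)\in\{0,1\}^L$ whose $l$-th entry records whether $v\in W_l$. By construction, a boundary vertex colored by $i$ lies in $W_l$ exactly when $i\in I_l$, so $z(v)=x^{(i)}$ for every such vertex.

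The second step is to push these labels through the contraction $f$ and read off candidate right-hand-side cuts. Define $\tilde W_r \coloneqq \{v\in V : f(z(v))_r = 1\}$ for each $r$. On boundary vertices the hypothesis $f(x^{(i)})=y^{(i)}$ forces $f(z(v))_r = \mathds{1}[i\in J_r]$, so $\tilde W_r\cap\partial V = b^{-1}(J_r)$; that is, each $\tilde W_r$ is an admissible cut for $S(J_r)$. By the min-cut definition \eqref{eq:discreteentropy}, this immediately gives $S(J_r)\leq \abs{C(\tilde W_r)}$, converting the sum over minimized right-hand-side entropies into a sum over the weights of these explicitly constructed cuts. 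The images of bulk vertices under $f$ are unconstrained, but this causes no trouble: only the boundary intersection matters for admissibility.

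The crux is an edgewise comparison of the two cut-weight sums. For a single edge $e=\{u,v\}$ of weight $w(e)$, its contribution to $\sum_l\alpha_l\abs{C(W_l)}$ is $w(e)\sum_l \alpha_l\abs{z(u)_l - z(v)_l} = w(e)\norm{z(u)-z(v)}_\alpha$, since $e$ is cut by $W_l$ precisely when $u$ and $v$ disagree in the $l$-th bit. Likewise its contribution to $\sum_r\beta_r\abs{C(\tilde W_r)}$ equals $w(e)\norm{f(z(u))-f(z(v))}_\beta$. The contraction property applied to the pair $z(u),z(v)$ gives $\norm{z(u)-z(v)}_\alpha \geq \norm{f(z(u))-f(z(v))}_\beta$ for every edge. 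Summing over all edges and using non-negativity of the weights then yields $\sum_l\alpha_l\abs{C(W_l)} \geq \sum_r\beta_r\abs{C(\tilde W_r)}$.

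Chaining these pieces finishes it: $\sum_l\alpha_l S(I_l) = \sum_l\alpha_l\abs{C(W_l)} \geq \sum_r\beta_r\abs{C(\tilde W_r)} \geq \sum_r\beta_r S(J_r)$. I expect the main obstacle to be conceptual rather than computational: recognizing that one fixed family of left-hand-side min-cuts simultaneously induces admissible candidates for all right-hand-side terms through the single map $f$, and that the contraction condition is exactly the hypothesis needed to make the per-edge weights line up in the correct direction. The matching condition $f(x^{(i)})=y^{(i)}$ guarantees admissibility of the constructed cuts, while the contraction condition controls their weights; both ingredients are essential, and the proof amounts to the assertion that these two conditions together suffice.
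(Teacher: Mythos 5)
Your proof is correct and is essentially the paper's own argument: your vertex labeling $z(v)$ is just the indicator of which cell $W(x)=\bigcap_l W_l^{x_l}$ of the paper's partition the vertex $v$ lies in, your cuts $\tilde W_r$ coincide with the paper's $U_r$, and your per-edge comparison is the same rewriting of cut weights via the weighted Hamming norm followed by the contraction property and minimality. The only difference is bookkeeping (summing over edges with vertex labels rather than over pairs of bit strings weighted by $\abs{E(x,x')}$), which changes nothing of substance.
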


Suppose we have computed the minimal cuts for each of the subsystems $I_l$. The proof of Theorem \eqref{thm:contraction} boils down to cutting and pasting intersections of these minimal cuts and subsequently arguing about minimality with respect to the new configuration. Bit strings are a bookkeeping tool to encode all possible such intersections of min-cuts. Note that the domain of the contraction map includes all bit strings. This is because an exhaustive contraction map $f$ must be true for all combinatorial possibilities coming from inclusion/exclusion of vertices as specified by bit strings. When searching for $f$, the only explicit constraint is the image of the $n+1$ occurrence vectors. The rest are left implicit by recursively considering the image of nearby bit strings.

\begin{proof}
Let $W_l$ be the minimal cut that gives the discrete entropy of the subsystem $I_l$, i.e. $S(I_l) = \abs{C(W_l)}$. The cut associated with a bit string $x$ is defined by $W(x) = \bigcap_l W_l^{x_l}$, where the notion of inclusion/exclusion arises from declaring that $W_l^1 \coloneqq W_l$ and $W_l^0 \coloneqq W_l^\complement$. In other words, $W(x)$ is the intersection of minimal cuts $W_l$ for subsystems included by $x$ and their complements $W^\complement_l$ when excluded. We can piece together the minimal cut for the $I_l$ subsystem on the LHS by the following expression for bit strings $x\in\{0,1\}^L$,
\begin{equation}
    W_l = \bigcup_{x : x_l = 1} W\left(x\right).
\end{equation}

We now associate a cut $U_r$ to each subsystem $J_r$ that appears on the RHS of \eqref{eq:ineq} by using the contraction map $f$ to pick which cuts $W(x)$ to include in the definition of $U_r$:
\begin{equation}
    U_r = \bigcup_{x : f(x)_r = 1} W\left(x\right). 
\end{equation}
By construction, $f$ maps occurrence vectors $f\left(x^{(i)}\right) = y^{(i)}$, and thus $U_r$ provides a cut for $J_r$. The crux of this is the inclusion of specific $W(x^{(i)})$ cuts in the union above. For a given occurrence vector, the cut $W(x^{(i)})$ is essentially the minimal set of vertices that includes only the $i^\text{th}$ boundary vertex. For $U_r$ to be a valid cut for subsystem $J_r$, it must contain all boundary vertices that make up $J_r$, and only those. The condition that $f(x^{(i)})_r = 1$ on the occurrence vectors when building up $U_r$ means that we are taking a union of minimal cuts for these parties, thereby ensuring that every boundary vertex belonging to $J_r$ is in $U_r$.

To relate these cuts to the entropy inequalities, we expand the sum of entropies in terms of these cuts. It will be useful to introduce the notation $E(x,x') \subseteq E$ to denote the subset of the original edge set that connect vertices in $W(x)$ and $W(x')$. The edges in $E(x, x')$ cross $W_l$ if and only if $x$ and $x'$ differ in the $l^{\text{th}}$ bit. This allows us to decompose $C(W_l) = \bigcup_{x, x' : x_l \neq x'_l} E(x, x')$, implying the key relation
\begin{equation}\label{eq:edgecut}
    \abs{C(W_l)} = \sum_{x,x': x_l \neq x'_l } \abs{E(x, x')}.
\end{equation}
This can be used to rewrite the LHS of \eqref{eq:ineq} as
\begin{align}
    \sum_{l=1}^{L} \alpha_l {S}(I_l) &= \sum_{l=1}^{L} \alpha_l \, \abs{C(W_l)} \\
    &=  \sum_{l=1}^L \alpha_l \sum_{x,x': x_l \neq x'_l } \abs{E(x, x')} \\
    &=  \sum_{l=1}^L \alpha_l \sum_{x,x'} \abs{x_l - x'_l} \, \abs{E(x, x')} \\
    &=  \sum_{x,x'} \abs{E(x, x')} \, \sum_{l=1}^L \alpha_l \abs{x_l - x'_l} \\
    &=  \sum_{x,x'} \abs{E(x, x')} \, \norm{x - x'}_\alpha .
\end{align}
Similarly, for the $U_r$ cuts one has
\begin{equation}
    \sum_{r=1}^{R} \beta_r \abs{C(U_r)} = \sum_{x,x'} \abs{E(x, x')} \, \norm{f(x) - f(x')}_\beta .
\end{equation}
The contraction property of $f$ then implies that $\sum_{r=1}^R \beta_r \abs{C(U_r)} \leq \sum_{l=1}^L \alpha_l \abs{C(W_l)}$.

Finally, the $U_r$ were cuts for each of the subsystems $J_r$ on the RHS of \eqref{eq:ineq}, but the weighted sum of their weights is lower bounded by the discrete entropies $S(J_r)$ by minimality. Putting it all together, we conclude that

\begin{equation}
    \sum_{l=1}^{L} \alpha_l {S}(I_l) = \sum_{l=1}^{L} \alpha_l \abs{C(W_l)} \geq \sum_{r=1}^{R} \beta_r \abs{C(U_r)} \geq \sum_{r=1}^R \beta_r {S}(J_r),
\end{equation}
and hence \eqref{eq:ineq} is a valid entropy inequality.
\end{proof}

Note that we do not currently know if the converse of Theorem \ref{thm:contraction} is true; that is, an entropy inequality being valid may not imply the existence of a contraction map. We will say that an inequality \textit{contracts} on a graph if there exists a valid contraction map that proves it.

\subsubsection{Extreme Rays}\label{ssecextr}

Let us briefly review the geometric structure of the holographic entropy cone. An \textit{entropy vector} for a state on $n$ parties is a vector in $\mathbb{R}^{2^n-1}$ \textit{entropy space} given by calculating the von Neumann entropy of each subsystem associated to a non-empty subset of $[n]$. The set of all such entropy vectors for holographic states forms the holographic entropy cone. This cone is \textit{convex}, i.e. it is closed under conical combinations, which are linear combinations with non-negative coefficients. The holographic entropy cone is also \textit{polyhedral}, which means that there exists a finite set of vectors such that any other vector in it can be obtained as a conical combination of the former. The minimal set of such vectors defines the \textit{extreme rays} of the cone and provide a complete characterization of it. A dual description of such a polyhedral cone is attained by specifying its \textit{facets}, which are the support hyperplanes that tightly bound it. Each such facet specifies a half-space and may thus be written as a linear inequality. Every facet is a codimension-$1$ span of a set of extreme rays, and similarly every extreme ray is a $1$-dimensional intersection of facets\footnote{Note that the converse is not true in either case. Not every codimension-$1$ span of extreme rays is a facet, nor is every $1$-dimensional intersection of facets an extreme ray.}. 

In order to obtain a complete characterization of an entropy cone for a given class of quantum states, it is necessary to show that every point in that entropy cone can be realized in that class. For holographic states, we are aided by the bulk/boundary duality: one does not need to scan over all possible CFT states with geometric duals, but rather simply consider all possible bulk geometries on which entropies may be computed using the RT prescription\footnote{We remain agnostic as to whether these geometries are actual dominant saddles of natural gravity path integrals. See \cite{Marolf_2017} for more details on this issue.}. An algorithmic method to translate the geometric data necessary to calculate entanglement entropies via RT into a graph form and vice versa was first described in \cite{bao2015holographic}. Thus the question of whether a given entropy vector is realizable holographically translates into a combinatorial problem of finding a graph reproducing those entropies consistently via the min-cut prescription. We note that this min-cut is a source-sink version of the problem, where we are separating boundary vertices $b^{-1}(I)$ from their complement in $\partial V$.
\begin{figure}
    \centering
    \includegraphics[width=.9\textwidth]{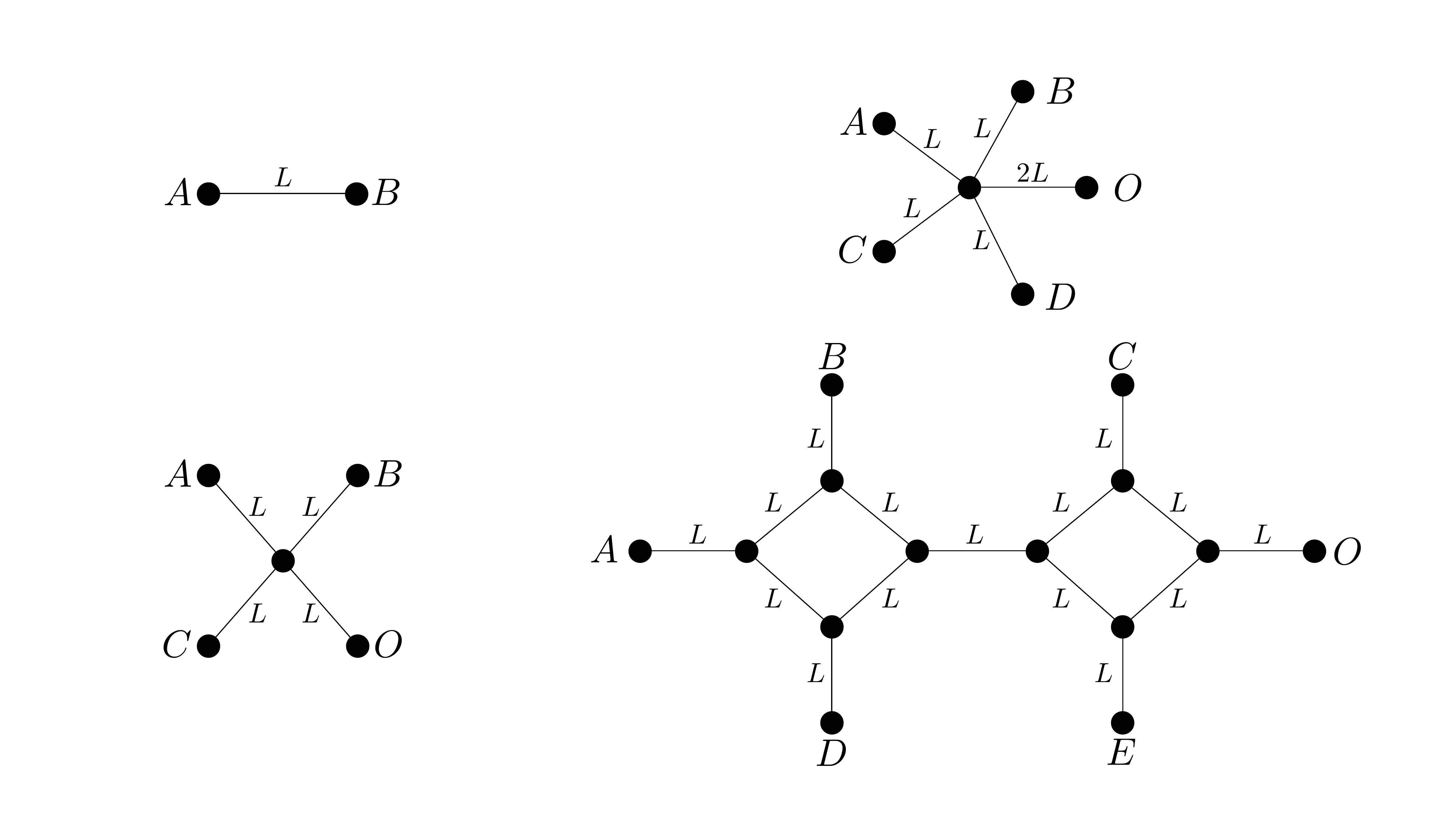}
    \caption{Graphs corresponding to extreme rays of the holographic entropy cone for various numbers of parties. The entropies of these graphs may be realized geometrically via multiboundary wormholes, where each edge corresponds to a wormhole throat whose radius is given by the edge weight. The bottom-right figure features a graph whose topology is non-trivial. Adapted from \cite{bao2015holographic}.}
    \label{fig:holographic_extreme_rays}
\end{figure}

In applying this technique, it was shown that all extreme rays of the $3$ and $4$ party cones bounded by subadditivity (SA), SSA and MMI can be realized by graphs and thus by holographic states. For $5$ parties, five additional inequalities were found \cite{bao2015holographic} and proven to be a complete description of the holographic cone by explicit construction of graph realizations of the corresponding extreme rays \cite{hernandezcuenca2019}. Some examples of graphs corresponding to extreme rays for various numbers of parties are shown in Figure \ref{fig:holographic_extreme_rays}.

\subsection{The Ingleton Inequality and Stabilizer States}
The Ingleton inequality is a $4$-party inequality of the form \cite{ingleton1971representation}
\begin{equation}
    I(A:B|C)+I(A:B|D)+I(C:D)-I(A:B)\geq 0,
\end{equation}
where $I(A:B|C) = S({AC}) + S({BC}) - S({ABC}) - S(C)$ is the conditional mutual information of $A$ and $B$ conditioned on $C$. Importantly, up to the usual permutation and purification symmetries of the von Neumann entropy, the Ingleton inequality together with SA and SSA constitute the set of all facets that bound the $4$-party entropy cone of \textit{stabilizer states} \cite{linden2013quantum}. Briefly, stabilizer states are states that can be created from the all-zeroes state with access only to phase, CNOT, and Hadamard gates, in addition to measurements. These states are ubiquitous in the field of quantum error correction \cite{gottesman1997stabilizer}. The name is derived from the fact that any such state can be equivalently described as the unique common vector stabilized (i.e. eigenvector of unit eigenvalue) by some given set of operators built out of tensor products of Pauli matrices. The strength of such a description is that specifying the operators is often much simpler than specifying the state itself. As an example, consider the usual single-qubit Pauli operators. Then the $3$-party GHZ state,
\begin{equation}
    \ket{\text{GHZ}_3} = \frac{1}{\sqrt{2}}(\ket{000} + \ket{111}),
\end{equation}
is the unique state vector stabilized by the following set of operators acting on the Hilbert space of $3$ qubits:
\begin{equation}
    I_1\otimes Z_2\otimes Z_3, \quad Z_1 \otimes Z_2 \otimes I_3, \quad X_1 \otimes X_2\otimes X_3.
\end{equation}
We refer the interested reader to \cite{Nielsen:2011:QCQ:1972505} for a more detailed account of this class of states.

For holographic states, the Ingleton inequality is implied by MMI \eqref{eq:mmi}, thus giving strict containment of the $4$-party holographic entropy cone within the $4$-party stabilizer entropy cone. These facts together make satisfaction or violation of the Ingleton inequality for subclasses of quantum states an important distinguishing feature: if a state satisfies the Ingleton inequality, it must lie within the $4$-party stabilizer cone, and if it violates the Ingleton inequality, it must lie outside the stabilizer cone for any party number\footnote{Note that satisfying the Ingleton inequality is a sufficient condition for a quantum entropy vector to belong to the stabilizer entropy cone only for $4$ parties. Additional inequalities need to be obeyed for more parties.}.

\subsection{Linear Rank Inequalities}\label{ssec:lri}
The ranks of linear subspaces of vector spaces obey certain relations known as linear rank inequalities. When applied to ranks, the Ingleton inequality is the simplest non-trivial example of these, and the only one for $4$ parties (cf. $4$ vector subspaces) apart from the classic Shannon inequalities \cite{hammer2000inequalities}. All linear rank inequalities for $5$ parties were found in \cite{Dougherty2009} via a method based on properties of random variables with common information (in this case, vector subspaces with non-empty intersection) and partial progress has been made on more parties \cite{dougherty_2014}. For a general number of parties, this set of inequalities naturally motivates a description of a convex cone which will be called the classical linear rank (CLR) cone.

Our interest in linear rank inequalities comes from their relation to stabilizer states. In particular, stabilizer states have been shown to obey all balanced linear rank inequalities obtained from common information \cite{matus2007infinitely,linden2013quantum}. For up to $5$ parties, this includes all linear rank inequalities except for classical monotonicity. Quantum mechanically, we will thus be interested in having SA and replacing monotonicity with weak monotonicity in the facet description of a cone analogous to the CLR cone. By performing this replacement and further completing all facet orbits according to the symmetries of the von Neumann entropy, we define the quantum linear rank (QLR) cone.

For up to and including $4$ parties, the QLR and stabilizer cones are equivalent, but the extension of this equivalence to higher party number is not known due to the difficulty of defining the stabilizer entropy cone. As explained above, it is known that the stabilizer cone must be contained in the QLR cone for $5$ parties, but whether additional inequalities are obeyed by stabilizer entropies remains an open question.

\section{The Hypergraph Entropy Cone}\label{sec:hypergraphcone}

In this section, we first introduce the basics of hypergraphs and then proceed to extend the contraction map method to hypergraphs. This allows us to show that hypergraph entropy vectors lie inside a convex, polyhedral cone that we coin the hypergraph entropy cone. This cone turns out to be bounded by inequalities that are familiar from the independently well-studied contexts of stabilizer states and linear rank inequalities. A natural question that arises is whether these valid inequalities are tight, i.e. whether they are facets of the hypergraph entropy cone. We are able to prove tightness for small numbers of parties by explicitly constructing hypergraph realizations of the extreme rays that correspond to the dual polyhedral description of the cones defined by those valid inequalities. We believe this remarkable agreement is not accidental and investigate the connection between hypergraphs and these other constructs of linear ranks and stabilizers both here and in the next section.

\subsection{Definitions}
A hypergraph $(V,E)$ is a generalization of a standard graph in which edges are promoted to hyperedges. A hyperedge is a subset of $k\geq 2$ vertices and therefore the edge set $E$ is now a more general subset $E\subseteq\powerset(V)$ with no cardinality restriction on its elements other than containing at least two vertices\footnote{Single-vertex edges do not contribute to cut weights and are thus irrelevant in the current context.}. We will refer to a hyperedge of cardinality $k$ as a $k$-edge. Let us define the \textit{rank} of a hypergraph as the largest cardinality of hyperedges of non-zero weight in the hypergraph. A hypergraph of rank $k$ will be called a $k$-graph. We will assume that all hypergraphs in the following discussion are of finite rank. In this language, standard graphs and edges are $2$-graphs and $2$-edges, respectively. A hypergraph in which all hyperedges are of the same cardinality $k$ is said to be $k$-uniform. 

Since hyperedges connect more than two vertices, we must revisit our rule for defining when these are cut. For a general hyperedge $e \in E$ and a given cut $W$, we include $e$ in $C(W)$ if any two of its vertices are on opposite sides of the cut. This is captured by the following definition (cf. analogous equation \eqref{eq:cutedges} for $2$-graphs)
\begin{equation}\label{eq:cuthyperedges}
    C(W) = \{ e \in E \,:\, e \cap W \neq \emptyset,\, e\cap W^\complement\neq \emptyset\}.
\end{equation}
To put it another way, $e$ does not contribute if and only if all of its vertices belong to either $W$ or $W^\complement$ alone. The discrete entropy will continue to be given by the total weight of the minimal cut. Just as in the $2$-graph case, a hypergraph with $n+1$ boundary vertices will be thought of as corresponding to an $n$-party quantum state\footnote{Our usage of the word ``state" suggests that hypergraphs can indeed represent physical quantum states -- we take this as an assumption for now, and provide some motivation for it in later sections. Evidence supporting that this prescription is still meaningful quantum mechanically will be $2$-fold: for small party numbers, $1)$ we show that the resulting entropy vectors lie strictly inside the quantum entropy cone (see section \ref{ssec:hypergraphcone}) and $2)$ we are able to propose a plausible prescription to construct a quantum state given a hypergraph such that their entropies match exactly (see section \ref{sec:hyperstates}).}. The simplest hypergraph of physical interest is one that describes the entropy of the $3$-party GHZ state, which consists of a single $3$-edge encompassing all three vertices, as in Figure \ref{fig:ghz_hypergraph}. Technically, the entropy vector of $\ket{\text{GHZ}_3}$ is realizable by a $2$-graph without hyperedges, so one may consider $\ket{\text{GHZ}_4}$ the simplest state that requires hypergraphs. Note that the typical graphical notation of lines connecting two vertices is not suitable for hyperedges. Here, we will visualize hyperedges as subsets with different colors.

\begin{figure}
    \centering
    \includegraphics[width=.45\textwidth]{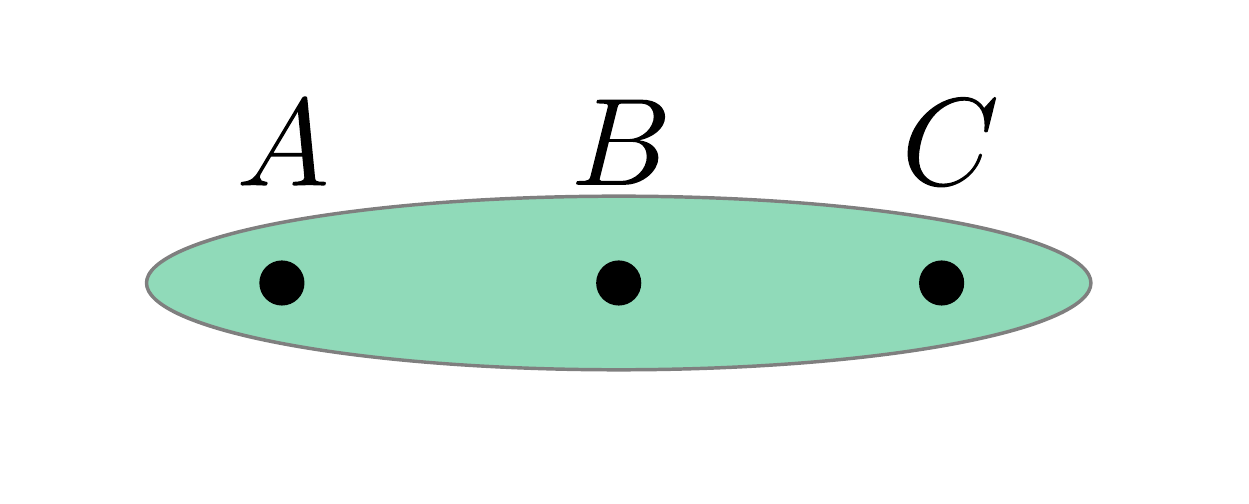}
    \includegraphics[width=.45\textwidth]{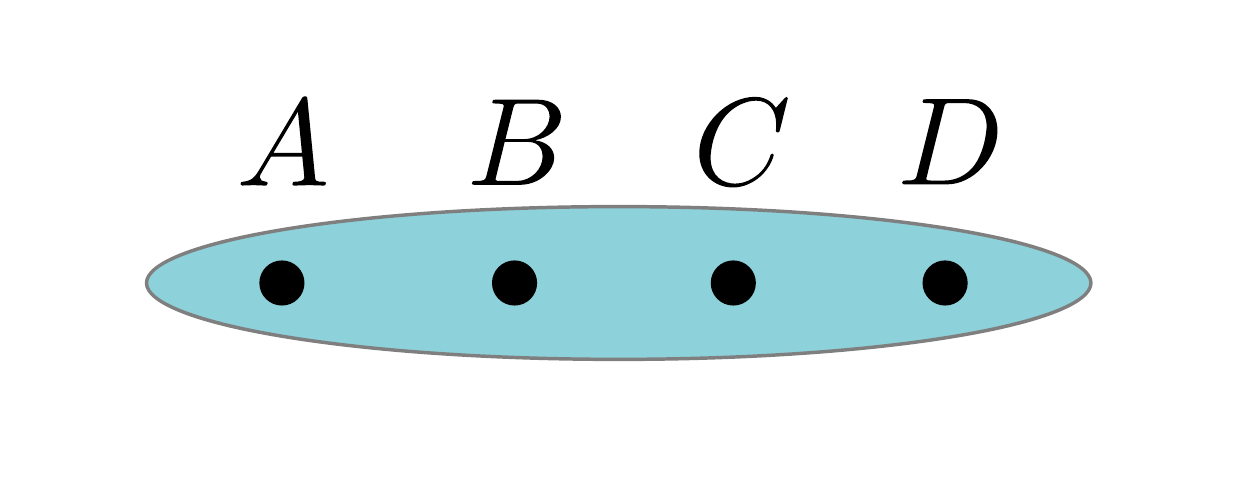}
    \caption{On the left, the entropies of the state $\ket{\text{GHZ}_3}$ are given by a hypergraph that has $3$ vertices and a single $3$-edge shared among them. Since entropies are computed by minimal cuts, when considering any proper, non-empty subset of vertices, the $3$-edge will contribute. Hence, notably, all such subsystems have the same entropy. Similar properties hold for the state $\ket{\text{GHZ}_4}$ on the right. The entropy vector of $\ket{\text{GHZ}_4}$ is not realizable with $2$-graphs. Indeed, one can easily verify that the $3$-party GHZ state, corresponding to a $3$-edge, satisfies MMI, while $4$-party and higher GHZ states violate it.}
    \label{fig:ghz_hypergraph}
\end{figure}

\subsection{Contraction Map Generalization}\label{ssec:cmgen}
The proof-by-contraction method can be generalized by noting that it is no longer true that the cut weight is given by a sum over edge weights $\abs{E(x,x')}$ as in equation \eqref{eq:edgecut}. We now need to include the contribution from higher $k$-edges, which will show up in \eqref{eq:cuthyperedges}. To do so, let us first define an indicator function $i^k$ on $k\geq2$ binary bits $b_1,\dots,b_k$ of the form

\begin{equation}
    i^k(b_1,\ldots,b_k) = \begin{cases}
    0 & \text{if} \quad b_1 = \dots = b_k, \\
    1 & \text{otherwise.}
    \end{cases}
\end{equation}
This can now be used to generalized the weighted Hamming norm introduced in \eqref{eq:hamming} from $2$-edges to $k$-edges. Consider $k$-many $m$-dimensional bit strings $x^1,\dots,x^k$, such that each bit string $x^i\in\{0,1\}^m$ consists of $m$ binary bits $x^i_1,\dots,x^i_m$ (note the use of upper and lower indices). Then the weighted indicator function is defined as
\begin{equation}
    i_{\alpha}^{k}(x^1,\ldots,x^k) = \sum_{l=1}^m\alpha_l i^k(x^1_l,\ldots,x^k_l).
\end{equation}
The RHS is basically a sum over indicator functions, each of them acting on bits across all bit strings. More explicitly, stacking the $k$ bit strings of length $m$ as the rows of a $k\times m$ matrix, one may think of the $l^{\text{th}}$ term in the sum as corresponding to the evaluation of the indicator function on the $k$ bits in the $l^{\text{th}}$ column. It is easy to see that $i^2_{\alpha}(x,x') = \norm{x-x'}_{\alpha}$, thus generalizing the weighted Hamming norm in \eqref{eq:hamming} in a consistent way.

Define now $E(x^1,\ldots,x^k)$ as the set of all $k$-edges across all cuts in $W(x^1),\ldots,W(x^k)$\footnote{This means that each $k$-edge belongs to every one of the $W(x^i)$ cuts in the sense of \eqref{eq:cuthyperedges}, such that its vertices non-trivially intersect all of the $W(x^i)$.}, where each of the latter are defined via inclusion/exclusion as in the proof of \eqref{thm:contraction}. For a cut $W_l$ of some subsystem $I_l$, the contribution to $\abs{C(W_l)}$ coming from $k$-edges is 
\begin{equation}
    \abs{C_k(W_l)} = \sum_{(x^1,\ldots,x^k):x^i_l\neq x^j_l} \abs{E(x^1,\ldots,x^k)},
\end{equation}
where the sum is over the set of all $k$-tuples of bit strings such that any two $x^i,x^j$ of them differ in their $l^{\text{th}}$ bit (which is the necessary and sufficient condition for a $k$-edge to be cut by $W_l$). The total cut weight is then just the sum of the contributions from $k$-edges of every possible cardinality $k$:
\begin{equation}\label{eq:decomp}
    \abs{C(W_l)} = \sum_k \abs{C_k(W_l)} = \sum_{(x^1,x^2):x^i_l\neq x^j_l} \abs{E(x^1,x^2)} + \sum_{(x^1,x^2,x^3):x^i_l\neq x^j_l} \abs{E(x^1,x^2,x^3)} + \dots.
\end{equation}
As suggested by the decomposition above, we can break up the hard problem of studying a candidate inequality on $k$-graphs into $k-1$ simpler problems. Namely, one can just check validity on all $t$-uniform graphs with $2\leq t\leq k$, so that only hyperedges of cardinality $t$ appear in each case. That this is a necessary condition is clear: an inequality can only possibly hold on $k$-graphs if it is valid on all $t$-uniform hypergraphs of smaller or equal rank since the latter are just a subclass of the former\footnote{In passing we note that this agrees with the expectation that the inequalities obeyed by hypergraphs should weaken as higher-cardinality edges become available. For example, an inequality that is valid for $3$-graphs is automatically valid for any standard $2$-graph, because the latter is just an example of the former where all $3$-edges are trivial. The converse is not true: an inequality that is valid for $2$-graphs will not necessarily hold on $3$-graphs.}. That it is sufficient follows from the decomposition into fixed-rank contributions exhibited in \eqref{eq:decomp}: if the contribution to $\abs{C(W_l)}$ from each possible rank $t$ respects a given inequality, then by linearity of \eqref{eq:decomp} the sum of all contributions for $2\leq t\leq k$ will obey that inequality too.

It follows from these observations that the problem of proving inequalities for hypergraphs should reduce to that of proving inequalities for $k$-uniform graphs (see Corollary \ref{cor:kugkg} below for a formalization and proof of this statement). Thus consider a candidate entropy inequality of the form \eqref{eq:ineq} and an arbitrary $k$-uniform graph. Then the LHS can be written as
\begin{align}\label{eq:hyperproof}
    \sum_{l=1}^{L} \alpha_l S(I_l) &= \sum_{l=1}^{L} \alpha_l \, \abs{C_k(W_l)} \\
    &=  \sum_{l=1}^L \alpha_l \sum_{(x^1,\ldots,x^k):x^i_l\neq x^j_l} \abs{E(x^1,\ldots,x^k)} \\
    &=  \sum_{l=1}^L \alpha_l \sum_{(x^1,\ldots,x^k) } i^k(x^1_l,\ldots,x^k_l) \, \abs{E(x^1,\ldots,x^k)} \\
    &=  \sum_{(x^1,\ldots,x^k) } \abs{E(x^1,\ldots,x^k)} \, \sum_{l=1}^L \alpha_l  i^k(x^1_l,\ldots,x^k_l) \\
    &=  \sum_{(x^1,\ldots,x^k) } \abs{E(x^1,\ldots,x^k)} \, i_{\alpha}^k(x^1,\ldots,x^k),
\end{align}
and similarly for the RHS. The remainder of the proof of the contraction theorem for $2$-graphs then extrapolates identically to the case of $k$-uniform graphs. We therefore conclude that the generalization of Theorem \ref{thm:contraction} can be stated as follows:
\begin{theorem}\label{thm:gencontraction}
    Let $f:\{0,1\}^L\to\{0,1\}^R$ be an $i^k_{\alpha}$-$i^k_{\beta}$ contraction:
    \begin{equation}
        i^k_{\alpha}(x^1,\ldots,x^k) \geq i^k_{\beta}(f(x^1),\ldots,f(x^k)) \qquad \forall x^1,\ldots,x^k\in\{0,1\}^L.
    \end{equation}
    If $f\left(x^{(i)}\right) = y^{(i)}$ for all $i\in[n+1]$, then \eqref{eq:ineq} is a valid entropy inequality on $k$-uniform graphs.
\end{theorem}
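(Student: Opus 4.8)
The plan is to mirror the proof of Theorem \ref{thm:contraction} essentially verbatim at the level of cuts, changing only the bookkeeping of how edge weights enter. The constructions $W_l$ (the minimal cut realizing $S(I_l)$), the atoms $W(x)=\bigcap_l W_l^{x_l}$, and $U_r=\bigcup_{x:f(x)_r=1}W(x)$ are purely set-theoretic statements about the Boolean algebra generated by the $W_l$, and so they carry over unchanged: nothing in their definition references edge cardinality. In particular, the argument that $U_r$ is a legitimate cut for $J_r$ --- namely that $f(x^{(i)})=y^{(i)}$ forces every boundary vertex of $J_r$, and only those, into $U_r$ --- is identical to the $2$-graph case. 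So I would reuse these definitions wholesale and devote the work to the weight accounting.

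First I would fix a $k$-uniform graph and record the combinatorial fact already established in the lead-up to \eqref{eq:decomp}: each $k$-edge distributes its $k$ vertices among the atoms $W(x^1),\ldots,W(x^k)$, and such an edge is cut by $W_l$ precisely when the associated bits $x^1_l,\ldots,x^k_l$ are not all equal, i.e.\ when $i^k(x^1_l,\ldots,x^k_l)=1$. Grouping $k$-edges by the tuple of atoms they populate into the sets $E(x^1,\ldots,x^k)$ gives
\[\abs{C(W_l)}=\sum_{(x^1,\ldots,x^k)}i^k(x^1_l,\ldots,x^k_l)\,\abs{E(x^1,\ldots,x^k)},\]
and, after weighting by $\alpha_l$ and summing over $l$, exactly the closed form \eqref{eq:hyperproof} for the LHS. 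Since $i^k$ and $\abs{E(\cdots)}$ are symmetric under permutations of the $k$ slots, the ordering of the tuple is immaterial and no symmetrization factor intrudes.

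The crux is the analogous rewrite of the RHS, where the contraction hypothesis enters. Here I would verify that a $k$-edge lying in $E(x^1,\ldots,x^k)$ is cut by $U_r$ exactly when the bits $f(x^1)_r,\ldots,f(x^k)_r$ are not all equal: a vertex in atom $W(x^i)$ belongs to $U_r$ iff $f(x^i)_r=1$ by the definition of $U_r$, so the edge straddles $U_r$ iff $i^k(f(x^1)_r,\ldots,f(x^k)_r)=1$. Weighting by $\beta_r$ and summing over $r$ then yields
\[\sum_{r=1}^{R}\beta_r\abs{C(U_r)}=\sum_{(x^1,\ldots,x^k)}i^k_\beta(f(x^1),\ldots,f(x^k))\,\abs{E(x^1,\ldots,x^k)}.\]
This step --- lifting the cut condition through $f$ --- is the real content of the generalization and the natural place for an error to hide, since it replaces the simple ``$x_l\neq x'_l$'' dichotomy of the Hamming case with the many-valued not-all-equal condition counted by $i^k$.

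With both sides in hand the conclusion is immediate: every $\abs{E(x^1,\ldots,x^k)}\geq0$, and the contraction inequality $i^k_\alpha(x^1,\ldots,x^k)\geq i^k_\beta(f(x^1),\ldots,f(x^k))$ holds termwise, so $\sum_l\alpha_l\abs{C(W_l)}\geq\sum_r\beta_r\abs{C(U_r)}$. Minimality of the discrete entropy finally gives $\abs{C(U_r)}\geq S(J_r)$, and chaining these establishes \eqref{eq:ineq}. I do not anticipate a serious obstacle: the only genuinely new ingredient over Theorem \ref{thm:contraction} is the replacement of the weighted Hamming norm by the weighted indicator $i^k_\alpha$, and the one point demanding care is the pigeonhole-flavored bookkeeping of which tuples of atoms correspond to cut $k$-edges.
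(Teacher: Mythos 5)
Your proposal is correct and follows essentially the same route as the paper: the same atoms $W(x)$ and cuts $U_r$, the same grouping of $k$-edges into $E(x^1,\ldots,x^k)$ leading to \eqref{eq:hyperproof}, and then the termwise application of the $i^k_\alpha$-$i^k_\beta$ contraction followed by minimality. The RHS rewrite that you flag as the crux is precisely the step the paper compresses into ``and similarly for the RHS,'' and your justification of it (a vertex of the atom $W(x^i)$ lies in $U_r$ iff $f(x^i)_r=1$, so the edge is cut iff the bits $f(x^i)_r$ are not all equal) is exactly the intended argument.
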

Clearly, this immediately reduces to Theorem \ref{thm:contraction} for $k = 2$. An important and immediate corollary is the following:
\begin{corollary}\label{cor:kugkg}
    If \eqref{eq:ineq} contracts on $k$-uniform graphs, then it contracts on $k$-graphs.
\end{corollary}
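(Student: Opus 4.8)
The plan is to show that the \emph{same} contraction map which proves the inequality on $k$-uniform graphs also proves it on $k$-graphs, where edges of every cardinality $2 \le t \le k$ may appear simultaneously. The crux is a monotonicity property of the weighted indicator functions $i^t_{\alpha}$ in the uniformity parameter $t$: I would argue that any $i^k_{\alpha}$-$i^k_{\beta}$ contraction is automatically an $i^t_{\alpha}$-$i^t_{\beta}$ contraction for every $2 \le t \le k$, so that a single map satisfies all the contraction conditions implicit in a mixed $k$-graph.

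First I would establish this reduction by a padding argument. Given $t$ bit strings $x^1,\ldots,x^t$, form the $k$-tuple $(x^1,\ldots,x^t,x^t,\ldots,x^t)$ obtained by repeating the last string $k-t$ times. Since $i^k(b_1,\ldots,b_t,b_t,\ldots,b_t)=0$ exactly when $b_1=\cdots=b_t$, the duplicated columns contribute nothing new and one gets the identities $i^k_{\alpha}(x^1,\ldots,x^t,x^t,\ldots,x^t) = i^t_{\alpha}(x^1,\ldots,x^t)$ and, for the images under $f$ with weight $\beta$, $i^k_{\beta}(f(x^1),\ldots,f(x^t),\ldots) = i^t_{\beta}(f(x^1),\ldots,f(x^t))$. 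Applying the assumed $i^k_{\alpha}$-$i^k_{\beta}$ contraction inequality to this particular padded tuple then yields precisely $i^t_{\alpha}(x^1,\ldots,x^t) \ge i^t_{\beta}(f(x^1),\ldots,f(x^t))$. As $x^1,\ldots,x^t$ were arbitrary, $f$ is an $i^t_{\alpha}$-$i^t_{\beta}$ contraction for all $t \le k$.

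Next I would feed this into the fixed-rank decomposition \eqref{eq:decomp}. Writing out both sides of \eqref{eq:ineq} on an arbitrary $k$-graph, the rank-by-rank computation \eqref{eq:hyperproof} expresses the weighted sum of min-cut weights as $\sum_{t=2}^{k}\sum_{(x^1,\ldots,x^t)}\abs{E(x^1,\ldots,x^t)}\,i^t_{\alpha}(x^1,\ldots,x^t)$, and the analogous sum over the $U_r$ cuts built from $f$ gives $\sum_{t=2}^{k}\sum_{(x^1,\ldots,x^t)}\abs{E(x^1,\ldots,x^t)}\,i^t_{\beta}(f(x^1),\ldots,f(x^t))$. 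Because each coefficient $\abs{E(x^1,\ldots,x^t)}\ge 0$ and, by the lemma, the per-term inequality $i^t_{\alpha}\ge i^t_{\beta}\circ f$ holds at every rank $t$, the LHS sum dominates the RHS sum term by term. Combined with minimality of the $W_l$ and the fact that the $U_r$ are valid cuts for the $J_r$ (inherited from $f(x^{(i)})=y^{(i)}$), this chains into $\sum_l \alpha_l S(I_l) \ge \sum_r \beta_r S(J_r)$, so the map $f$ contracts the inequality on $k$-graphs.

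I expect the only real subtlety, rather than a genuine obstacle, to be bookkeeping: one must verify that a single map $f$ meets both the boundary condition $f(x^{(i)})=y^{(i)}$ and the multi-rank contraction requirement at once. The padding lemma is exactly what supplies this, collapsing the $k-1$ distinct contraction conditions of a mixed $k$-graph down to the single strongest one at $t=k$, which is the hypothesis we are handed.
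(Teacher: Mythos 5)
Your proof is correct and follows essentially the same route as the paper: your padding argument (repeating one bit string $k-t$ times so that $i^k_{\alpha}$ collapses to $i^t_{\alpha}$) is precisely the paper's observation that the $i^k_{\alpha}$-$i^k_{\beta}$ contraction applied to tuples with repeated bit strings reduces to an $i^t_{\alpha}$-$i^t_{\beta}$ contraction for every $2 \le t \le k$. Your explicit use of the rank-by-rank decomposition \eqref{eq:decomp} to assemble the mixed-rank case is also how the paper justifies sufficiency, just spelled out in the surrounding text rather than inside the corollary's proof.
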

\begin{proof}
    By assumption, $ i^k_{\alpha}(x^1,\dots,x^k) \geq i^k_{\beta}(f(x^1),\dots,f(x^k))$ for all choices of $k$ LHS bit strings $x^1,\dots,x^k\in\{0,1\}^L$. This includes choices with repeated bit strings. Whenever only $l\geq 2$ out of the $k$ bit strings on the LHS are distinct, the $i_\alpha^k$-$i_\beta^k$ contraction reduces to an $i_\alpha^l$-$i_\beta^l$ contraction. Since among all subsets of $k$ LHS bit strings one has all subsets of $l$ distinct LHS bit strings for every $2\leq l\leq k$, $f$ is a contraction on all $k$-graphs.
\end{proof}

Note that, as mentioned previously, a valid entropy inequality on $k$-graphs will not generally hold on $k'$-graphs with $k' > k$. In the holographic case, we were only concerned with $2$-edges, implying that holographic inequalities proven via the contraction method do not a priori hold on general hypergraphs. Intuitively, as $k$ increases the number of restrictions on the contraction map also increases. This means that only weaker entropy inequalities will be valid, opening up the cone to less stringent facets, which in turn implies that the hypergraph cone is guaranteed to contain the holographic cone, as expected.

Theorem \ref{thm:gencontraction} alone is not very satisfying if one hopes to make general statements about the entropies of hypergraphs of arbitrarily high rank. For example, if we want to interpret hypergraphs as truly encoding the entanglement structure of some class of quantum states, their entropies should satisfy universal entropy inequalities such as SA and SSA, obeyed by all quantum states, regardless of their rank. To check such a fundamental consistency condition, one ideally need not verify the validity of these inequalities for $k$ graphs of arbitrarily large $k$. Fortunately, the following result tightly bounds how high in rank one needs to go in order to prove that a certain inequality is valid for hypergraphs of all ranks.

\begin{proposition}\label{prop:srgraphs}
    If \eqref{eq:ineq} contracts on $R$-graphs, then it is a valid inequality on all hypergraphs of finite rank.
\end{proposition}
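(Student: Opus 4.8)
The plan is to show that a single contraction map witnessing the inequality up to rank $R$ is automatically a contraction at every rank, so that hyperedges of cardinality larger than $R$ never need to be examined. By the rank decomposition \eqref{eq:decomp} together with Theorem \ref{thm:gencontraction}, validity on all finite-rank hypergraphs follows as soon as we know that the map $f$ furnished by the hypothesis is an $i^k_\alpha$-$i^k_\beta$ contraction for \emph{every} $k\geq 2$, and not merely for $2\leq k\leq R$. So the whole task reduces to upgrading the assumed finite-rank contraction to an all-rank one using the same $f$.

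The first observation I would exploit is that the weighted indicators are insensitive to repetition: since $i^k$ only tests whether a column is constant, its value on a $k$-tuple depends solely on the \emph{set} of distinct strings appearing. Thus the rank-$k$ contraction condition is really a statement about finite subsets of $\{0,1\}^L$ and their images in $\{0,1\}^R$. Given an arbitrary tuple $x^1,\dots,x^k$, I would set $S=\{x^1,\dots,x^k\}$ and $Y=f(S)\subseteq\{0,1\}^R$, and then seek a small subset of $S$ reproducing both sides. The technical heart is the reduction lemma that every finite $Y\subseteq\{0,1\}^R$ contains a subset $Y'$ with $|Y'|\leq R$ having exactly the same set of non-constant (``mixed'') columns, and hence the same value of $i_\beta$. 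Granting this, I would pick one preimage of each element of $Y'$ to obtain $S'\subseteq S$ with $|S'|\leq R$ and $f(S')=Y'$. The right-hand side is then unchanged, $i^k_\beta(f(x^1),\dots,f(x^k))=i_\beta(Y)=i_\beta(Y')$, whereas the left-hand side can only decrease under passage to a subset, $i^k_\alpha(x^1,\dots,x^k)=i_\alpha(S)\geq i_\alpha(S')$, since dropping strings can only turn columns constant. Because $|S'|\leq R$, the hypothesis (via Corollary \ref{cor:kugkg}, which supplies the contraction for every number of distinct strings up to $R$) gives $i_\alpha(S')\geq i_\beta(f(S'))$, and chaining the three relations delivers the $k$-fold contraction.

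The main obstacle is proving the lemma with the sharp bound $R$ rather than the easy $R+1$. The easy bound is immediate: an affine basis of $Y$ over $\mathbb{F}_2$ has size $1+\dim_{\mathrm{aff}}Y\leq R+1$ and preserves precisely which coordinates are non-constant, because a coordinate is constant on $Y$ if and only if it is constant on the affine span of $Y$. To shave off the last string, I would note that the only configuration forcing full size $R+1$ is when $Y$ affinely spans all of $\mathbb{F}_2^R$, in which case every one of the $R$ columns is mixed. I would then argue that within such an affine basis one point is redundant for the sole purpose of keeping columns mixed: calling a point \emph{critical} for a column if deleting it renders that column constant, a point is critical for a column exactly when it is the unique point on its side of that column's bipartition; for $R\geq 2$ the basis has at least three points, so at most one side can be a singleton and hence each column admits at most one critical point. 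This yields at most $R$ critical incidences spread over $R+1$ points, so by pigeonhole some point is critical for no column and may be discarded, leaving $R$ strings that still mix every column.

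I expect this counting step, together with the careful bookkeeping needed to confirm that deduplication and passage to subsets behave correctly on both the $i_\alpha$ and $i_\beta$ sides, to be where the genuine work lies; the remainder is the routine assembly of the chain of inequalities above. It is worth flagging the degenerate case $R=1$, where the critical-point count degenerates (both endpoints of a single mixed column are critical) and one genuinely needs two strings; since single-term right-hand sides such as subadditivity are easily handled directly, I would treat $R\geq 2$ as the substantive content of the statement.
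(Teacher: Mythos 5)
Your proposal is correct, and its skeleton is the same as the paper's: both arguments show that the single map $f$ supplied by the hypothesis is automatically an $i^k_\alpha$-$i^k_\beta$ contraction for \emph{every} $k$, by extracting from the distinct image strings a subset of at most $R$ strings whose mixed (non-constant) columns coincide with those of the full image set, and then chaining monotonicity of $i_\alpha$ under deletion of strings, the $m\leq R$ contraction property (via Corollary \ref{cor:kugkg}), and invariance of $i_\beta$. Where you genuinely differ is in how that extraction lemma is proved. The paper is greedy and purely combinatorial: it first finds a pair of strings differing simultaneously in two mixed columns (which exists once there are at least three distinct strings) and then adjoins at most one further string per uncovered mixed column, producing a subset of size at most $\abs{C}\leq R$, where $C$ is the set of mixed columns. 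You instead take an affine basis over $\mathbb{F}_2$, which preserves the mixed columns exactly and has at most $R+1$ elements, and then discard one point in the full-dimensional case via the critical-point/pigeonhole count. Both are sound; the paper's route gives the marginally sharper bound $\abs{C}$ (your basis can have $\abs{C}+1$ elements when the affine dimension equals $\abs{C}$, though never more than $R$, which is all the proposition needs) and avoids linear algebra entirely, while your route cleanly isolates why the bound $R+1$ is easy and why exactly one more point can always be removed, which makes the role of the threshold $R$ transparent. Your flagging of $R=1$ as a degenerate case to be handled directly also mirrors the paper's own remark that the $k=2$ case, occurring only for $R=1$, is trivial.
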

\begin{proof}

    The crux of this proof is to show that for $k > R$, the $i^k$-distance between a set of $k$ bit strings of length $R$ is equal to the $i^t$ distance of a subset of at most $t\leq R$ of them. This will be shown to be a consequence of the fact that the $i^k$-distance saturates at a maximum value as $k$ increases above $R$. We then show that this reduces higher rank contraction map constraints to rank $R$ constraints.
    
    Consider an arbitrary set of $k\geq R+1$ distinct bit strings $Z^k = \{x^i\in\{0,1\}^R \,:\, i=1,\dots,k\}$. Define a subset $C\subseteq[R]$ of RHS columns such that $i^k(x^1_l,\dots,x^k_l)=1$ if and only if $l\in C$, and note that $i^k_{\beta}(x^1,\ldots,x^k) = \sum_{l\in C} \beta_l$. Clearly, $C$ is non-empty for any non-trivial inequality with $R\geq1$, and its cardinality $\abs{C}\geq2$ for any $k\geq3$. The $k=2$ case, which may only occur for $R=1$, is trivial.
    
    Write $C=\{l_1,\dots,l_{\abs{C}}\}$ for convenience and proceed to construct a subset $S\subseteq Z^k$ algorithmically as follows. First, pick two bit strings $x^{j_1},x^{j_2}\in Z^k$ with the property that $i^2(x^{j_1}_{l_1},x^{j_2}_{l_1})=i^2(x^{j_1}_{l_2},x^{j_2}_{l_2})=1$ for $l_1,l_2\in C$ and add them to $S$. These are two bit strings differing in the $l^{\text{th}}$ bit, which are guaranteed to exist so long as $k\geq3$. Then, for the next $i_3\in C$, look at $i^2(x^{j_1}_{l_3},x^{j_2}_{l_3})$. If $i^2(x^{j_1}_{l_3},x^{j_2}_{l_3})=0$, then look for a third bit string $x^{j_3}$ giving $i^3(x^{j_1}_{l_3},x^{j_2}_{l_3},x^{j_3}_{l_3})=1$, which exists because $l_3\in C$, and add it to $S$. If instead $i^2(x^{j_1}_{l_3},x^{j_2}_{l_3})=1$, do nothing and go on to look at the next $l_4\in C$. The process terminates once one goes over all elements in $C$, at the end of which one ends up with a set $S$ containing $\abs{S}\leq R$ bit strings and with the property that $i^{\abs{S}}(x^{j_1}_{l_c},\dots,x^{j_{\abs{S}}}_{l_c})=1$ for all $l_c\in C$. This implies that
    \begin{equation}\label{eq:skrel}
        i^{\abs{S}}_{\beta}(x^{j_1},\dots,x^{j_{\abs{S}}}) = i^k_{\beta}(x^1,\ldots,x^k).
    \end{equation}
    By hypothesis, there exists a contraction map $f:\{0,1\}^L\to\{0,1\}^R$ such that $i^m_{\alpha}(\tilde{x}^1,\dots,\tilde{x}^m) \geq i^m_{\beta}(f(\tilde{x}^1),\dots,f(\tilde{x}^m))$ for all $\tilde{x}^1,\dots,\tilde{x}^k\in\{0,1\}^L$ and all $m\leq R$. In particular, this implies $i^{\abs{S}}_{\alpha}(\tilde{x}^{j_1},\dots,\tilde{x}^{j_{\abs{S}}}) \geq i^{\abs{S}}_{\beta}(x^{j_1},\dots,x^{j_{\abs{S}}})$ where $f(\tilde{x}^{j_s})=x^{j_s}$ for all $s=1,\dots,\abs{S}$. Using on the LHS the fact that the weighted indicator functions are monotonically non-decreasing upon addition of extra points, and \eqref{eq:skrel} on the RHS, one arrives at $i^k_{\alpha}(\tilde{x}^{j_1},\dots,\tilde{x}^{j_k}) \geq i^k_{\beta}(x^{j_1},\dots,x^{j_k})$. Since the set $Z^k$ was arbitrary, this shows that $f$ is an $i^k_{\alpha}$-$i^k_{\beta}$ contraction. Additionally, since $k\geq R+1$ was also arbitrary, and $f$ is already an $i^m_{\alpha}$-$i^m_{\beta}$ contraction for all $m\leq R$ by assumption, one concludes that \eqref{eq:skrel} is a valid inequality on all hypergraphs.
\end{proof}

Proposition \ref{prop:srgraphs} partially captures the intuition that one should not need to consider arbitrarily large entanglement structures given a fixed party number in order to confirm the validity of an entropy inequality. This is realized as a bound on the maximal rank $k$ of $k$-graphs one has to consider of the form $k\leq R$, where $R$ is generically smaller for smaller numbers of parties. Indeed, we can slightly formalize this statement as follows. Given $n$ parties, there are $2^n-1$ non-trivial combinations of parties or subsystems. For inequalities that only have unit coefficients, a generic inequality can have at most $2^{n-1}-1$ terms on the RHS, or else they will be violated by sufficiently high party GHZ states\footnote{Recall that a GHZ$_k$ state results in all $2^k-2$ non-trivial proper subsystems having the same entropy.}. Including non-unit coefficients will modify this bound by a multiplicative factor that depends on the ratio of the coefficients. We hence expect that there is generally an $O(2^{n-1}-1)$ bound on the rank of the hypergraphs to be considered for generic $n$-party inequalities after which the contraction maps begin to trivialize.

However, on the basis of physical intuition, we expect that the bound should be stronger than one exponential in the party number. Including the purifier, a general $n$-party quantum state can only possibly accommodate at most $(n+1)$-party entanglement among the indivisible subsystems. This leads to the expectation that $k$-edges, which are $k$-partite entanglement structures of GHZ$_k$ type, should not contribute anything new to an $n$-party hypergraph with $n+1<k$. In particular, one would hope that any $n$-party hypergraphs should be reducible to an entropically-equivalent one of rank at most $n+1$. As for inequalities, this would mean that we only need to check the contraction property on up to $(n+1)$-graphs when considering $n$-party inequalities. We phrase this intuition here as a formal conjecture:
\begin{conjecture}\label{con:npartyrank}
    If an $n$-party inequality of the form \eqref{eq:ineq} contracts on $(n+1)$-graphs, then it is a valid inequality on all hypergraphs of finite rank.
\end{conjecture}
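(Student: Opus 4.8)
The plan is to deduce the conjecture from a purely structural claim about hypergraphs, thereby bypassing the contraction map entirely at the last step. Concretely, I would aim to prove the following reduction lemma: every $n$-party hypergraph is entropically equivalent to a hypergraph of rank at most $n+1$. Granting this, the conjecture follows immediately: if \eqref{eq:ineq} contracts on $(n+1)$-graphs, then Theorem~\ref{thm:gencontraction} together with Corollary~\ref{cor:kugkg} make it a valid inequality on all $(n+1)$-graphs; since every hypergraph shares its entropy vector with some $(n+1)$-graph by the reduction lemma, the inequality holds on all hypergraphs of finite rank.

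To attack the reduction lemma, I would first invoke the uniform decomposition \eqref{eq:decomp} and Corollary~\ref{cor:kugkg} to reduce to a single $k$-uniform hypergraph with $k>n+1$, and then lower its rank one edge-cardinality at a time. The guiding observation is the one visible in small examples: for the min-cut computing $S(I)$, a hyperedge $e$ is cut precisely when its boundary-colored vertices $e\cap\partial V$ are split between $I$ and its complement, because the bulk vertices of $e$ can, in isolation, always be routed to one side to avoid an unnecessary contribution. Since there are only $n+1$ colors, the \emph{effective} content of any hyperedge is carried by at most $n+1$ of its vertices, which suggests replacing each $k$-edge by an edge supported on at most $n+1$ effective vertices of the same weight. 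The central technical step is to make this routing simultaneous and consistent across the whole family of min-cuts: I would fix min-cuts for all $2^{n+1}-2$ boundary subsystems at once, exploit the submodularity of the hyperedge cut function $W\mapsto\abs{C(W)}$ and an uncrossing argument to arrange them into a mutually compatible family, and then show that within such a family each bulk vertex inherits an assignment profile matching that of one of the $n+1$ boundary occurrence vectors. Identifying each bulk vertex with the corresponding effective color would license the edge-by-edge replacement while preserving every min-cut value, bringing the rank down to at most $n+1$.

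I expect the consolidation of bulk vertices to be the main obstacle, and indeed it is exactly what keeps the statement conjectural. A single bulk vertex typically participates in many hyperedges and contributes to many subsystem min-cuts at once, and its cost-minimizing side can vary across subsystems in a pattern that matches no single occurrence vector; moreover, bulk vertices shared among several high-cardinality edges are entangled by the optimization, so a naive edge-by-edge replacement can alter min-cut values through these couplings. The submodular/uncrossing machinery controls pairs of min-cuts cleanly, but globalizing it to all boundary subsystems simultaneously while respecting the hyperedge cut semantics---which are submodular but not modular---is the genuinely hard part.

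A complementary route, which I would pursue in parallel as a sanity check, is to sharpen the saturation argument of Proposition~\ref{prop:srgraphs} directly at the level of the contraction map. There the $i^k_\beta$-distance saturates only once $k$ exceeds the number of RHS terms $R$, and one would try to lower this threshold to $n+1$ by arguing that the region labels relevant to the proof of Theorem~\ref{thm:gencontraction} are generated by the $n+1$ occurrence vectors. Since this again amounts to bounding the number of distinct region labels of bulk vertices by $n+1$, I regard it as equivalent in content to the structural approach rather than an independent one.
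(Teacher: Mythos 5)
The statement you set out to prove is Conjecture~\ref{con:npartyrank}: the paper offers \emph{no proof} of it, only the physical motivation that an $n$-party state together with its purifier can host at most $(n+1)$-partite entanglement, together with the explicit hope that every $n$-party hypergraph is reducible to an entropically-equivalent one of rank at most $n+1$; further investigation is deferred to future work. Your proposal correctly reduces the conjecture to exactly that reduction lemma, and the reduction itself is sound: contraction on $(n+1)$-graphs gives validity on all hypergraphs of rank at most $n+1$ via Theorem~\ref{thm:gencontraction} and Corollary~\ref{cor:kugkg}, and entropic equivalence would then transport the inequality to all finite ranks. But the reduction lemma is not something you prove; it carries the entire content of the conjecture, and you say so yourself. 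The proposal is therefore a program, not a proof, and it coincides in substance with the motivation the paper already gives immediately before stating the conjecture.

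Two concrete steps would need to be supplied, and neither is. First, the bulk-vertex consolidation: the paper's Lemma~\ref{lemma:exponential} shows bulk vertices can be collapsed according to their membership profiles across a chosen family of min-cuts, but this yields up to $2^{2^n-1}$ distinct profiles, not $n+1$; your assertion that a suitably uncrossed family of min-cuts forces every bulk vertex's profile to equal one of the $n+1$ occurrence vectors is precisely the hard claim, and pairwise submodular uncrossing is not known to globalize consistently across all $2^{n+1}-2$ subsystems at once. Second, even granting consolidation, replacing a $k$-edge by an ``effective'' sub-edge $e'\subseteq e$ on at most $n+1$ vertices can only lower cut weights (any cut splitting $e'$ splits $e$), so every discrete entropy weakly decreases; preserving the entropy vector requires showing that \emph{no} cut for \emph{any} subsystem becomes strictly cheaper in the modified hypergraph, i.e.\ preservation of the minimum over all cuts, not merely of the values attained on the chosen family of min-cuts. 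Your routing argument handles one subsystem's min-cut in isolation, but re-routing a bulk vertex to protect one edge can increase the contribution of other edges sharing that vertex, which is exactly the coupling you flag as the obstacle. Your alternative route through sharpening Proposition~\ref{prop:srgraphs} is, as you note, the same assertion in different form. In short, the architecture matches the paper's own intuition, but the statement remains as conjectural after your proposal as before it.
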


Although we will leave further investigation of this conjecture to future work, we briefly point out a corollary of \ref{prop:srgraphs} which sharpens the sense in which the constraints on the contraction map coming from $k$-edges weaken as $k$ increases. The result below shows that, except for very specific choices of RHS bit strings, the contraction property is guaranteed on $k$-edges by the contraction property of $l$-edges with $l<k$.

\begin{corollary}\label{cor:genuinek}
    If $f$ is an $i^l_{\alpha}$-$i^l_{\beta}$ contraction for all $m\leq k-1$, then further demanding that $i^k_{\alpha}(x^1,\ldots,x^k) \geq i^k_{\beta}(f(x^1),\ldots,f(x^k))$ is a non-trivial constraint on $f$ if and only if $i_\beta^k(f(x^1),\dots,f(x^k))>i_\beta^{k-1}(f(x^{i_1}),\dots,f(x^{i_{k-1}}))$ strictly for any subset of $k-1$ bit strings of the original set of $k$ bit strings.
\end{corollary}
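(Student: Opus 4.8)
The plan is to reduce everything to the single structural fact already exploited in the proof of Proposition \ref{prop:srgraphs}: the weighted indicator functions are monotonically non-decreasing under the addition of extra bit strings. Concretely, for any weights $\gamma$ and any $(k-1)$-element subset $\{z^{i_1},\dots,z^{i_{k-1}}\}$ of $\{z^1,\dots,z^k\}$ one has $i^k_\gamma(z^1,\dots,z^k)\geq i^{k-1}_\gamma(z^{i_1},\dots,z^{i_{k-1}})$, since a column that is constant on all $k$ entries is in particular constant on any $k-1$ of them, so the column indicator can only go up when a point is added. The two directions of the biconditional will then amount to asking whether this monotonicity bound, composed with the assumed rank-$\leq k-1$ contraction of $f$, already certifies the rank-$k$ inequality.

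For the ``only if'' direction I would argue the contrapositive: suppose the strict condition fails, so that some $(k-1)$-subset $\{x^{i_1},\dots,x^{i_{k-1}}\}$ satisfies $i^k_\beta(f(x^1),\dots,f(x^k)) = i^{k-1}_\beta(f(x^{i_1}),\dots,f(x^{i_{k-1}}))$. I would then chain three inputs: monotonicity applied to the LHS gives $i^k_\alpha(x^1,\dots,x^k)\geq i^{k-1}_\alpha(x^{i_1},\dots,x^{i_{k-1}})$; the assumed $i^{k-1}_\alpha$-$i^{k-1}_\beta$ contraction gives $i^{k-1}_\alpha(x^{i_1},\dots,x^{i_{k-1}})\geq i^{k-1}_\beta(f(x^{i_1}),\dots,f(x^{i_{k-1}}))$; and the supposed equality rewrites this last quantity as $i^k_\beta(f(x^1),\dots,f(x^k))$. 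Concatenating yields $i^k_\alpha\geq i^k_\beta$ with no additional demand on $f$, so the rank-$k$ constraint holds automatically and is therefore trivial.

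For the ``if'' direction I would show that when the RHS strictly increases in passing from every $(k-1)$-subset to the full set, the same chaining cannot close. The strongest lower bound on $i^k_\alpha(x^1,\dots,x^k)$ obtainable from the rank-$\leq k-1$ data is $\max_T i^{k-1}_\beta(f(x^T))$ over $(k-1)$-subsets $T$ (once again via LHS monotonicity followed by the lower-rank contraction), and by hypothesis this maximum is strictly below $i^k_\beta(f(x^1),\dots,f(x^k))$. Thus the lower-rank contraction properties control only quantities that sit strictly underneath the rank-$k$ right-hand side, and so they cannot by themselves imply $i^k_\alpha\geq i^k_\beta$; the rank-$k$ inequality therefore carries information genuinely beyond the lower-rank ones and is non-trivial.

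The main obstacle is making this ``if'' direction fully rigorous: I must rule out that some more elaborate combination of lower-rank constraints across several subsets could force the rank-$k$ inequality by a route other than monotonicity-plus-reduction. I would resolve this by observing that the only relations linking $i^k_\beta$ to lower-rank values are the monotonicity bounds, and these \emph{lower}-bound $i^k_\beta$ by its subsets rather than ever upper-bounding it; hence no lower-rank contraction can cap $i^k_\beta$ from above. Consequently, when $i^k_\beta$ strictly dominates all of its $(k-1)$-subset values, one can exhibit a modification of $f$ that leaves every rank-$\leq k-1$ contraction intact while inflating $i^k_\beta$ past $i^k_\alpha$ on the offending columns, giving an explicit violation and confirming that the rank-$k$ demand is a genuine constraint.
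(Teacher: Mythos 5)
Your first two paragraphs are, in substance, the paper's own proof. The paper argues exactly your contrapositive: given the $k$ image bit strings, it invokes the subset construction from the proof of Proposition \ref{prop:srgraphs} to extract a subset $S$ of them that preserves the value of $i^k_\beta$ in the sense of \eqref{eq:skrel}; whenever $\abs{S}<k$, chaining LHS monotonicity with the rank-$\abs{S}$ contraction yields the rank-$k$ inequality, so the rank-$k$ demand is implied by the $m\leq k-1$ contractions precisely when some $(k-1)$-subset already preserves $i^k_\beta$. Your second paragraph's reading of ``non-trivial'' --- the best bound reachable from lower-rank data is $\max_T i^{k-1}_\beta\bigl(f(x^T)\bigr)$ over $(k-1)$-subsets $T$, which the hypothesis places strictly below $i^k_\beta\bigl(f(x^1),\dots,f(x^k)\bigr)$ --- is also the sense the paper intends by ``not being implied by $m\leq k-1$.''

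The gap is your final paragraph. The claim that one can always ``exhibit a modification of $f$ that leaves every rank-$\leq k-1$ contraction intact while inflating $i^k_\beta$ past $i^k_\alpha$'' is false in general, so the stronger notion of non-triviality you reach for there (existence of an actually violating map) cannot be what the corollary asserts. Concretely, suppose the tuple satisfies $i^k_\alpha(x^1,\dots,x^k) \geq \beta_T \coloneqq \sum_{r=1}^R \beta_r$. Since $\beta_T$ is the largest value $i^k_\beta$ can ever take, the rank-$k$ inequality at this tuple holds for \emph{every} conceivable assignment of images, and no modification of $f$ can violate it there. Yet the strict condition can hold simultaneously: with $R=k$, take each image $f(x^i)$ to be the bit string with a single $1$ in the $i^{\text{th}}$ position; then every column of the image matrix is non-constant on all $k$ strings, but column $j$ becomes constant once $f(x^j)$ is removed, so $i^k_\beta = \beta_T > \beta_T - \beta_j = i^{k-1}_\beta$ for every $(k-1)$-subset. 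Hence if ``non-trivial'' meant ``violable by some admissible $f$,'' the if direction would fail. The corollary survives --- and your proof is complete --- once ``non-trivial'' is understood, as the paper does, to mean only ``not implied by the $m\leq k-1$ contraction property through the subset reduction''; that is exactly what your second paragraph establishes, so the modification claim should be deleted rather than repaired.
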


\begin{proof}
    This follows immediately from the observation in the proof of Proposition \ref{prop:srgraphs} that given $k$ RHS bit strings, one can always form a subset $S$ of $\abs{S}\geq k$ bit strings obeying \eqref{eq:skrel}. Whenever $\abs{S}<k$ strictly, then that $i^k_{\alpha}(x^1,\ldots,x^k) \geq i^k_{\beta}(f(x^1),\ldots,f(x^k))$ holds follows from the contraction property for $m=\abs{S}\leq k-1$. Hence such constraint is only non-trivial (in the sense of not being implied by $m\leq k-1$) if $\abs{S}=k$, which happens if and only if all $k$ bit strings are needed for the weighted indicator function to preserve its value on the RHS. In other words, one needs that $i_\beta^k(f(x^1),\dots,f(x^k))>i_\beta^{k-1}(f(x^{i_1}),\dots,f(x^{i_{k-1}}))$ strictly for any subset of $k-1$ bit strings of the original set of $k$ bit strings.
\end{proof}

\subsubsection{Polyhedrality}
The contraction map technique only works to prove linear inequalities in the form of \eqref{eq:ineq}. Here, we show that the hypergraph cone is polyhedral (and therefore convex), and hence one need only consider such linear inequalities to completely characterize it.

In \cite{bao2015holographic}, the holographic entropy cone was shown to be polyhedral for any fixed party number $n$, implying the existence of finitely many linearly independent entropy inequalities for $n$-party holographic states. Polyhedrality is a remarkable property not necessarily shared by other entropy cones, such as the Shannon entropy cone \cite{matus2007infinitely}. By borrowing the proof techniques for $2$-graphs from \cite{bao2015holographic}, we can prove that the hypergraph cone is also polyhedral. The following lemma is key:
\begin{lemma}\label{lemma:exponential}
    Any entropy vector in the $n$-party hypergraph entropy cone can be realized by a hypergraph with $2^{2^n - 1}$ vertices.
\end{lemma}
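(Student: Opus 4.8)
The plan is to mimic the polyhedrality argument for $2$-graphs from \cite{bao2015holographic}, building from an arbitrary realizing hypergraph a new one on a bounded vertex set by quotienting out vertices that behave identically with respect to all relevant min-cuts. Concretely, suppose an entropy vector in the $n$-party hypergraph cone is realized by some hypergraph $G=(V,E)$. There are exactly $2^n-1$ non-empty subsystems $I\subseteq[n]$, and for each one I would fix a single min-cut $W_I$ achieving $S(I)=\abs{C(W_I)}$. To every vertex $v\in V$ I associate the bit string $\sigma(v)=\left(\mathds{1}[v\in W_I]\right)_I$ of length $2^n-1$ recording on which side of each $W_I$ it lies. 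Since there are only $2^{2^n-1}$ possible such strings, the equivalence relation $v\sim v' \iff \sigma(v)=\sigma(v')$ has at most $2^{2^n-1}$ classes.

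I would then define the quotient hypergraph $G'$ by collapsing each equivalence class to a single vertex, mapping each hyperedge $e$ to the image of its vertex set (discarding any image that collapses to a single vertex, since such edges never contribute to a cut), and summing the weights of hyperedges sharing a common image. The coloring descends consistently: the constraint $W_I\cap\partial V=b^{-1}(I)$ forces all boundary vertices of a common color to lie on the same side of every $W_I$, hence into one class, while two boundary vertices of distinct colors $i\neq j$ are separated by the singleton min-cut $W_{\{i\}}$ and so land in different classes. Thus $G'$ retains exactly the $n+1$ colored boundary vertices, and any bulk vertex merged with a boundary one may simply be re-designated as boundary.

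The heart of the proof is to show $S_{G'}(I)=S_G(I)$ for every $I$. The key observation is that, by construction, each class lies entirely within $W_I$ or entirely within $W_I^\complement$, so the partition refines every $W_I$; consequently a hyperedge is cut by $W_I$ in $G$ if and only if its image is cut by the image $\overline{W_I}$ in $G'$, and since weights are preserved under the quotient this yields $\abs{C_{G'}(\overline{W_I})}=\abs{C_G(W_I)}=S_G(I)$, establishing $S_{G'}(I)\le S_G(I)$. For the reverse inequality, any cut $\overline{W}$ in $G'$ respecting the required boundary condition lifts to $W=\bigcup_{[v]\in\overline{W}}[v]$ in $G$; because $W$ is a union of whole classes, a hyperedge is cut by $W$ exactly when its image is cut by $\overline{W}$, so $\abs{C_G(W)}=\abs{C_{G'}(\overline{W})}\ge S_G(I)$ by minimality in $G$, whence $S_{G'}(I)\ge S_G(I)$.

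I expect the main obstacle to be the bookkeeping of hyperedges under the quotient rather than any conceptual step: one must verify carefully that the cut/non-cut status of a hyperedge, governed by \eqref{eq:cuthyperedges} which only requires that the hyperedge straddle the partition, is preserved exactly under both the quotient map and its lift, including the degenerate cases where several hyperedges are identified, where a hyperedge collapses to a point, or where a bulk vertex is absorbed into a boundary vertex. Once this invariance is in hand, the vertex count of $G'$ is bounded by the number of equivalence classes, namely $2^{2^n-1}$, completing the proof.
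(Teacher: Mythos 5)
Your proposal is correct and takes essentially the same route as the paper: your equivalence classes $\sigma^{-1}(x)$ are precisely the cells $W(x)=\bigcap_{I} W_I^{x_I}$ that the paper collapses onto the vertices of its ``universal hypergraph'' on bit strings $\{0,1\}^{2^n-1}$, with the same edge-weight transfer and the same identification of occurrence vectors as boundary vertices. Your explicit two-sided argument (projecting $W_I$ down, lifting arbitrary cuts back up) merely spells out the entropy-preservation step that the paper asserts as clear.
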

\begin{proof}
    Consider $n$ parties $i\in[n]$ and let $\mathcal{I}_n$ denote the set of $2^n-1$ non-trivial combinations of parties or subsystems. We can then define a universal hypergraph with a vertex set $V = \{0,1\}^{\mathcal{I}_n}$ of all bit strings of length $2^n-1$, whence $\abs{V} = 2^{2^n-1}$. The boundary vertices in this set are precisely the occurrence vectors for each $i$; that is, they correspond to those bit strings $x^i$ defined via $x^{i}_I = \mathds{1}[i \in I \in \mathcal{I}_n]$. 

    Now given an arbitrary $n$-party hypergraph, let $W_{I}$ be a min-cut giving the entropy of subsystem $I$. Then for each $x \in V$, we define a cut $W(x)$ using inclusion/exclusion as in the proofs of Theorems \ref{thm:contraction} and \ref{thm:gencontraction}: $W(x) = \bigcap_{I \in \mathcal{I}_n}W_{I}^{x_I}$ where $W_I^1 = W_I$ and $W_I^0 = W_I^\complement$. The set of $W(x)$ for all possible $x\in V$ partitions the hypergraph into $2^{2^n-1}$ subsets, some of which may be empty. The idea is to map each of these subsets $W(x)$ to a single vertex $x\in V$ in the universal hypergraph, which may alternatively be thought of as collapsing $W(x)$ into a single vertex in the given hypergraph. One then only needs to make sure that the entropies can be appropriately preserved. To do so, let $E(x^1,\ldots,x^k)$ denote the set of $k$-edges on the given hypergraph intersecting non-trivially every one of the cuts $W(x^1),\ldots,W(x^k)$. Note that the $x^i$ are allowed to be equal, in order to account for the possibility of a $k$-edge having multiple vertices in the same $W(x^i)$. Then define a $k$-edge containing vertices $(x^1,\dots,x^k)$ in the universal hypergraph with weight given by the sum of all $k$-edges across $W(x^1),\dots, W(x^k)$ so that $w(x^1,\dots, x^k) = \sum_{e\in E(x^1,\dots, x^k)}w(e)$. Because the cuts $W_I$ are only concerned with the hyperedges that cross the partition, it is clear that this indeed preserves the entropies.
\end{proof}

Applying this lemma, the proof of polyhedrality of the hypergraph entropy cone is identical to the proof of Proposition $7$ in \cite{bao2015holographic}, so long as one replaces ``graph" with ``hypergraph." This proves that the hypergraph entropy cone is polyhedral and can thus be described by a finite number of facets for any given finite number of parties. Moreover, convexity of the hypergraph cone is an immediate corollary, implying that the cone can be fully described by linear inequalities.

\subsubsection{A Geometric Aside: The Hypercube Picture}

A potentially useful framing of Theorem \ref{thm:gencontraction} is to consider all possible bit strings $x$ of $m$ bits as a labeling scheme for all vertices of an $m$-dimensional hypercube such that any two bit strings differing only by the $l^{\text{th}}$ bit are connected by an edge along the $l^{\text{th}}$ dimension. Any two bit strings, adjacent or not, can be used as generators of a ``straight'' line connecting them. The quotes are used to emphasize that this line has to travel along the edges of the hypercube, and will thus be piece-wise broken at any turning-vertex that has to be traversed to connect the two vertices. This line follows a shortest-distance path which is highly non-unique for the Hamming distance function $\norm{\cdot}$. For any two bit strings $x^1$ and $x^2$, a third bit string $x^3$ can be said to be aligned between $x^1$ and $x^2$ if it lies along any such straight line through the latter. This happens when the three bit strings saturate the triangle inequality
\begin{equation}
\norm{x^1-x^2} \leq \norm{x^1-x^3} + \norm{x^2-x^3}.
\end{equation}
This is naturally consistent with the intuition that the three bit strings are aligned and that, as such, still span a $1$-dimensional object. In contrast, if the triangle inequality is not saturated, then they no longer span a line, but a $2$-dimensional object. Geometrically, the increase in dimensionality is due to the need to move in an additional direction, apart from (any one of) the direction(s) from $x_1$ to $x_2$, to additionally reach $x_3$. At the level of the bit string, this can be seen as a consequence of there not existing a minimal sequence of bit flips from $x_1$ to $x_2$ that additionally realizes $x_3$. One may picture the three bit strings as labeling the vertices of a $2$-dimensional polytope which is non-degenerate in the sense that it has dimension $v-1$, where $v$ is its number of vertices (cf. the three aligned vertices, which form a degenerate $1$-dimensional polytope).

Suppose one adds a fourth bit string $x^4$ and asks whether the resulting $4$-vertex polytope is degenerate. From the geometric perspective, it will be degenerate if $x^4$ can be reached by following any of the paths that minimize the distance one has to travel to connect the other $x^1$, $x^2$ and $x^3$ vertices alone, and one may picture the fourth vertex as lying inside the $2$-dimensional polytope spanned by the other three vertices (cf. the traveling salesman problem if one adds a city along one of the already optimal paths). In the language of bit strings, the polytope is degenerate if any one of the minimal sequences of bit flips it takes to go from $x_1$ to $x_2$ and $x_3$ also realizes $x_4$ along the way with no extra cost. More explicitly, this will happen if and only if all bits in which $x^4$ differs from each of the other three bit strings are a subset of the bits in which the latter alone already collectively differ. Crucially, this is intimately related to the indicator function $i^k_\beta$, which may now pictorially be thought of as measuring the volume of a polytope specified by $k$ hypercube vertices! More precisely, one has the following result:

\begin{proposition}
	A set of $k$ bit strings $x^1,\dots,x^k$ span a $(k-1)$-dimensional polytope if and only if $i_\beta^k(x^1,\dots, x^k) > i_\beta^{k-1}(x^{l_1},\dots, x^{l_{k-1}})$ strictly for all subsets of $k-1$ bit strings.
\end{proposition}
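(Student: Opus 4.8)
The plan is to recognize that $i^k_\beta(x^1,\dots,x^k)$ simply tallies the $\beta$-weighted number of \emph{active} columns --- those in which the $k$ bit strings are not all equal --- and to translate the geometric notion of (non)degeneracy described above into a bookkeeping statement about which string is responsible for which active column. I would take as the working definition of ``spanning a $(k-1)$-dimensional polytope'' the symmetric, non-degenerate version of the criterion already stated in the text for four points: the set is non-degenerate precisely when no $x^j$ lies in the geodesic hull of the remaining $k-1$ strings, i.e.\ cannot be reached along the shortest hypercube paths joining them. The key auxiliary fact, which one verifies directly from the form of geodesic intervals in the hypercube, is that the geodesic hull of a set $T$ is exactly the smallest subcube containing $T$, namely the vertices agreeing with $T$ on every column on which $T$ is constant. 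Consequently $x^j$ fails to lie in the hull of $\{x^i : i\neq j\}$ if and only if there is a positive-weight column on which all the other strings agree while $x^j$ dissents --- a ``unique-dissenter'' column for $x^j$.

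With this dictionary in place, the proof reduces to two short observations. First, since dropping a point can only turn active columns inactive and never the reverse, the weighted indicators are monotonic under adding points (as already used in the proof of Proposition~\ref{prop:srgraphs}), and the gap $i^k_\beta(x^1,\dots,x^k) - i^{k-1}_\beta(\{x^i:i\neq j\})$ equals exactly the total $\beta$-weight of columns that are active for the full set but become inactive once $x^j$ is removed. Such a column is one on which $\{x^i:i\neq j\}$ all agree and $x^j$ differs, i.e.\ a unique-dissenter column for $x^j$. Hence the strict inequality against the $(k-1)$-subset omitting $x^j$ holds if and only if $x^j$ has a unique-dissenter column. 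Second, combining this with the hull characterization, strictness for that subset is equivalent to $x^j \notin \mathrm{hull}(\{x^i : i\neq j\})$.

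The proposition then follows by quantifying over $j$: the $(k-1)$-subsets are precisely the sets obtained by deleting one string, so demanding the strict inequality for all of them is the same as demanding that every $x^j$ possess a unique-dissenter column, which is in turn the same as no $x^j$ lying in the geodesic hull of the others, i.e.\ non-degeneracy in dimension $k-1$. I would note that this is just the geometric reincarnation of the non-triviality criterion of Corollary~\ref{cor:genuinek}, and that zero-weight columns may be discarded at the outset since they are invisible both to $i_\beta$ and to the effective polytope.

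The step I expect to be the main obstacle is not any computation but the conceptual one of committing to the correct notion of dimension. It is tempting but wrong to use real affine dimension in $\mathbb{R}^m$: for instance $\{00,11,01\}$ are affinely independent over $\mathbb{R}$, yet the proposition rightly calls them degenerate because $01$ lies on a Hamming geodesic between $00$ and $11$. The care needed is therefore to fix the geodesic/hypercube dimension --- equivalently, to define non-degeneracy by the symmetric ``no point in the hull of the others'' condition --- and to check that this faithfully reproduces the informal four-point degeneracy rule stated earlier, after which the equivalence with the weighted-indicator inequalities is essentially immediate.
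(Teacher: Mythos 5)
Your proposal is correct and follows essentially the same route as the paper's own proof: both reduce the indicator-function gap $i^k_\beta - i^{k-1}_\beta$ to the existence of a column where the omitted string is the unique dissenter, and identify the absence of such a column with that string lying inside the subcube (the paper's ``polytope'') spanned by the remaining $k-1$ strings. Your version merely makes explicit what the paper leaves informal --- the hull-equals-smallest-subcube lemma, the zero-weight-column caveat, and the warning against using real affine dimension --- which is a welcome tightening but not a different argument.
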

\begin{proof}
	Suppose $i_\beta^k(x^1,\dots, x^k) = i_\beta^{k-1}(x^{l_1},\dots, x^{l_{k-1}})$ for some subset of $k-1$ bit strings. This means that the ${l_k}^{\text{th}}$ bit string $x^{l_k}$ happens to contribute nothing to the $i_\beta^k$ function, such that the latter trivializes down to an $i_\beta^{k-1}$ function (cf. measuring a higher-dimensional volume of a lower-dimensional object). This happens if and only if all bits in which $x^k$ differs from each of the other $k-1$ bit strings are a subset of the bits in which the latter alone already collectively differ. As explained above, this is precisely the situation in which the $x^{l_k}$ bit string labels a vertex inside the polytope defined by the bit strings $x^{l_1},\dots, x^{l_{k-1}}$, thus not raising its dimension and proving that the $x^1,\dots,x^k$ vertices span a polytope of dimension at most $k-2$.
\end{proof}

With this intuition, the statement of corollary \ref{cor:genuinek} essentially says that the only constraints on the contraction map from $k$-graphs that are not already provided by $(k-1)$-graphs come from choices of RHS vertices that generate full-dimensional polytopes. Because the indicator function $i^k_\beta$ effectively measures a $(k-1)$-dimensional volume, it trivializes to an $i^l_\beta$ function with $l<k$ if the object being measured is a polytope of lower dimension $l-1$. Similarly, Proposition \ref{prop:srgraphs} can be rephrased as saying that in an $R$-dimensional RHS hypercube, any choice of vertices will give a polytope of dimension at most $R-1$\footnote{Note that the hypercube is ``hollow''.}, and therefore any $i^k_\beta$ function with $k>R$ will trivialize to some $i^l_\beta$ with $l\leq R$.

This hypercube picture also appeared in \cite{bao2015holographic}, in which it was shown that Theorem \ref{thm:contraction} can equivalently be stated as requiring that a valid contraction map should never increase the graph distance between two points on the hypercube. We suspect that a similar picture naturally emerges here when interpreting the $k$-distance in Theorem \ref{thm:gencontraction} as in the discussion above, and leave it for future exploration.

\subsubsection{Explicit Analysis of Simple Inequalities}\label{sssec:ineqs}

Here we look at some basic inequalities for small party numbers and walk the reader through the logic involved in the application of the hypergraph contraction map method presented in the previous section.

We begin by noting that any hope for an entropic interpretation of hypergraphs demands that they satisfy SA, viz.
\begin{equation}
    S(A) + S(B) \geq S(AB).
\end{equation}
In this simplest case, one can easily see that the RHS is too small to accommodate any contribution to the distance function from $k$-edges with $k\geq3$. Noting that SA is valid for $2$-graphs, application of Proposition \ref{prop:srgraphs} implies that SA is indeed valid for all hypergraphs\footnote{In fact, one might simply argue for the validity of SA as a trivial consequence of the minimality condition in the definition of the discrete entropy (cf. SA in the holographic context).}. Because $2$-graphs already fill up the cone defined by SA alone, it follows that SA is the only inequality on $2$ parties obeyed by hypergraphs.

We now move on to $3$ parties, where $2$-graphs are already entropically more restricted than general quantum states because they satisfy MMI. While MMI is easily violated by hypergraphs, consistency with quantum mechanics still demands that SSA, given in \eqref{eq:ssa}, be a valid inequality. Applying Proposition \ref{prop:srgraphs}, we just need SSA to be valid on $2$-graphs for it to hold in general. Because SSA contains so few terms, its contraction map is completely determined by its occurrence vectors, as in Table \ref{tab:ssa}. It is then simple to verify by inspection that SSA indeed holds for $2$-uniform graphs.
\begin{table}[h]
    \centering
    \begin{tabular}{c|c|c||c|c}
            & $AB$ & $BC$ & $B$ & $ABC$ \\ \hline
        $A$ & 1 & 0 & 0 & 1 \\
        $B$ & 1 & 1 & 1 & 1 \\
        $C$ & 0 & 1 & 0 & 1 \\
        $O$ & 0 & 0 & 0 & 0
    \end{tabular}
    \caption{Tabular representation of the $2$-graph contraction map for SSA. }
    \label{tab:ssa}
\end{table}

Perhaps unsurprisingly, neither MMI nor any of the other known holographic entanglement entropy inequalities for higher party number are obeyed by generic hypergraphs. A simple way of seeing this is to note that all of these inequalities happen to have more entropies on the RHS than on the LHS when written as in \eqref{eq:ineq}. Any such inequality is immediately violated by a GHZ state for a sufficiently high number of parties, corresponding to a single hyperedge of sufficiently high cardinality. As an example, MMI has $4$ terms on the RHS but only $3$ on the LHS, and is thus violated by GHZ states on $4$ or more parties. It is instructive to see how the unique $2$-graph contraction map for MMI, shown in Table \ref{tab:mmi}, captures this fact by failing to obey the contraction property for $4$-edges\footnote{In producing this table, note that the only constraints on the contraction map $f$ one starts with are the bit strings corresponding to boundary vertices $A,B,C,O$ and their images, which violate the contraction property of $i^4$ by themselves. The remaining images of $f$ were allowed to vary only subject to the contraction constraint, but the $i^4$ distance was already doomed to not contract. Nevertheless, the map shown does obey the contraction property for the $i^2$ and $i^3$ distances, thus proving MMI on $3$-graphs.}.

\begin{table}[h]
    \centering
    \begin{tabular}{c | c | c | c||c | c | c | c}
            & $AB$ & $BC$ & $AC$ & $A$ & $B$ & $C$ & $ABC$ \\ \hline
        $O$ & 0 & 0 & 0 & 0 & 0 & 0 & 0 \\
            & 0 & 0 & 1 & 0 & 0 & 0 & 1 \\
            & 0 & 1 & 0 & 0 & 0 & 0 & 1 \\
        $C$ & 0 & 1 & 1 & 0 & 0 & 1 & 1 \\
            & 1 & 0 & 0 & 0 & 0 & 0 & 1 \\
        $A$ & 1 & 0 & 1 & 1 & 0 & 0 & 1 \\
        $B$ & 1 & 1 & 0 & 0 & 1 & 0 & 1 \\
            & 1 & 1 & 1 & 0 & 0 & 0 & 1 \\
    \end{tabular}
    \caption{Tabular representation of the $2$-graph contraction map for the holographic inequality MMI, given in equation \ref{eq:mmi}. It is easy to see by inspection that $3$-edges still obey the contraction property, so that MMI holds for $3$-graphs. This is however no longer true for $4$-edges, which violate the contraction property already at the level of the occurrence vectors. Indeed, these give an $i^4$ distance of $3$ on the LHS, but map to an $i^4$ distance of $4$ on the RHS, thus not contracting.}
    \label{tab:mmi}
\end{table}

One may wonder whether, apart from SA and SSA, any other inequality that is not MMI could hold on hypergraphs for $3$ parties. The answer turns out to be negative and follows from realizability of extreme rays and convexity, as usual. In particular, one notes that the extreme rays of the cone whose facets are SA and SSA correspond to the entropies of Bell pairs, $4$-partite perfect tensors and the $4$-partite GHZ state. Since all of these are realizable by hypergraphs\footnote{Note that Bell pairs and perfect tensors are already realizable by holographic $2$-graphs.}, it follows that SA and SSA are the only facets that bound the hypergraph entropy cone for $3$ parties.

The simplest entropy inequality obeyed by hypergraphs that is not quantum mechanical occurs at $4$ parties and, interestingly, is the Ingleton inequality. The proof of this inequality for hypergraphs is best obtained by direct implementation of the generalized contraction map method explained in section \ref{ssec:cmgen}. In Table \ref{tab:ingleton}, we show that the Ingleton inequality indeed holds on $5$-graphs, and because $R = 5$ for the Ingleton inequality, Proposition \ref{prop:srgraphs} establishes that hypergraphs of arbitrary finite rank obey it as well.

\begin{table}[h]
    \centering
    \begin{tabular}{c|ccccc||ccccc||c}
            & $AB$ & $AC$ & $AD$ & $BC$ & $BD$ & $A$ & $B$ & $CD$ & $ABC$ & $ABD$ & $f_{10}$ \\\hline
        $O$ & 0 & 0 & 0 & 0 & 0 & 0 & 0 & 0 & 0 & 0 & 0 \\
            & 0 & 0 & 0 & 0 & 1 & 0 & 0 & 0 & 0 & 1 & 1 \\
            & 0 & 0 & 0 & 1 & 0 & 0 & 0 & 0 & 1 & 0 & 2 \\
            & 0 & 0 & 0 & 1 & 1 & 0 & 0 & 0 & 1 & 1 & 3 \\
            & 0 & 0 & 1 & 0 & 0 & 0 & 0 & 0 & 0 & 1 & 1 \\
        $D$ & 0 & 0 & 1 & 0 & 1 & 0 & 0 & 1 & 0 & 1 & 5 \\
            & 0 & 0 & 1 & 1 & 0 & 0 & 0 & 0 & 0 & 0 & 0 \\
            & 0 & 0 & 1 & 1 & 1 & 0 & 0 & 0 & 0 & 1 & 1 \\
            & 0 & 1 & 0 & 0 & 0 & 0 & 0 & 0 & 1 & 0 & 2 \\
            & 0 & 1 & 0 & 0 & 1 & 0 & 0 & 0 & 0 & 0 & 0 \\
        $C$ & 0 & 1 & 0 & 1 & 0 & 0 & 0 & 1 & 1 & 0 & 6 \\
            & 0 & 1 & 0 & 1 & 1 & 0 & 0 & 0 & 1 & 0 & 2 \\
            & 0 & 1 & 1 & 0 & 0 & 0 & 0 & 0 & 1 & 1 & 3 \\
            & 0 & 1 & 1 & 0 & 1 & 0 & 0 & 0 & 0 & 1 & 1 \\
            & 0 & 1 & 1 & 1 & 0 & 0 & 0 & 0 & 1 & 0 & 2 \\
            & 0 & 1 & 1 & 1 & 1 & 0 & 0 & 0 & 0 & 0 & 0 \\
            & 1 & 0 & 0 & 0 & 0 & 0 & 0 & 0 & 0 & 1 & 1 \\
            & 1 & 0 & 0 & 0 & 1 & 0 & 0 & 0 & 1 & 1 & 3 \\
            & 1 & 0 & 0 & 1 & 0 & 0 & 0 & 0 & 1 & 1 & 3 \\
        $B$ & 1 & 0 & 0 & 1 & 1 & 0 & 1 & 0 & 1 & 1 & 11 \\
            & 1 & 0 & 1 & 0 & 0 & 0 & 0 & 0 & 1 & 1 & 3 \\
            & 1 & 0 & 1 & 0 & 1 & 0 & 0 & 0 & 0 & 1 & 1 \\
            & 1 & 0 & 1 & 1 & 0 & 0 & 0 & 0 & 0 & 1 & 1 \\
            & 1 & 0 & 1 & 1 & 1 & 0 & 0 & 0 & 1 & 1 & 3 \\
            & 1 & 1 & 0 & 0 & 0 & 0 & 0 & 0 & 1 & 1 & 3 \\
            & 1 & 1 & 0 & 0 & 1 & 0 & 0 & 0 & 0 & 1 & 1 \\
            & 1 & 1 & 0 & 1 & 0 & 0 & 0 & 0 & 1 & 0 & 2 \\
            & 1 & 1 & 0 & 1 & 1 & 0 & 0 & 0 & 1 & 1 & 3 \\
        $A$ & 1 & 1 & 1 & 0 & 0 & 1 & 0 & 0 & 1 & 1 & 19 \\
            & 1 & 1 & 1 & 0 & 1 & 0 & 0 & 0 & 1 & 1 & 3 \\
            & 1 & 1 & 1 & 1 & 0 & 0 & 0 & 0 & 1 & 1 & 3 \\
            & 1 & 1 & 1 & 1 & 1 & 0 & 0 & 0 & 0 & 1 & 1 \\
    \end{tabular}
    \caption{Tabular representation of a contraction map which proves that the Ingleton inequality is valid on hypergraphs. The right-most column $f_{10}$ is a compact decimal-base representation of the map images understood as digits of a binary number (e.g. $10011_2 = 19_{10}$ for the occurrence image of $A$). Given an arbitrary inequality, with entropies canonically ordered lexicographically and the map domain in increasing order, the list of values of $f_{10}$ is a succinct specification of the contraction map.}
    \label{tab:ingleton}
\end{table}

\subsection{Explicit Constructions of the Hypergraph Cone}\label{ssec:hypergraphcone}

To study the hypergraph entropy cone more systematically, we take two lines of inquiry. One approach is to ask what candidate inequalities might bound the hypergraph cone and try to prove them using the contraction map method. The other is to ask which candidate entropy vectors may be inside the hypergraph cone and try to realize them using hypergraphs by solving a particular integer linear program. Although there are many drawbacks to this strategy, improvements to it are scarce and only heuristic, which is one of the reasons why constructing this and other entropy cones is hard. The most obvious limitation is that in principle we do not actually have a systematic way of generating ``good'' candidate inequalities and entropy vectors. In practice, however, we have found a powerful heuristic. The fact that hypergraphs obey the Ingleton inequality, the simplest of the well-studied family of linear rank inequalities, suggests a connection between hypergraphs and the QLR cone (see section \ref{ssec:lri}). Exploiting this connection turns out to be a remarkably fruitful direction and strongly suggestive that this connection is no accident.

In what follows, we provide an explicit construction of the hypergraph entropy cone for $4$ parties, which allows us to prove that it identically matches the QLR and stabilizer cones at the same party number. For $5$ parties, we take as candidate inequalities the facets of the QLR cone and as candidate rays the extreme rays of the CLR cone\footnote{Unfortunately, the polyhedral conversion from the facet description of the QLR to its extreme ray description was out of the scope of our computational resources. We believe that, should one compute them, the extreme rays of the QLR cone will also be good candidates and likely realizable by hypergraphs too.}. We are able to realize all such extreme rays and to partially prove all such inequalities, which strongly hints at an agreement between the hypergraph and QLR cones for $5$ parties\footnote{Our proofs are partial because checking the contraction property on $i^k$ distance for all $k\leq R$ is computationally costly when $R$ is large. However, we are able to find contraction maps for all inequalities up to $k=5$ and all but four up to $k=6$. In all cases, we find agreement with conjecture \ref{con:npartyrank}.}.

\subsubsection{The 4-Party Hypergraph Cone}
With the preliminaries in place, we are now ready to pursue hypergraph instantiations of the extreme rays of the $4$-party QLR cone defined by SA, SSA and Ingleton. Given that we have proven these inequalities on hypergraphs and that we have also found such hypergraph realizations for all extreme rays, we are able to provide a complete description of the hypergraph cone for $4$ parties. Hypergraph representatives that realize these extreme rays are shown in Figure \ref{fig:extreme_hypergraph}. One may easily verify that they precisely reproduce the entropies of the extreme rays of the $4$-party stabilizer cone given in \cite{linden2013quantum}. We hence conclude that the $4$-party hypergraph cone and the $4$-party stabilizer cone coincide.
\begin{figure}
    \centering
    \includegraphics[width=\textwidth]{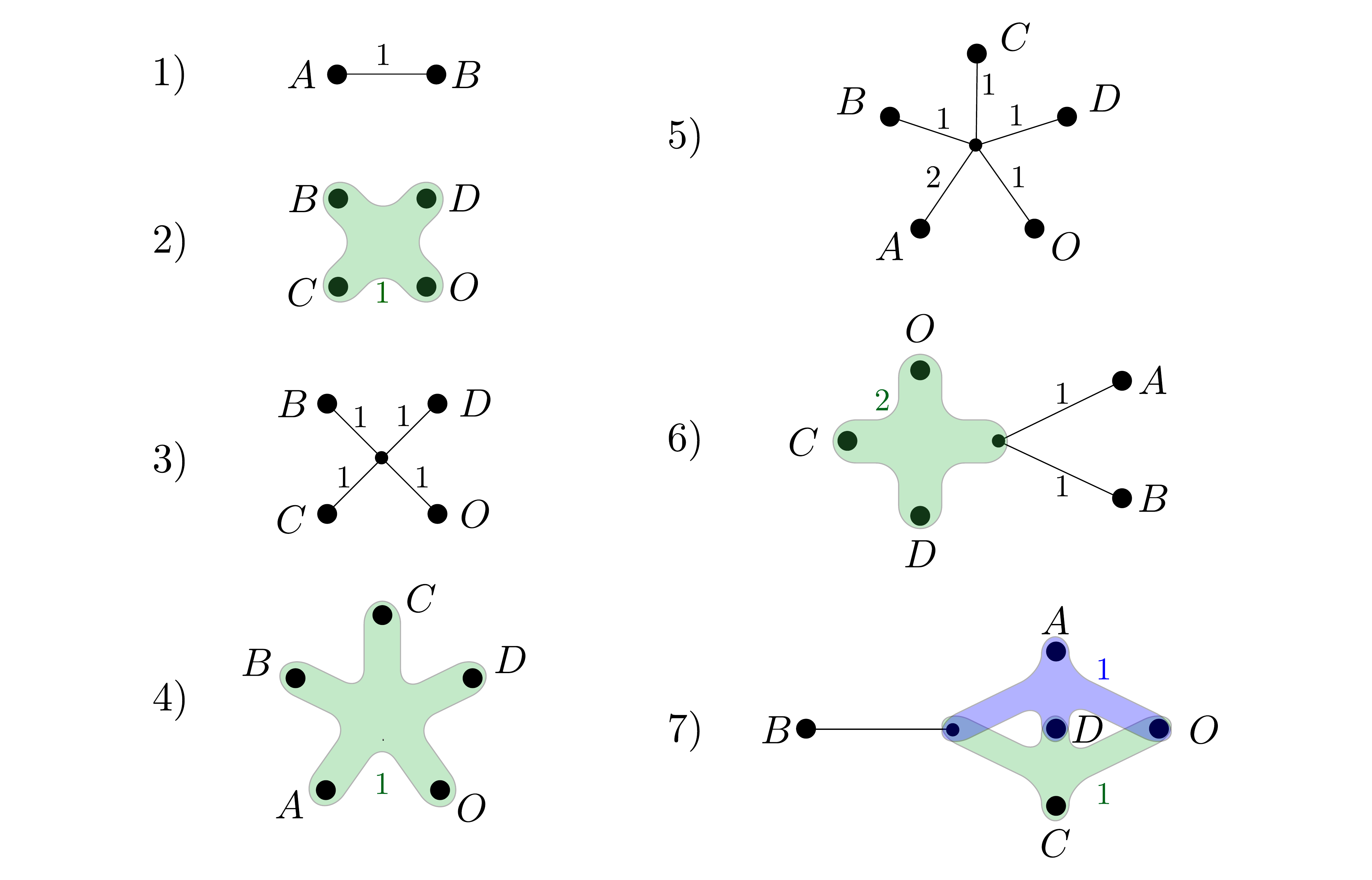}
    \caption{Hypergraphs corresponding to each symmetry orbit of extreme rays of the $4$-party QLR cone, which notably coincides with the stabilizer cone. We have organized them by their labels as enumerated in \cite{linden2013quantum} for the stabilizer cone. Similar to the holographic case, note that very few bulk vertices are needed -- in fact, at most one. These are denoted by smaller, unlabeled points. As can be seen, some of the families of extreme rays admit realizations with no hyperedges, while others (such as the GHZ state) genuinely require hyperedges for their construction.} 
    \label{fig:extreme_hypergraph}
\end{figure}

\subsubsection{The 5-Party Hypergraph Cone}

We have been able to construct hypergraphs that realize representatives of every single one of the $162$ orbits of extreme rays of the CLR cone for $5$ parties, a complete description of which was found in \cite{Dougherty2009}. While an explicit listing of these hypergraph realizations in the present paper would be rather cryptic and thus omitted, full details are available as supplemental material upon request. Here we simply note that the minimal number of bulk vertices needed to realize all of these extreme rays is always very small: $5$ rays are realized by hypergraphs with zero bulk vertices, $27$ with one, $105$ with two and $25$ with three, which is the largest number of bulk vertices required.

Since hypergraphs reach all extreme rays of the CLR cone and additionally violate classical monotonicity, we conclude that the hypergraph cone is strictly larger than the CLR cone. Note that all facets of the CLR cone except monotonicity may still be valid inequalities in consistency with our findings of extreme ray hypergraphs. This is indeed what seems to be true for all those facets. In Appendix \ref{sec:proofs}, we report on progress towards proving the $31$ inequivalent $5$-party QLR inequalities for hypergraphs\footnote{The CLR cone consists of $33$ inequalities, but one of them is monotonicity and two others are related by the purification symmetry of quantum mechanics, reducing the list of inequalities to $31$ for the QLR cone.}. The primary barrier to declaring all inequalities as proven happens to be the computational cost of checking the contraction property on $k'$-graphs with large $k'> k$ given a contraction map for some small $k$. Indeed, we observe that in practice it suffices to find a contraction map on $k$-graphs with $k=n=5$. Then, one checks that the contraction property holds on higher ranks $k' > k$ too, as qualitatively expected from Corollary \ref{cor:genuinek} and in suggestive agreement with Conjecture \ref{con:npartyrank}. Using Proposition \ref{prop:srgraphs}, we have been able to conclusively prove $14$ out of all $31$ inequalities that define the QLR cone. Moreover, if one assumes the validity of Conjecture \ref{con:npartyrank}, our contraction maps also prove validity of $12$ additional inequalities.

\section{Hypergraphs as Quantum States}\label{sec:hyperstates}
Up to this point, we have worked exclusively with entropy vectors, arguing that the discrete entropy of hypergraph models either does or does not obey certain inequalities. To the extent of what is known about the quantum entropy cone, this approach has allowed us to prove that the hypergraph entropy cone is not only contained in the former, but strictly inside. This is of course suggestive of the idea that the discrete entropy on hypergraphs really is computing the entropies of actual quantum states.

In this section, we will support this notion in a more direct fashion by providing a prescription for constructing quantum states from a given hypergraph, and conjecture that the resulting state faithfully realizes the exact same entropy ray as the original hypergraph. In doing so, we attempt to elevate hypergraphs from simply a diagrammatic representation of entropy vectors to a detailed encoding of physical quantum states with prescribed entropic properties. While we are as of yet unable to prove that the resulting state entropies always match the discrete entropies of the given hypergraph, we provide some physical motivation for our construction as well as perform non-trivial checks of it. Notably, the ingredients we utilize are all operations allowed for stabilizer states, further hinting that the hypergraph cone is intimately related to the stabilizer cone.

\subsection{From Hypergraphs to Quantum States}\label{ssec:states}

The prescription for how to associate a quantum state to a given hypergraph is inspired by the tensor network formalism. The strategy consists of interpreting the hypergraph literally as a tensor network, where the rules that one associates to vertices and hyperedges are in principle unknown. In what follows, we provide a set of ``Feynman rules'' which, we conjecture, allows one to build a quantum state out of an arbitrary hypergraph whose subsystem von Neumann entropies match the hypergraph discrete entropies.

Given a hypergraph $G_K$ with $K=n+1$ boundary vertices, a quantum state realization of it will be written 
\begin{equation}
\Ket{G_K} = \mathcal{G}^{\si_1 \cdots \si_K} \Ket{\si_1 ,\dots, \si_K},
\end{equation}
where each index $\si_i$ ranges over some basis of the local Hilbert space to be associated to the $i^\text{th}$ party, and the Einstein summation convention is implicit. Because all the information about the state $\Ket{G_K}$ is encoded in the tensor $\mathcal{G}^{\si_1 \cdots \si_K}$ of basis coefficients, the latter alone will be used to compactly specify a quantum state realization of a given hypergraph $G_K$.

\subsubsection{Building Block Tensors}\label{sssec:bbt}

Observe that the graph structure of the neighborhood of any bulk vertex, meaning the vertex and the hyperedges attached to it alone, is just that of a star graph. In other words, each $k$-edge containing the given vertex may be thought of as a leg of the star between the vertex and the other $k-1$ vertices in the hyperedge. When considering discrete entropies given by min-cuts, these two pictures are equivalent under the inclusion or exclusion of the given vertex.
Therefore one may think of an arbitrary hypergraph as a mosaic or network of star graphs glued together in some non-trivial way so as to reproduce the correct entropic behavior. This intuition motivates the search for a general quantum mechanical realization of general star graphs and the gluing mechanism thereof. In hindsight of the proposed strategy, we now introduce two ingredients which will be crucial in our construction of quantum states for a given hypergraph: absolutely maximally entangled (AME) states \cite{Facchi_2008} to realize the local entanglement around bulk vertices and GHZ states \cite{greenberger1989going} to capture the general behavior of a hyperedge.

\begin{figure}
    \centering
    \includegraphics[width=.5\textwidth]{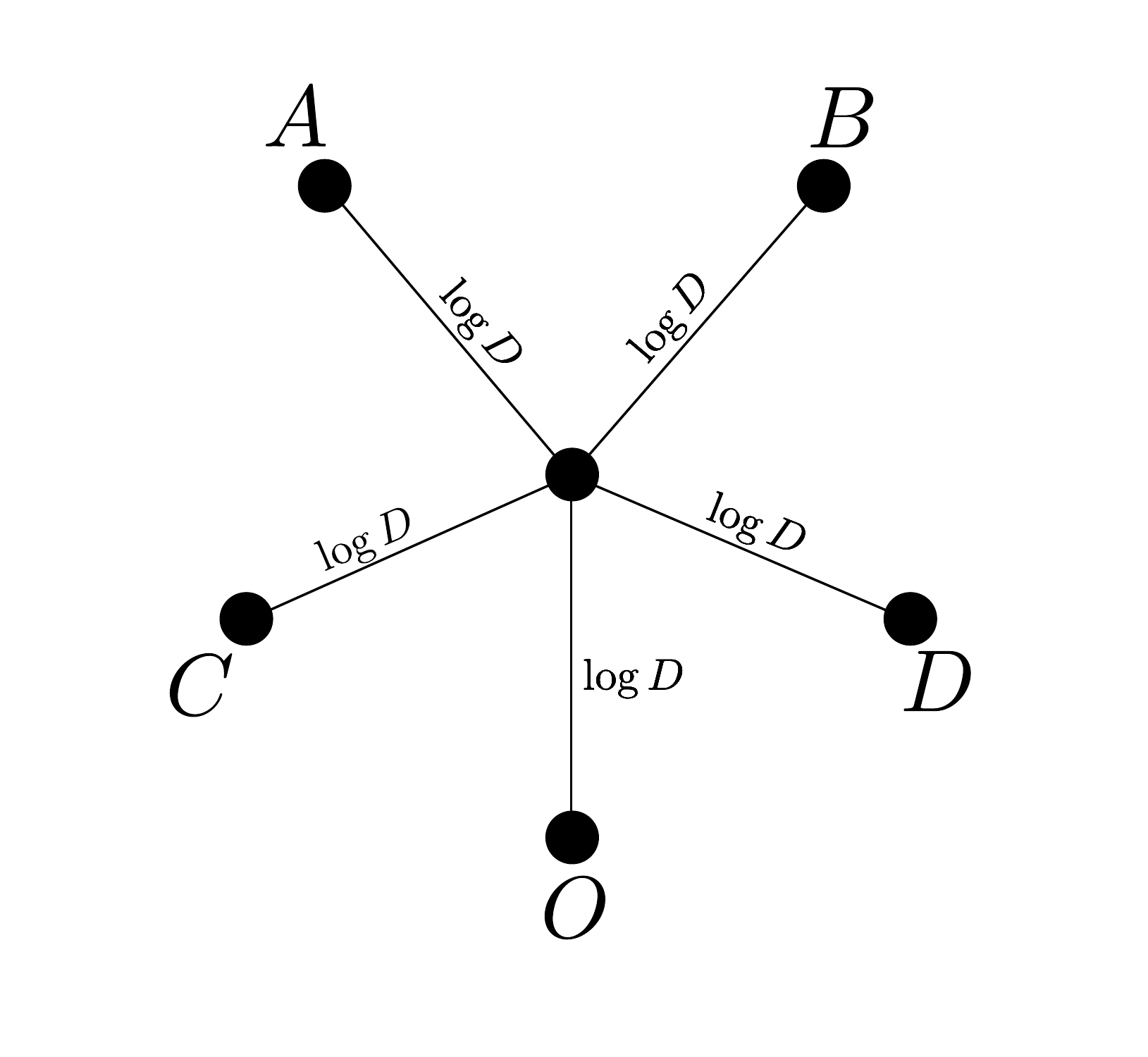}
    \caption{Graphical representation of an AME($5,D$) state. The local dimension of each of the parties is $D$. For $D=2$, one could think of each of the parties as being a qubit. The key property of these AME states is that any subsystem looks maximally entangled. The \ref{eq:AMEentropy} is reproduced when considering min-cuts of this graph since at most, half of the edges need to be cut. When combined with GHZ hypergraphs as in Figure \ref{fig:ghz_hypergraph}, we can construct stabilizer states that reproduce a given entropy vector, indicating that all hypergraphs have a physical realization.}
    \label{fig:building_blocks}
\end{figure}

A pure quantum state is said to be AME if it is maximally entangled for all bipartitions of the system. These states are organized into subclasses $\text{AME}(\Omega,D)$ of pure states on $\Omega$ parties, each having local Hilbert space dimension $D$. It is worth noting that while some such classes are empty, there is always some $D$ for which an $\Omega$-partite AME state exists\footnote{For example, there exists no quantum state in $\text{AME}(4,2)$ \cite{Gour_2010}, but a qutrit system can be built to realize a $4$-partite AME, implying that $\text{AME}(4,3)$ is non-empty \cite{Helwig_2012}.}\cite{Helwig_2012}. It immediately follows from the definition that the von Neumann entropy $S(I)$ of a subsystem $I\subseteq [n]$ in an $\text{AME}(\Omega,D)$ state is given by
\begin{equation}\label{eq:AMEentropy}
    S(I) = \min \{\abs{I}, \Omega-\abs{I}\} \log D, 
\end{equation}
where the cardinality $\abs{I}$ corresponds to the number of parties in $I$. The point of introducing AME states is that their entropies precisely match those of a star graph with $\Omega$ edges of uniform weight $\omega = \log D$. To see this, note that because a star graph just consists of a single bulk vertex and $2$-edges connecting it to each of the $K$ boundary vertices, one need only consider two candidates to minimal cuts for a given $I\subseteq[\Omega]$: one that includes the bulk vertex, and one that does not. Their respective cut weights are $\abs{I}\omega$ and $(\Omega-\abs{I})\omega$, thus demonstrating the agreement with \eqref{eq:AMEentropy} under the min-cut prescription in \eqref{eq:discreteentropy}.

In general, the star graph that reproduces the local entropic structure of an arbitrary bulk vertex will be one of non-uniform weights, and therefore the association of one edge to each party would not immediately yield the entropies of an AME state. It is however a simple matter to bring such a star graph into a form suitable for the use of AME states as well. One just needs to think of a star graph that has $\ell$ edges of weights $\omega_i$ and total weight $\Omega = \sum_{i=1}^\ell \omega_i$ as arising from a uniform star with $\Omega$ edges of unit weight grouped together in parallel into $\ell$ sets of $\omega_i$ edges\footnote{Here, as will be done henceforth, weights are assumed rational such that graph weights are rescalable to integer numbers while preserving entropy rays.}. By associating a state in $\text{AME}(\Omega,D)$ to the latter and coloring its $\Omega$ subsystems non-injectively by $i=1,\dots\ell$ as dictated by the edge groupings, one precisely gets the desired entanglement structure among colored parties\footnote{Different parties may end up with different local Hilbert space dimension, which will be given by $D^{\omega_i}$.}.

A rank-$\Omega$ tensor that corresponds to a state in $\text{AME}(\Omega,D)$ will be denoted by $T^{I_1\cdots I_\Omega}$, with every tensor index $I_i = 1,\dots,D$. If $\Omega$ is even, and if it exists for the given $D$, then $T^{I_1\cdots I_\Omega}$ is also known as a \textit{perfect tensor}. These tensors are defined as follows: consider a Hilbert space $\mathcal{H}$ and an arbitrary bipartite factorization $\mathcal{H}=\mathcal{H}_A\otimes\mathcal{H}_B$. Then a perfect tensor $T$ can be viewed as a map $\mathcal{H}_A \to \mathcal{H}_B$ which is an isometry $T^{\dagger}T = I_A$ for any relevant choice of bipartition with $\abs{\mathcal{H}_A}\leq\abs{\mathcal{H}_B}$. Such tensors have gained prominence in recent years due to their appearance in tensor networks and error correcting codes that serve as toy models of holography \cite{Pastawski_2015,Yang:2015uoa,Hayden:2016cfa}. These tensor networks capture the leading order structure of the entanglement entropy of holographic systems, namely that they obey the RT formula, and so it is perhaps not surprising that they should appear in our attempt to convert hypergraphs to quantum states. 
As an explicit state vector, any AME$(\Omega,D)$ state can be written in the form
\begin{equation}\label{eq:AMETensor}
    \ket{\text{AME}(\Omega,D)} = \frac{1}{\sqrt{d^s}}\sum_{k\in\mathbb{Z}_D^{m}}\ket{k_1}_{A_1}\ldots\ket{k_m}_{A_m}\ket{\phi(k)}_B,
\end{equation}
for some bipartition of the system into $A = \{A_1,\dots, A_m\}$ and $B=\{B_1,\dots,B_{\Omega-m}\}$ with $m\leq \Omega-m$ and $\Braket{\phi(k)|\phi(k')} = \delta_{kk'}$. Then the AME condition is the statement that the state in question can be written in the above form for \textit{any} bipartition $A,B$ with $2m \leq \Omega$. We note here that this abstract representation of an AME state does not fix its exact form: for instance, the application of any set of $\Omega$ local unitaries acting on each of the subsystems would leave the form of \eqref{eq:AMETensor} unchanged. We will remain agnostic as to which choice of local basis is the correct one and comment on the lack of uniqueness in our construction of states from hypergraphs shortly.

The introduction of GHZ states to account for hyperedges is motivated by the observation made in section \ref{sec:hypergraphcone} that a $\Omega$-partite qu-$D$-it GHZ state is entropically equivalent to a hypergraph with a single $\Omega$-edge of weight $\omega = \log D$ (see Figure \ref{fig:ghz_hypergraph}). The tensor representation of any such state will be denoted by $\tilde{T}^{J_1\cdots J_\Omega}$, with every tensor index $J_i = 1,\dots,D$. It is straightforward to write this down for arbitrary $\Omega$ and $D$ as
\begin{equation}\label{eq:GHZtensor}
\tilde{T}^{J_1,\cdots,J_\Omega} = \frac{1}{\sqrt{D}} \delta^{J_1 \cdots J_\Omega} \qquad \text{where} \quad \delta^{J_1 \cdots J_\omega} \coloneqq
\begin{cases}
1 & \text{if} \quad J_1=\dots=J_\Omega, \\
0 & \text{otherwise.}
\end{cases}
\end{equation}
The explicit state vector in the computational basis is
\begin{equation}
    \Ket{\text{GHZ}(\Omega, D)} = \frac{1}{\sqrt{D}}\sum_{i=0}^{D-1}\Ket{i}_1\ldots\Ket{i}_{\Omega}
\end{equation}
As defined, one may consider the case $\Omega=2$ as corresponding to a Bell pair between two qu-$D$-its, which takes the form of the $D$-dimensional identity matrix $\mathds{1}_D$, up to a normalizing prefactor $1/\sqrt{D}$. Additionally, one notices that GHZ states of arbitrary local dimension $D$ are $\text{AME}(\Omega,D)$ states for both $\Omega\in\{2,3\}$, but are no longer so for any $\Omega\geq4$.

\subsubsection{Hypergraph Feynman Rules}

Since the hypergraph entropy cone is polyhedral, we will focus on hypergraphs with rational weights, so that there always exists a finite scaling factor that allows us to uniformly scale all weights to be integer-valued\footnote{Note that this only causes a scalar rescaling of its entropy vector, thus leaving the ray itself invariant.}. Starting with the rescaled hypergraph, one performs a simple graph operation which can be easily seen to preserve all entropies while bringing the hypergraph to a more convenient form. This operation consists of replacing every hyperedge of weight $\omega$ by as many parallel hyperedges of unit weight, where by parallel we mean that each of the latter $\omega$ hyperedges of unit weight consists of precisely the same subset of vertices as the former. This manipulation sets the stage for a direct application of AME states to account for the local entanglement structure of bulk vertices.

To make sense of the tensor network interpretation of hypergraphs, one not only needs tensors, but also a prescription for contracting their indices. By associating AME tensors to bulk vertices and GHZ tensors to hyperedges, the first part of this challenge is mostly resolved. To attack the second part, note that a $k$-edge may be thought of as some virtual \textit{node} consisting of $k$ \textit{legs} between the node and each vertex\footnote{Because these are artificial notions which do not belong to the graph, note the use of the words \textit{node} instead of \textit{vertex}, and \textit{leg} instead of \textit{edge}.}. To each leg one may assign a different index of the GHZ tensor. Then, in a general situation in which a hyperedge meets a bulk vertex, one would contract the appropriate GHZ tensor index with whichever AME tensor index corresponds to that edge from the perspective of the vertex. For concreteness, here we summarize the rules for building the tensor $\mathcal{G}^{\si_1 \cdots \si_K}$ for a given hypergraph $G_K$:
\begin{enumerate}
	\item\label{bulkvertexrule} At each bulk vertex of degree $d$ insert a rank-$d$ AME tensor $T^{I_1 \cdots I_d}$, and assign a different tensor index to each one of the hyperedges containing it. 
	\item\label{edgerule} At each hyperedge of cardinality $k$ insert a rank-$k$ GHZ tensor $\tilde{T}^{J_1 \cdots J_k}$, and assign a different tensor index to each one of the vertices it contains. 
	\item \label{contractionrule} To every pair $s=(e,v)\in E \times V$ such that $v \in e$ and $v\notin\partial V$, there correspond two tensor indices: $I(s)$ assigned to $e$ for containing $v$ (rule \ref{bulkvertexrule}) and $J(s)$ assigned to $v$ for being contained in $e$ (rule \ref{edgerule}). These two indices are contracted across a Hadamard matrix $H$ of the appropriate order.
	\item \label{partyrule} To every pair $s=(e,v)\in E \times V$ such that $v \in e$ and $v\in\partial V$, there correspond a single tensor index $J(s)$ assigned to $v$ for being contained in $e$ (rule \ref{edgerule}). This index is added to the collective $i^{\text{th}}$ party index $\si_i$ of color $i = b(v)$.
\end{enumerate}

One may worry that the contraction of indices $I(s)=1,\dots,D_I$ and $J(s)=1,\dots,D_J$ instructed by step \ref{contractionrule} may be ill-defined if their ranges happen to disagree, i.e. if $D_I \neq D_J$. Because $\Omega$-partite GHZ tensors \eqref{eq:GHZtensor} exist for any $D$, the subtlety lies on how to choose $D$ in a compatible way across all AME tensor insertions. In fact, one just has to choose a value for $D$ such that the $\text{AME}(d,D)$ class is non-empty for all degrees $d$ of bulk vertices in the hypergraph\footnote{A simple way to see that this is always possible is to pick an $\text{AME}(d,D_d)$ state of arbitrary $D_d$ for each degree-$d$ bulk vertex, and then insert a tensor product of $L(D)/D_d$ copies of each at rule \ref{bulkvertexrule}, where $L(D)$ is the lowest common multiple of all $D_d$ considered.}. Then, all tensor insertions performed following rules \ref{bulkvertexrule} and \ref{edgerule} can be chosen to have local Hilbert space dimension $D$, guaranteeing compatibility in rule \ref{contractionrule} across Hadamard matrices of order $D$.

In the examples we considered, the actual choice of local basis for the realization of the tensor $T$ in rule \ref{bulkvertexrule} was important for reproducing the correct entropy vector. As mentioned below \eqref{eq:AMETensor}, we were unable to find a universal rule for which specific representative tensor of a given AME class one should employ, but we observed that for each graph we considered, there was always some consistent choice that yielded a state with the desired entropies. We therefore believe that an appropriate completion of rule \ref{bulkvertexrule} that fixes this ambiguity should exist and formulate the following conjecture:

\begin{conjecture}\label{con:gstates}
    There exists an appropriate refinement of rule \ref{bulkvertexrule} such that the entropies of all subsystems of the $K$-party quantum state $\Ket{G_K}$ constructed from a hypergraph $G_K$ as described above agree precisely with the entropies obtained via the min-cut prescription applied to $G_K$, up to a common scaling factor.
\end{conjecture}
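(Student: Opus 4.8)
The plan is to prove the conjecture by working entirely within the stabilizer formalism, where the combinatorial min-cut data of the hypergraph can be matched to a linear-algebraic rank computation. First I would invoke the preprocessing already built into the Feynman rules: after rescaling to integer weights and splitting every hyperedge into parallel unit-weight copies, the state $\Ket{G_K}$ is assembled from AME and GHZ tensors glued by Hadamard (Clifford) contractions. Choosing any stabilizer representative for each AME insertion, the resulting $\Ket{G_K}$ is manifestly a stabilizer state on qu-$D$-its, so its entropy vector automatically lies in the stabilizer cone. For a stabilizer state the entropy of a region $A$ is $S(A) = (\abs{A} - d_A)\log D$, where $d_A$ is the number of independent stabilizer generators supported entirely on $A$; the whole problem thus reduces to showing that the local bases (the refinement of rule \ref{bulkvertexrule}) can be chosen so that $d_A = \abs{A} - \abs{C(W)}$ for the min-cut $W$ of every boundary region.

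The upper bound $S(A) \le \abs{C(W)}\log D$ for every cut $W$ is the easy direction and, crucially, is basis-independent. I would establish it by cutting the network along $W$ and using that each AME tensor restricts to an isometry from its incoming to its outgoing legs: composing these isometries expresses the reduced state on $A$ as supported on a space of dimension at most $D^{\abs{C(W)}}$, one factor of $D$ per unit-weight edge crossing the cut. Minimizing over $W$ gives $S(A) \le \abs{C(W)}\log D$ at the min-cut. Because this argument only uses the isometric property shared by \emph{every} representative of the AME class, it holds for any choice made in rule \ref{bulkvertexrule}.

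The hard direction is the matching lower bound, and this is exactly where the basis refinement is forced upon us. In the symplectic picture each AME tensor is a Lagrangian subspace of $\mathbb{F}_D^{2d}$ and each GHZ hyperedge contributes its diagonal ``all-indices-equal'' constraints together with one global $X$-type stabilizer, while gluing is intersection and projection of these subspaces. I would recast $d_A$ as the rank of an explicit $\mathbb{F}_D$-matrix built from the chosen Lagrangians, exhibiting the conjectured refinement as a choice of local subspaces in general position so that no accidental stabilizer supported on $A$ appears, i.e. so that no destructive interference lowers the entropy below the capacity of the min-cut. Concretely I would try a max-flow/min-cut argument adapted to hypergraphs: from a maximum integer flow of value $\abs{C(W)}$ construct $\abs{C(W)}$ independent Pauli operators supported on $A$ that pair nontrivially under the symplectic form with operators supported on $A^\complement$, forcing $d_A \le \abs{A} - \abs{C(W)}$, which combines with the upper bound to give equality.

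The main obstacle is concentrated entirely in this lower bound, and it is twofold. First, the GHZ tensors attached to hyperedges of cardinality $\ge 4$ are \emph{not} isometries, so the clean perfect-tensor greedy/flow argument of the holographic tensor-network literature does not transfer directly; one must track how each non-isometric GHZ insertion deforms the global stabilizer group and verify it never creates a spurious stabilizer supported on $A$. Second, and more seriously, a single choice of local bases must simultaneously realize the correct rank for all $2^{K}-1$ subsystems, so one cannot optimize region by region. I expect the viable route is inductive: first prove the statement for pure AME (graph) networks, thereby recovering the known holographic min-cut result, and then add hyperedges one at a time, showing at each step that a generic perturbation of the surrounding AME bases preserves all previously-correct region entropies while installing the GHZ contribution, essentially a Zariski-openness (genericity) argument over $\mathbb{F}_D$ for sufficiently large $D$. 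Making this genericity argument hold \emph{uniformly} across all regions, rather than region by region, is the crux that keeps the statement at the level of a conjecture.
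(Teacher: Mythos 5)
The statement you set out to prove is Conjecture \ref{con:gstates}, and the paper itself offers no proof of it: the authors explicitly state that an analytical argument ``remains elusive,'' and support the conjecture only by the construction itself, the observation that it produces stabilizer states when stabilizer AME representatives are chosen, and direct verification on explicit examples ($\mathcal{G}_{R_8}$, $\mathcal{G}_{R_{12}}$ and $\mathcal{G}_{\text{CLR}_5}$). Your proposal, by your own admission in its final paragraph, is likewise not a proof: the entire content of the conjecture --- that a \emph{single} refinement of rule \ref{bulkvertexrule} makes the min-cut lower bound hold simultaneously for all $2^K-1$ subsystems --- is exactly the step you leave open. So the gap is concrete and fatal to the proposal as a proof: the matching lower bound $S(A)\geq\abs{C(W_{\min})}\log D$, established uniformly in $A$ by one basis choice, is never obtained; your flow-based and genericity strategies are only sketched, and the first of your own listed obstacles (GHZ tensors of cardinality $\geq 4$ are not isometries for general bipartitions, so the holographic greedy/flow arguments do not transfer) is precisely why no known argument closes it.

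That said, your partial progress is sound and in places goes beyond what the paper makes explicit. The upper bound $S(A)\le\abs{C(W)}\log D$ for every cut $W$ is correct and basis-independent --- indeed it needs no isometry property at all, only the observation that a cut hyperedge contributes Schmidt rank at most $D$ across the bipartition regardless of how its legs are split, which is exactly what makes the hypergraph cut rule \eqref{eq:cuthyperedges} the right combinatorial object. Two caveats you gloss over: (i) the paper notes that the Hadamard contraction of stabilizer tensors yields a stabilizer state \emph{or vanishes}; any completed argument must rule out the vanishing case, which is itself a basis-genericity issue; (ii) the formula $S(A)=(\abs{A}-d_A)\log D$ and the symplectic picture over $\mathbb{F}_D$ are clean only for prime $D$, whereas the construction's compatibility condition (non-emptiness of all required AME classes) may force composite $D$, so this part of the formalism needs care. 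Your diagnosis of where the difficulty lives --- non-isometric GHZ insertions plus the requirement that one local-basis choice work for all regions at once --- accurately matches the reason the authors left the statement as a conjecture rather than a theorem.
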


An analytical argument for this conjecture remains elusive, but explicit computations of all hypergraphs so far have indeed yielded states whose subsystem entropies precisely match the hypergraph discrete entropies. We provide some examples in the next section.

The connection to stabilizer states is also manifest in this procedure. Assuming that the AME state inserted at every bulk vertex is a stabilizer state \cite{Raissi:2018bri,Bryan:2018zsx,Mazurek:2019xph}, the state generated by these rules will be a contraction of stabilizer states (the stabilizer AME and GHZ states) and a stabilizer gate (Hadamard), and hence the resulting state will either be a stabilizer state or vanish. The validity of this conjecture would then immediately imply that hypergraph states constructed in this fashion are stabilizer states, and therefore the hypergraph entropy cone would be a subset of the stabilizer entropy cone. Note, however, that entropy vectors do not uniquely specify quantum states; provided Conjecture \ref{con:gstates} is correct, the method above would produce a state with the correct subsystem entropies from any hypergraph, but it would not necessarily generate every single stabilizer state within the hypergraph cone.

As a final remark, we note the connection between the construction above and holographic tensor networks. Such tensor networks may be thought of as $2$-graphs, in which case rule \ref{edgerule} on standard edges reduces to the insertion of maximally entangled rank-$d$ Bell pairs. Choosing their tensors to be $d$-dimensional identity matrices, the index contraction dictated by rule \ref{contractionrule} of two symmetric Hadamard matrices accross them then trivializes to the standard identity map contraction of vertex tensor indices accross $2$-edges. The upshot is a network of AME tensors of uniform local dimension joined together across edges via direct index contractions. In cases where the AME tensors are perfect tensors one then recovers the holographic tensor networks appearing in \cite{Pastawski_2015}, which themselves are special cases of the more general random stabilizer tensor networks in \cite{Yang:2015uoa,Hayden:2016cfa} obtained by replacing the perfect tensor or AME states with random stabilizer states.

\subsection{Explicit Constructions of Hypergraph States}

As a warm-up, consider first the construction of standard $2$-graph states, such as those which realize the extreme rays of the holographic entropy cone. The simplest of these which is not just a star graph (and thus realizable merely by a single AME state) is shown on the left of Figure \ref{fig:five_party_hec} (graph $8$ in Figure $1$ of \cite{hernandezcuenca2019})\footnote{We thank Michael Walter for the idea to use tensor networks to reproduce the entropies of this graph.}. As per rule \ref{bulkvertexrule}, this graph requires the insertion of two $4$-partite perfect tensors which, as remarked at the end of the previous section, should just be directly contracted along the $2$-edge that connects the two bulk vertices.

The lowest-dimensional realization of a $4$-partite perfect tensor is a $4$-qutrit state whose simplest tensor representation reads\footnote{As shown in section \ref{sssec:bbt}, note that this state realizes the entropies of graph $2$ in Figure $1$ of \cite{hernandezcuenca2019}.}
\begin{equation}\label{eq:4ptpt}
T^{ijkl} = \frac{1}{3} \; \delta^{k,i\oplus j} \delta^{l,i\oplus 2j},
\end{equation}
where $x \oplus y \coloneqq x+y \; (\modulo 3)$. Contracting two copies of this tensor on one respective index, the desired tensor representation of a quantum state for this ray is
\begin{equation}
    \mathcal{G}_{R_8}^{\mathscr{OABCDE}} = \sqrt{3} \; T^{o a b k} \tensor{T}{^{c}^{d}^{e}_k},
\end{equation}
where in this case the collective indices are simply given by $\mathscr{X} = \{ x \}$. For concreteness, the quantum state associated to $\mathcal{G}_{R_8}$ may be written out explicitly as
\begin{equation}
\Ket{G_{R_8}} = \frac{1}{3\sqrt{3}} \sum_{i,j,m,n=0}^{2} \delta^{i\oplus 2j,m\oplus 2n} \Ket{i}_O \Ket{j}_A \Ket{i\oplus j}_B \Ket{m}_C \Ket{n}_D \Ket{m\oplus n}_E.
\end{equation}
One then easily verifies that $\Ket{G_{R_8}}$ gives precisely the desired entropies corresponding to ray $8$ in Table $3$ of \cite{hernandezcuenca2019}.

\begin{figure}
    \centering
    \begin{subfigure}[c]{0.45\textwidth}
    \includegraphics[width=\textwidth]{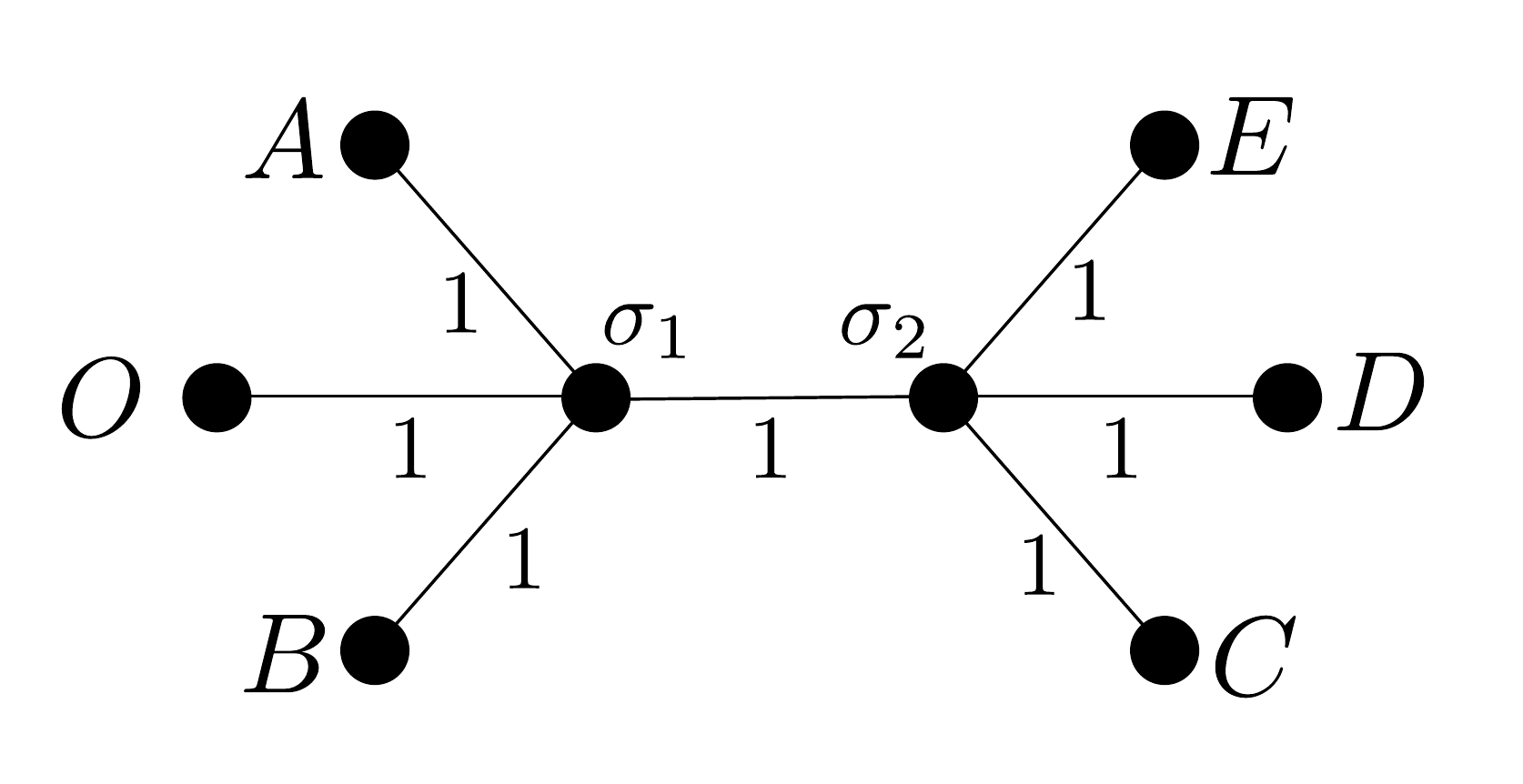}
    \end{subfigure}
    \begin{subfigure}[c]{0.5\textwidth}
    \includegraphics[width=\textwidth]{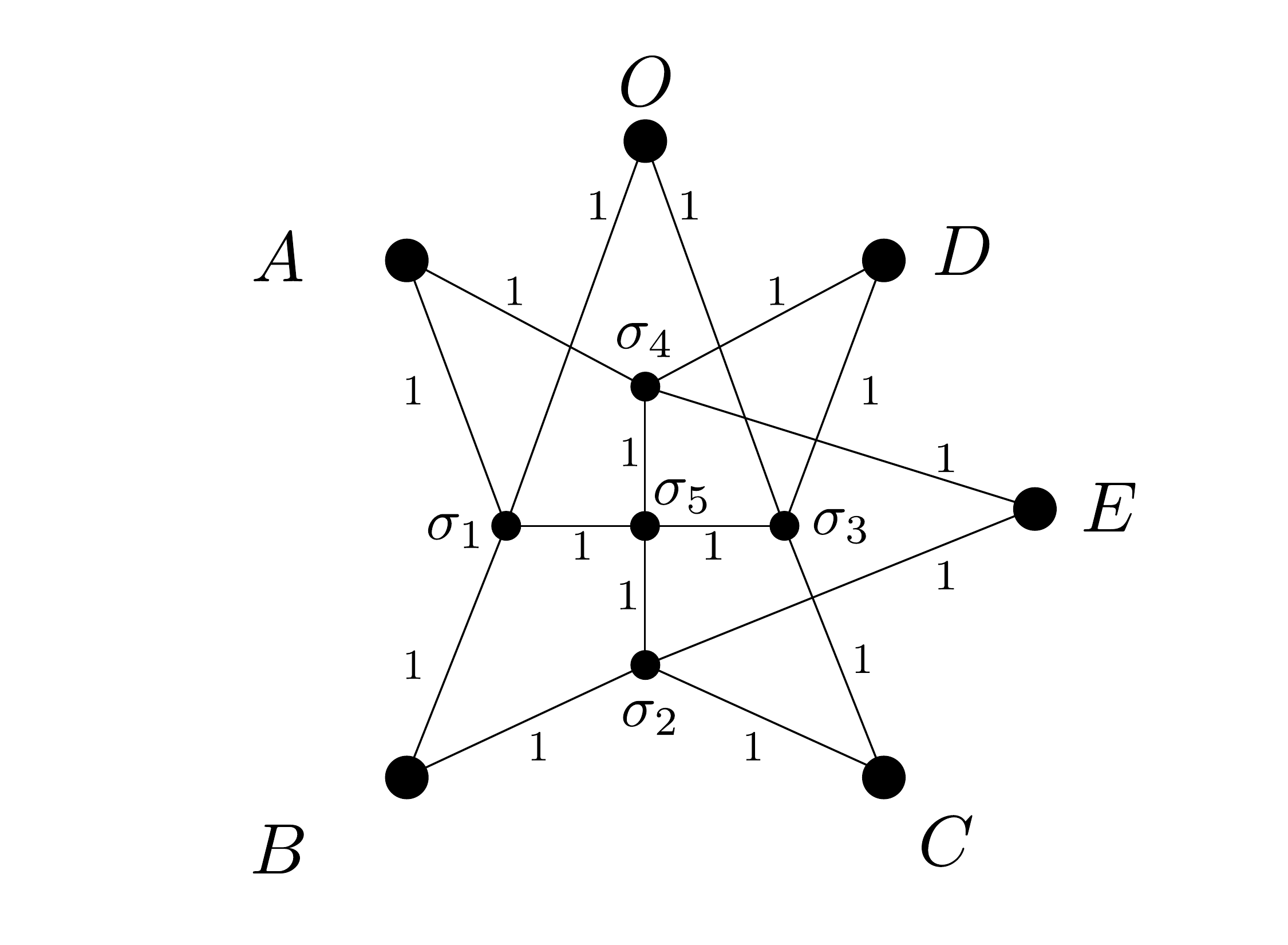}
    \end{subfigure}
    \caption{Two of the graphs that realize extreme rays of the $5$-party holographic entropy cone. Boundary vertices are labeled by their party while bulk vertices are denoted by $\sigma_i$. Adapted from \cite{hernandezcuenca2019}, these are rays $8$ and $12$ from left to right.}
    \label{fig:five_party_hec}
\end{figure}
As a less trivial holographic example, consider now the construction of a quantum state tensor $\mathcal{G}_{R_{12}}$ for the graph on the right of Figure \ref{fig:five_party_hec} (graph $12$ in Figure $1$ of \cite{hernandezcuenca2019}) . Every bulk vertex in this graph has four edges of unit weight attached to it, corresponding again to the insertion of $4$-partite perfect tensors on each. After contracting their indices as instructed by the edges structure, one is left with $12$ free indices which are pairwise assigned to collective indices for each party as dictated by rule \ref{partyrule}. Using \eqref{eq:4ptpt} for bulk vertices, one obtains
\begin{equation}
\mathcal{G}_{R_{12}}^{\mathscr{ABCDEO}} = 9 \; \tensor{T}{^{i_1 i_2 i_3 i_4}} \tensor{T}{_{i_1}^{a_1 b_1 c_1}} \tensor{T}{_{i_2}^{d_1 e_1 o_1}} \tensor{T}{_{i_3}^{a_2 b_2 c_2}} \tensor{T}{_{i_4}^{d_2 e_2 o_2}},
\end{equation}
where the index labels suggestively denote which collective boundary index each qutrit should be associated to via $\mathscr{X} = \{ x_1 , x_2\}$. One can check that this state precisely reproduces the desired entropy vector as well (cf. ray $12$ in Table $3$ of \cite{hernandezcuenca2019}).

For a simple yet interesting hypergraph state construction, we will look at the one in Figure \ref{fig:CLRR14}, which realizes an extreme ray of the CLR cone for $5$ parties. This hypergraph contains one bulk vertex and three hyperedges (one of them a $2$-edge), all of unit weight. The required tensors for the $3$-edge and the $5$-edge are explicitly given by~\eqref{eq:GHZtensor} with $D=2$ for $\Omega=3$ and $\Omega=5$, respectively. The bulk vertex has only degree $3$, for which a simple choice of AME representative is given by a $3$-partite qubit GHZ state\footnote{Note that this is no longer the case for a $k$-partite GHZ with $k\geq4$. In particular, the $3$-partite case is realizable holographically by a star graph, which is the reason why the star associated to this bulk vertex admits a GHZ tensor type of realization, otherwise used only for hyperedges. Higher-party GHZ states are not holographic, and thus also not realizable by star graphs at all.}. Applying the Hadamard contraction across tensor indices, the desired state tensor can be written
\begin{equation}
\mathcal{G}_{\text{CLR}_5}^{\mathscr{ABCDEO}} = 2 \; {\tilde{T}}\tensor{}{^c^e^d^{o_1}^{j_\sigma}}
\tensor{H}{_{j_\sigma}_{j_h}}
T\tensor{}{^{j_h}^a^{i_h}}
\tensor{H}{_{i_h}_{i_\sigma}}
{\tilde{T}}\tensor{}{^{i_\sigma}^b^{o_2}},
\end{equation}
where $\mathscr{X}=\{x\}$ for all collective indices except for $\mathscr{O}=\{o_1,o_2\}$, and indices $i$ and $j$ have been employed for tensors associated to the $3$-edge and $5$-edge, respectively. Explicitly, as a state,
\begin{equation}
\begin{aligned}
\Ket{G_{\text{CLR}_{5}}} & = 2\times\frac{1}{\sqrt{8}} \;
\tensor{\delta}{^c^e^d^{o_1}^{j_\sigma}}
\tensor{H}{_{j_\sigma}_{j_h}}
\tensor{\delta}{^{j_h}^a^{i_h}}
\tensor{H}{_{i_h}_{i_\sigma}}
\tensor{\delta}{^{i_\sigma}^b^{o_2}} \;
\Ket{a ;\, b ;\, c ;\, d ;\, e ;\, o_1 o_2} \\
& = \frac{1}{\sqrt{2}} \;
\sum_{a,k,l=0}^1
\tensor{\delta}{^k^{j_\sigma}}
\tensor{H}{_{j_\sigma}_{j_h}}
\tensor{\delta}{^{j_h}^a^{i_h}}
\tensor{H}{_{i_h}_{i_\sigma}}
\tensor{\delta}{^{i_\sigma}^l} \;
\Ket{a ;\, l ;\, k ;\, k ;\, k ;\, k\,l}_{\text{ABCDEO}} \\
& = \frac{1}{\sqrt{2}} \;
\sum_{a,k,l=0}^1
\tensor{H}{^k^a}
\tensor{H}{^a^l} \;
\Ket{a ;\, l ;\, k ;\, k ;\, k ;\, k\,l}_{\text{ABCDEO}}.
\end{aligned}
\end{equation}
Written out,
\begin{equation}
\begin{aligned}
\sqrt{8} \Ket{G_{\text{CLR}_{5}}} = 
& \Ket{0000000}+\Ket{0011110}+\Ket{0100001}+\Ket{0111111} \\
+ & \Ket{1000000} - \Ket{1011110}-\Ket{1100001}+\Ket{1111111},
\end{aligned}
\end{equation}
where, in order, the qubits correspond to $A$, $B$, $C$, $D$, $E$, $O_1$ and $O_2$. Once again, the entropies of this state match exactly those of the entropy ray obtained by the min-cut prescription applied to the hypergraph in Figure \ref{fig:CLRR14}, whose computation we leave as an exercise for the reader.

\begin{figure}[h]
	\centering
	\includegraphics[width=.6\textwidth]{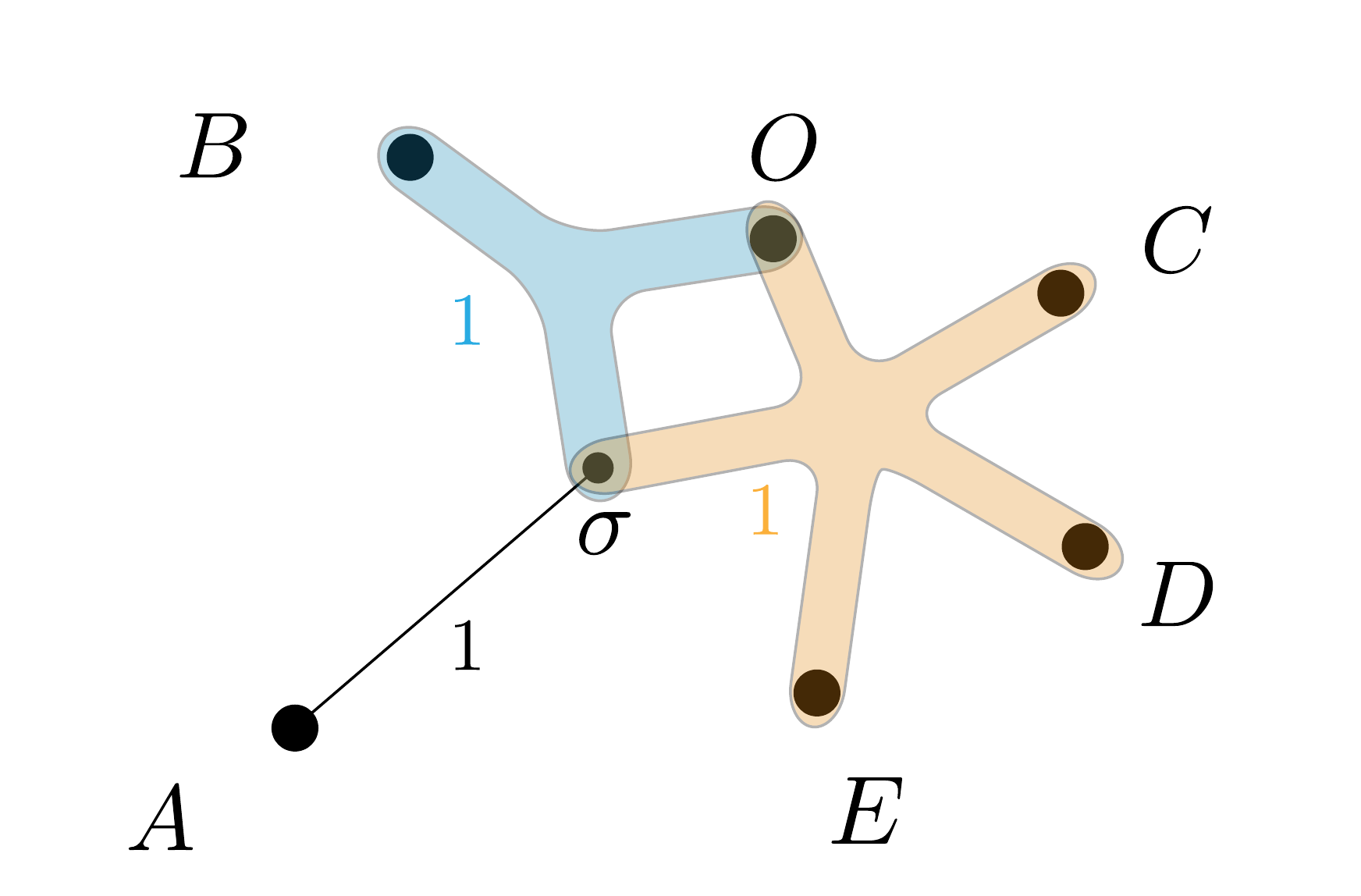}
	\caption[short]{A hypergraph that realizes one of the extreme rays of the $5$-party CLR cone. Boundary vertices are labeled by their party while the only bulk vertex is denoted by $\sigma$.}
	\label{fig:CLRR14}
\end{figure}

\section{Discussion and Future Directions}\label{sec:future}
We have shown that hypergraphs are able to capture a larger entropy cone than graphs, thus strictly containing the holographic entropy cone, while remaining consistent with universal quantum inequalities. In particular, for up to $4$ parties, we have found that the hypergraph entropy cone is equivalent to the stabilizer and quantum linear rank cones. In the $5$-party case, we obtained compelling evidence that hypergraph entropies continue to obey and tightly saturate all quantum linear rank inequalities. Assuming hypergraphs indeed correspond to physical quantum states, our results suggest that hypergraphs can be a powerful tool to discover and prove novel entropy inequalities, as well as construct explicit quantum states that realize a given entropy vector.

In the spirit of extending the graph tools developed in \cite{bao2015holographic} to study more general classes of entanglement structures, one may wonder if hypergraphs could themselves admit an additional upgrade. Going beyond hypergraphs could in principle provide a combinatorial tool to explore the full quantum entropy cone, regions of which are inaccessible to hypergraphs already at $4$ parties. A construction that breaks the Ingleton bound might also shed light on whether there exist genuinely quantum $5$-partite states (i.e. with non-monotonic entropies) that violate the Ingleton inequality \cite{linden2013quantum,hernndezcuenca2019quantum}.

Below, we outline several other avenues of future study related to hypergraphs.

\subsection{The Hypergraph Cone and Quantum States}

Perhaps the most pressing issue is the question of realizability of all hypergraph entropies using quantum states. We have made partial progress in this direction through proving the validity of many of the QLR inequalities (including SA and SSA), as well as proposing a method to explicitly write down a quantum state realizing the hypergraph entropies.

In section \ref{sec:hypergraphcone}, we showed that the hypergraph and QLR cones are identical for up to $4$ parties by demonstrating that all $4$-party QLR inequalities are valid on hypergraphs (which implies containment), and also that all extreme rays of the QLR cone are hypergraph-realizable (which implies tightness). Additionally, in appendix \ref{sec:proofs} we exhibit contraction maps which prove several of the QLR inequalities for $5$ parties. In fact, we believe they are sufficient to prove all of them, but are limited by the computational cost of checking the contraction property for large hypergraph ranks. In particular, we have not found a failed contraction map as of yet for any linear rank inequality for $5$ parties or any evidence of non-convergence of the proof technique. We are thus compelled to make the following conjecture:

\begin{conjecture}\label{con:hypeqlr}
    The $n$-party hypergraph and QLR cones coincide for all $n$.
\end{conjecture}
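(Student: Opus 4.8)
The plan is to prove the equality by establishing the two reverse containments separately. Since the hypergraph cone is polyhedral (Lemma \ref{lemma:exponential}) and the QLR cone is polyhedral by construction, it suffices to show that every defining inequality (facet) of the QLR cone is valid on hypergraphs, which gives the hypergraph cone inside the QLR cone, and that every extreme ray of the QLR cone is realizable by some hypergraph, which gives the reverse inclusion. I would treat $n$ as fixed throughout and argue uniformly in $n$ at the end.

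For the containment direction, I would exploit that the facets of the $n$-party QLR cone are SA, weak monotonicity, and the balanced linear rank inequalities obtained from common information (see section \ref{ssec:lri}). Having already shown that SA, SSA, and Ingleton contract, the objective is a \emph{uniform} argument that every balanced linear rank inequality admits a contraction map on hypergraphs of the rank prescribed by Proposition \ref{prop:srgraphs} (or, granting Conjecture \ref{con:npartyrank}, merely on $(n+1)$-graphs). The most promising route is to translate the common-information derivation of each such inequality—which proceeds by introducing an auxiliary subspace realizing the relevant intersection—directly into the occurrence-vector and bit-string language of Theorem \ref{thm:gencontraction}, so that the inductive construction of the auxiliary subspace becomes an explicit recipe for the images $f(x)$. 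If this dictionary can be made systematic, one obtains a valid contraction map for every balanced linear rank inequality at once, and Proposition \ref{prop:srgraphs} then upgrades the $k$-uniform statement to all finite ranks via Corollary \ref{cor:kugkg}. I would also record the cheaper but conditional route through section \ref{sec:hyperstates}: granting Conjecture \ref{con:gstates}, every hypergraph state is a contraction of stabilizer AME and GHZ tensors across Hadamard gates and is therefore itself a stabilizer state, whence its entropy vector lies in the stabilizer cone, which obeys all balanced linear rank inequalities and thus sits inside QLR.

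For the realizability direction, I would first require a generating description of the extreme rays of the $n$-party QLR cone, and then construct an explicit hypergraph for each symmetry orbit using the AME and GHZ building blocks of section \ref{sssec:bbt}, following the pattern already carried out completely for $4$ parties and, via the CLR extreme rays, for $5$ parties. The task here is to make the building-block construction uniform in $n$: one would decompose each extreme-ray entropy vector into a superposition of star-graph and single-hyperedge contributions and assemble the corresponding tensor network, arguing that the resulting discrete entropies reproduce the target ray exactly.

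The main obstacle is that for general $n$ neither the complete list of balanced linear rank inequalities nor the extreme rays of the QLR cone is known—the linear rank inequalities themselves are classified only through $5$ parties. Thus, before any contraction map or hypergraph realization can even be attempted, one confronts an open problem in matroid and subspace theory. The realistic strategy is therefore a \emph{conditional} proof: assume a facet-and-ray description of the QLR cone at each $n$, and then supply (i) the uniform contraction-map-from-common-information construction for the facets and (ii) the uniform hypergraph-from-extreme-ray construction for the rays. The genuinely hard technical core is step (i), since there is at present no general theorem guaranteeing that the common-information proof of an arbitrary balanced linear rank inequality admits such a combinatorial lift, and the cost of directly verifying the contraction property on $i^k$ distances grows rapidly with the rank $k$—precisely the barrier encountered already at $5$ parties.
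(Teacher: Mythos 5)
The statement you set out to prove is Conjecture \ref{con:hypeqlr}, which the paper itself does \emph{not} prove: it is left open, supported only by the $4$-party equality (validity of SA, SSA and Ingleton on hypergraphs via contraction maps, plus hypergraph realizations of all extreme rays) and by partial $5$-party evidence ($14$ of the $31$ QLR inequalities proven unconditionally via Proposition \ref{prop:srgraphs}, $12$ more contingent on Conjecture \ref{con:npartyrank}, and realizations of all $162$ CLR extreme-ray orbits). Your two-pronged strategy --- containment by showing every QLR facet contracts, tightness by realizing every QLR extreme ray with hypergraphs --- is precisely the strategy the paper executes at low party number, so your framing is faithful to the paper's approach.

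As a proof, however, it has genuine gaps, which to your credit you identify yourself: (i) the facet and extreme-ray description of the QLR cone is unknown beyond $5$ parties, and for $n\geq 6$ genuinely new linear rank inequalities are known to arise \cite{Chan2011TruncationTF}, so there is no target list to run your argument on; (ii) there is no theorem converting the common-information derivation of a balanced linear rank inequality into a contraction map --- this ``dictionary'' is exactly the missing ingredient, and the paper's own experience shows that even case-by-case verification stalls on the computational cost of checking the $i^k$ contraction property for large $k$, which is why only $14$ of $31$ inequalities were completed at $n=5$; and (iii) the uniform-in-$n$ construction of hypergraphs realizing QLR extreme rays is likewise unestablished (the $5$-party realizations in the paper target CLR rays, with the QLR ray enumeration itself beyond the authors' computational reach). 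A conditional outline that defers all three of these items is a research program rather than a proof --- which is consistent with the fact that the paper states the claim as a conjecture. Nothing in your proposal is wrong, but nothing in it closes the distance between the paper's partial evidence and the general statement.
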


It is known that the CLR cone is strictly inside the classical cone and thus the quantum cone \cite{hammer2000inequalities}. It is unclear if the replacement of monotonicity by weak monotonicity in the CLR cone, which defines the QLR cone, would allow for the latter to reach beyond the quantum cone. We believe this is very unlikely\footnote{For instance, quantum states are able to violate all QLR inequalities that are not SA and SSA.} and thus validity of Conjecture \ref{con:hypeqlr} would imply that the hypergraph cone indeed encodes physically meaningful quantum state entropies.

On a more constructive direction, the prescription proposed in section \ref{sec:hyperstates} also suggests that hypergraph entropies precisely correspond to those of a specific subset of quantum states. This is already manifestly true at $4$-parties, in which case we have shown that the hypergraph and stabilizer entropy cones are equivalent since the latter coincides with the $4$-party QLR cone. For larger party number, it is known that stabilizer states obey all balanced linear rank inequalities based on common information \cite{linden2013quantum}, but it is unclear if they fill up this cone or even if they satisfy other more general linear rank inequalities that arise for $n \geq 6$~\cite{Chan2011TruncationTF}. Therefore, although little is known about the stabilizer entropy cone in terms of extreme rays already for $5$ parties, it is known that these rays all lie inside the QLR cone. Appendix \ref{sec:proofs} provides strong evidence to support that this property is shared by hypergraphs as well, and motivates the following conjecture:
\begin{conjecture}\label{con:hypestab}
    The $n$-party hypergraph and stabilizer cones coincide for all $n$.
\end{conjecture}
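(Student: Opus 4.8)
The plan is to establish the equality of the hypergraph, stabilizer, and quantum linear rank cones $C_{\text{hyper}}$, $C_{\text{stab}}$, $C_{\text{QLR}}$ by a single chain of inclusions that closes into a loop, thereby reducing Conjecture \ref{con:hypestab} to the two conjectures already on the table, \ref{con:gstates} and \ref{con:hypeqlr}. First I would prove $C_{\text{hyper}}\subseteq C_{\text{stab}}$. The key observation, already flagged in Section \ref{ssec:states}, is that the state construction assembles $\Ket{G_K}$ out of purely stabilizer ingredients: stabilizer AME tensors at bulk vertices (rule \ref{bulkvertexrule}), GHZ tensors at hyperedges (rule \ref{edgerule}), and Hadamard gates for the contractions (rule \ref{contractionrule}). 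Since any contraction of stabilizer states along stabilizer gates is again a stabilizer state or vanishes, $\Ket{G_K}$ is a stabilizer state whenever it is nonzero. If Conjecture \ref{con:gstates} holds, so that the refined rule \ref{bulkvertexrule} makes the subsystem entropies of $\Ket{G_K}$ equal the hypergraph min-cut entropies, then every extreme ray of $C_{\text{hyper}}$ is realized by a stabilizer state, and by the polyhedrality and convexity established via Lemma \ref{lemma:exponential} the whole hypergraph cone lies in $C_{\text{stab}}$.

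For the reverse direction I would route through $C_{\text{QLR}}$. It is known \cite{linden2013quantum,matus2007infinitely} that stabilizer entropy vectors obey all balanced linear rank inequalities from common information, giving $C_{\text{stab}}\subseteq C_{\text{QLR}}$ for every $n$. Granting only the \emph{completeness} half of Conjecture \ref{con:hypeqlr}, namely that every extreme ray of the QLR cone admits a hypergraph realization, one obtains $C_{\text{QLR}}\subseteq C_{\text{hyper}}$, and hence $C_{\text{stab}}\subseteq C_{\text{QLR}}\subseteq C_{\text{hyper}}$. Combined with the first inclusion this closes the loop $C_{\text{hyper}}\subseteq C_{\text{stab}}\subseteq C_{\text{QLR}}\subseteq C_{\text{hyper}}$, forcing all three cones to coincide; as a bonus this simultaneously yields the validity half of Conjecture \ref{con:hypeqlr} for free and settles the open question of whether stabilizer states fill the QLR cone.

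I expect the main obstacle to be Conjecture \ref{con:gstates}: supplying the missing refinement of rule \ref{bulkvertexrule} and proving analytically that the tensor-network entropies match the discrete min-cut entropies for \emph{every} hypergraph, rather than the case-by-case examples of Section \ref{sec:hyperstates}. I would attack this using the star-graph decomposition of Section \ref{sssec:bbt}: each bulk vertex looks locally like an AME state whose bipartite entropies already saturate the local min-cut bound, so the task is to show that gluing these tensors across GHZ hyperedges via Hadamard contractions preserves the \emph{global} min-cut structure. The natural route is an induction on the number of bulk vertices, contracting one Hadamard bond at a time and using the isometry property of AME/perfect tensors to forward the min-cut computation through each contraction. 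The delicate point is controlling interference when a single party index is fed by several hyperedges (rule \ref{partyrule}), which is exactly where the choice of local basis matters and where a uniform, basis-independent argument has so far been elusive.

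A secondary obstacle is the completeness half of Conjecture \ref{con:hypeqlr} for arbitrary $n$. For this I would aim for a systematic procedure that reads a hypergraph directly off the common-information data defining each QLR extreme ray, generalizing the explicit $n=4$ and $n=5$ constructions of Section \ref{ssec:hypergraphcone}. The rank bound of Proposition \ref{prop:srgraphs}, sharpened if Conjecture \ref{con:npartyrank} holds, should keep the required hyperedge cardinalities controlled by the party number, making such a construction plausible; the remaining work is to replace the enumerative extreme-ray checks currently available with a single uniform realization argument valid for all $n$.
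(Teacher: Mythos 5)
You have not produced a proof, and neither does the paper: the statement is a conjecture that the paper leaves open, supporting it only with the $n\le 4$ equality, partial $5$-party evidence, and the tensor-network construction of Section \ref{sec:hyperstates}. Your plan is essentially the paper's own motivating strategy made explicit --- $C_{\text{hyper}}\subseteq C_{\text{stab}}$ via the stabilizer Feynman rules (conditional on Conjecture \ref{con:gstates}), and the reverse inclusion routed through the QLR cone (conditional on the realizability half of Conjecture \ref{con:hypeqlr}) --- and you are appropriately candid that these two conjectures are the real obstacles. As a conditional reduction, that part of the logic is sound and matches the paper's reasoning in Sections \ref{sec:hyperstates} and \ref{sec:future}.

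The genuine gap is the link you present as unconditionally known: $C_{\text{stab}}\subseteq C_{\text{QLR}}$ ``for every $n$.'' What \cite{matus2007infinitely,linden2013quantum} establish is that stabilizer states obey all \emph{balanced linear rank inequalities obtained from common information}. For $n\le 5$ these exhaust the linear rank inequalities, so the inclusion holds there; but for $n\ge 6$ there exist linear rank inequalities that do not arise from common information (the paper cites \cite{Chan2011TruncationTF} on precisely this point), and whether stabilizer entropy vectors obey them is open --- the paper says explicitly that ``it is unclear if they fill up this cone or even if they satisfy other more general linear rank inequalities that arise for $n\geq 6$.'' Since the QLR cone at general $n$ is cut out by \emph{all} linear rank inequalities (with monotonicity replaced by weak monotonicity), your middle inclusion is itself conjectural beyond $5$ parties. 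Your chain therefore rests on three unproven inputs, not two, and your advertised bonus --- that closing the loop would settle whether stabilizer states fill the QLR cone --- is partly circular, since for $n\ge 6$ the containment $C_{\text{stab}}\subseteq C_{\text{QLR}}$ is an input to the loop rather than an output of it. The correct statement of your reduction is: Conjecture \ref{con:hypestab} follows from (i) Conjecture \ref{con:gstates}, (ii) hypergraph realizability of all QLR extreme rays, and (iii) containment of the $n$-party stabilizer cone in the $n$-party QLR cone for all $n$, each of which remains open.
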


An open question in quantum Shannon theory is whether the stabilizer and QLR cones are the same for more than $4$ parties. In the case that the $n$-party stabilizer cone is equivalent to the $n$-party QLR cone, the two conjectures above naturally collapse to one.

If the hypergraph cone really is inside the quantum cone, the machinery developed in this paper would provide an efficient way of computing entanglement entropies\footnote{The complexity of computing entropies would reduce to that of computing min-cuts.} of some class of quantum states (such as stabilizer states if Conjecture \ref{con:hypestab} holds) and a proof method for the inequalities they obey. The ability to encode the entanglement structure of a given quantum state in a hypergraph could then serve as a partial witness for whether a state belongs to that class\footnote{Here we specify that it is a partial witness because there is a possibility that states that lie within the specified cone are not of the specified type (i.e. it is only a necessary condition).}.

Naturally, one could consider proving equality of all three of the hypergraph, stabilizer, and QLR cones. Section \ref{sec:hyperstates} provides a construction possibly leading to equivalence of the first two, which would prove Conjecture \ref{con:hypestab}. As for Conjecture \ref{con:hypeqlr}, it might be conceivable to prove that hypergraph states obey all QLR inequalities with our methods. However, proving that hypergraphs completely fill the QLR cone would require a better understanding of the connection between hypergraphs and linear ranks.

\subsection{Searching for New Hypergraph Inequalities}
Thus far we have not put a premium on discovering new inequalities for hypergraph entropies, rather relying on proving old inequalities true for potentially equivalent mathematical constructs. However, it may be the case that the classes of objects we considered are not equivalent, as could happen starting at $5$ parties for which the complete cone of stabilizer entropies is not known. Regardless, in these cases, one should be able to echo the procedure used in the holographic entropy cone to search for new inequality candidates that ``slice off" extreme rays that are impossible to construct, while leaving already constructed ones intact, and to run the generalized contraction map algorithm on these new candidate inequalities. This is a promising future direction, though it may quickly become computationally intractable in the absence of better linear/semidefinite programming techniques. As such, we leave this to future work. 

One possible direction forward is to utilize the attention paid to hypergraphs in machine learning \cite{zhou2007learning} to aid in finding hypergraphs that realize extreme rays. It is quite possible that machine learning techniques would allow deeper probes into higher party hypergraph entropy cones than the standard analytic approaches.

\subsection{Non-Holographic Bit Threads}
A version of the max-flow min-cut theorem known as the Menger property also holds for hypergraphs \cite{Borndrfer2012ANO}, where the minimal cut on hyperedge weights separating a source from a sink is equal to the maximal flow allowable from the source to the sink. The extension of the max-flow min-cut theorem to hypergraphs suggests a connection to the bit thread program of \cite{freedman2017bit}. The key insight in that area was the connection of threads to RT surfaces via the max-flow min-cut theorem, in a manner that survives discretization of the bulk spacetime into graph form. Therefore, some analog of bit threads should survive for hypergraphs, and it would be interesting to study such properties in the context of quantum states or tensor networks. While entropy inequalities appeared harder to prove in the thread formalism than with cuts \cite{cui2018bit,Hubeny_2018}, they also usually offered some deeper insight into the structure and properties of holographic states. It is conceivable that some analogous lessons about hypergraph states may be drawn from a thread reformulation of our analysis.

One may then ask about the relationship between hypergraph techniques and the entanglement of purification conjectures of \cite{umemoto2018entanglement,nguyen2018entanglement} (and their multipartite and conditional extensions \cite{bao2018holographic,umemoto2018entanglement2,bao2019conditional}), taking advantage of the their connection to bit threads \cite{bao2019towards, Harper:2019lff, Du:2019emy}. In this framework, the generalized contraction map can potentially be extended to study inequalities related to the entanglement of purification for hypergraph states, assuming an appropriate generalization of an entanglement wedge cross section as a partial cut of a hypergraph. This generalization, for example, could describe a hypergraph for which some of the vertices have been deleted, with hyperedges previously ending on those vertices rendered to be ``dangling''. The entanglement of purification question would then become one of how to complete the dangling edges to minimize some source-sink cut, in strong analogy to holographic entanglement of purification techniques.

\subsection{Connections to Higher Derivative Gravity}
It is possible that there is a connection between hypergraph states and ground or thermal states of boundary theories dual to higher derivative gravity theories. At least in certain situations \cite{Dong_2014}, the entanglement entropy of a boundary subregion in these cases reduces to the Wald entropy, which in particular contains a topological component. It is tempting to represent this topological component as a hyperedge connecting all of the boundary subregions, and it would be worth investigating a potential connection with the techniques used in this paper.

\acknowledgments
We thank David Avis, Sebastian Fischetti, Veronika Hubeny, Alex Maloney, Sepehr Nezami, Xiaoliang Qi, Djorde Radecevic, Mukund Rangamani, Massimiliano Rota and Michael Walter for useful comments and interesting discussions. N.B. is supported by the National Science Foundation under grant number 82248-13067-44-PHPXH, by the Department of Energy under grant number DE-SC0019380, and by New York State Urban Development Corporation Empire State Development contract no. AA289. S.H.C. is supported by ``la Caixa'' Foundation (ID 100010434) under fellowship LCF/BQ/AA17/11610002 and by the National Science Foundation under grant number PHY-1801805.  V.P.S. gratefully acknowledges support by the NSF GRFP under Grant No. DGE 1752814.

\appendix

\newpage
\section{Proofs of Linear Rank Inequalities}\label{sec:proofs}

\setlength{\tabcolsep}{6pt}
\renewcommand{\arraystretch}{1}

Only the $24$ genuinely $5$-partite inequalities of the $5$-party QLR are listed in Table \ref{tab:QLRstatus}, the other $7$ corresponding to uplifted instances of SA, SSA and Ingleton, already proven for all hypergraphs in section \ref{sssec:ineqs}. The search for a contraction map is carried out with the LHS written in its standard form, thus involving $L$ terms, while the RHS is written in expanded form ($\beta_i$ copies of $S_i$ with unit coefficient), thus involving $\beta_T = \sum_{i=1}^R \beta_i$ terms. Writing the LHS in standard form reduces the complexity of finding a contraction map by minimizing its domain, without affecting the efficacy of the proof method. On the other hand, expanding out the RHS maximizes the likelihood of finding a contraction map for a given valid inequality. Ideally, considering Proposition \ref{prop:srgraphs}, one would like a RHS with the least possible number of terms, but this happens to be insufficient for some proofs. Basically, failure to find a contraction map with the RHS in standard form does not preclude the existence of a contraction map with the RHS in the expanded form. This already happens to be important for the proof of $5$-partite holographic entropy inequalities, for which the standard form is insufficient in some cases. To be as general as possible, the choice here has been the conservative one of fully expanding out the RHS, at the cost of not been able to fulfill the conditions of Proposition \ref{prop:srgraphs} for some inequalities.

\begin{table}[hbt!]
    \centering
    \begin{tabular}{r|cccc|l}
    {Inequality} & {$L$} & {$R$} & $\alpha _T$ & $\beta _T$ & {Checked $k$} \\\hline
    \uline{\bf{1}} & 6 & 6 & 6 & 6 & 6 \\
    \uline{\bf{2}} & 6 & 6 & 6 & 6 & 6 \\
    \uline{\bf{3}} & 6 & 6 & 6 & 6 & 6 \\
    \bf{4} & 6 & 7 & 8 & 8 & 7 \\
    \bf{5} & 7 & 7 & 8 & 8 & 7 \\
    \uline{\bf{6}} & 7 & 7 & 7 & 7 & 7 \\
    \uline{\bf{7}} & 7 & 7 & 7 & 7 & 7 \\
    \bf{8} & 7 & 7 & 8 & 8 & 7 \\ 
    \uline{\bf{9}} & 7 & 7 & 7 & 7 & 7 \\ 
    \bf{10} & 6 & 7 & 8 & 8 & 7 \\
    \uline{\bf{11}} & 7 & 7 & 7 & 7 & 7 \\
    \bf{12} & 7 & 8 & 10 & 10 & 6 \\
    \bf{13} & 8 & 8 & 9 & 9 & 6 \\
    \bf{14} & 8 & 8 & 8 & 8 & 6 \\
    \bf{15} & 9 & 8 & 10 & 10 & 6 \\ 
    \bf{16} & 8 & 8 & 8 & 8 & 6 \\ 
    \bf{17} & 7 & 8 & 10 & 10 & 6 \\ 
    \bf{18} & 8 & 8 & 8 & 8 & 6 \\ 
    \bf{19} & 8 & 8 & 9 & 9 & 6 \\
    \bf{20} & 9 & 8 & 10 & 10 & 6 \\ 
    21 & 9 & 9 & 13 & 13 & 4 \\
    22 & 7 & 9 & 10 & 10 & 4 \\
    23 & 9 & 9 & 13 & 13 & 4 \\
    24 & 7 & 9 & 10 & 10 & 4 \\
    \end{tabular}
    \caption{Summary of the status of the proof by contraction of QLR inequalities on hypergraphs. The numbering of inequalities corresponds to the labels used in Tables \ref{tab:QLR1} to \ref{tab:QLR24}, which show those inequalities and their corresponding contraction map. Columns $L$ and $R$ are the number of terms on the LHS and RHS of the inequality in its standard form, respectively. Columns $\alpha_T$ and $\beta_T$ give the value of $\sum_{i=1}^{L} \alpha_i$ and $\sum_{i=1}^{R} \beta_i$, respectively. The last column lists the largest hypergraph rank $k$ for which the contraction property of the $k$-distance has been checked on the maps given below. Inequalities for which this check has reached $k=\beta_T$ are valid for all hypergraphs by Proposition \ref{prop:srgraphs} (underlined and bold), and those for which $k\geq6$ are plausibly valid by Conjecture \ref{con:npartyrank} (bold).}
    \label{tab:QLRstatus}
\end{table}

\setlength{\tabcolsep}{0pt}
\renewcommand{\arraystretch}{0.7}

In most cases, these maps have been constructed by demanding the contraction property on hypergraphs of low rank $k\leq4$. Quite generally, the $k$-distance for greater $k$ then turns out to automatically contract on those maps too. This agrees with the intuition drawn from corollary \ref{cor:genuinek}. More importantly, in all cases we find supporting evidence for Conjecture \ref{con:npartyrank} that validity of a $k$-party inequality on $(k+1)$-graphs implies its validity on all hypergraphs. As a result, we expect the $k$-distance for all $k$ larger than those listed in the last column of Table \ref{tab:QLRstatus} to already contract on their respective maps in Tables \ref{tab:QLR1} to \ref{tab:QLR24}. The only obstruction we have found to performing such a check is its computational cost.

In the following tables, we provide a succinct representation of the proofs (i.e. contraction maps) for the $24$ genuinely $5$-partite inequalities of the $5$-party QLR cone. We have encoded all relevant information compactly as follows. Note that an entropy inequality of the form \ref{eq:ineq} may alternatively be specified by the inward-pointing vector normal to the hyperplane in entropy space corresponding to the saturation of that inequality. In other words, the linear inequality \ref{eq:ineq} can be written as $Q\cdot S\geq0$, where $S$ is an arbitrary entropy vector and $Q$ the inequality vector referred to above. The order of the entries of $S$ and $Q$ is determined by the standard lexicographic ordering of the power set of $n$ parties (excluding the empty set), namely $\{ A, B, C, D, E, AB, AC, \dots, ACDE, BCDE, ABCDE\}$. Each table below is captioned by a different inequality, specified in terms of the vector $Q$, and shows the corresponding contraction map compactly encoded as described next. Recall that the contraction map is a function from all possible $2^L$ bit strings of length $L$ to some set of bit strings of length $R$\footnote{In all cases below $R=\beta_T$, corresponding to the expanded form of the RHS.}. It can thus be uniquely defined by the image of every bit strings in the domain. The image bit strings, understood as digits of a binary number, can then be compactly expressed in their decimal-base representation. Ordering all inequality entropies lexicographically and the LHS domain bit strings in increasing binary order, an array of $2^L$ base-$10$ numbers can thus be used to uniquely specify the output of the contraction map on every input bit string (see the right-most column of Table \ref{tab:ingleton} for an example of this notation). The following tables provide the integer arrays encoding the contraction map for each inequality, which should be read left-to-right first.

\begin{center}
\begin{table}[hbt!]
    \centering\footnotesize
    \begin{tabular*}{\textwidth}{@{\extracolsep{\fill}\quad}cccccccccccccccccccccccccccccccc}
     0 & 1 & 1 & 3 & 4 & 5 & 0 & 1 & 2 & 3 & 3 & 11 & 6 & 7 & 2 & 3 & 1 & 5 & 5 & 1 & 5 & 21 & 1 & 5 & 0 & 1 & 1 & 3 & 4 & 5 & 0 & 1 \\
     4 & 0 & 5 & 1 & 6 & 4 & 4 & 0 & 6 & 2 & 7 & 3 & 38 & 6 & 6 & 2 & 5 & 1 & 13 & 5 & 4 & 5 & 5 & 1 & 4 & 0 & 5 & 1 & 6 & 4 & 4 & 0 \\
    \end{tabular*}
    \caption{\small $Q_1=\{0,-1,0,-1,0,1,-1,1,0,1,1,0,0,0,0,0,-1,0,0,1,0,0,-1,0,1,0,0,0,-1,0,0\}$}
    \label{tab:QLR1}
\end{table}

\begin{table}[hbt!]
    \centering\footnotesize
    \begin{tabular*}{\textwidth}{@{\extracolsep{\fill}\quad}cccccccccccccccccccccccccccccccc}
     0 & 2 & 1 & 3 & 1 & 3 & 3 & 7 & 1 & 3 & 5 & 7 & 3 & 11 & 1 & 3 & 4 & 6 & 5 & 7 & 0 & 2 & 1 & 3 & 5 & 7 & 21 & 5 & 1 & 3 & 5 & 7 \\
     2 & 6 & 3 & 7 & 3 & 7 & 11 & 3 & 0 & 2 & 1 & 3 & 1 & 3 & 3 & 3 & 6 & 38 & 7 & 6 & 2 & 6 & 3 & 7 & 4 & 6 & 5 & 7 & 0 & 2 & 1 & 3 \\
    \end{tabular*}
    \caption{\small $Q_2=\{0,-1,0,0,-1,1,-1,0,0,0,0,1,0,1,0,0,0,0,1,0,1,1,0,-1,0,-1,0,0,-1,0,0\}$}
    \label{tab:QLR2}
\end{table}

\begin{table}[hbt!]
    \centering\footnotesize
    \begin{tabular*}{\textwidth}{@{\extracolsep{\fill}\quad}cccccccccccccccccccccccccccccccc}
     0 & 1 & 2 & 3 & 1 & 3 & 3 & 11 & 1 & 5 & 3 & 7 & 3 & 7 & 7 & 15 & 4 & 5 & 6 & 7 & 0 & 1 & 2 & 3 & 5 & 21 & 7 & 5 & 1 & 5 & 3 & 7 \\
     2 & 0 & 6 & 2 & 3 & 1 & 7 & 3 & 3 & 1 & 7 & 3 & 19 & 3 & 3 & 7 & 6 & 4 & 38 & 6 & 2 & 0 & 6 & 2 & 7 & 5 & 6 & 7 & 3 & 1 & 7 & 3 \\
    \end{tabular*}
    \caption{\small $Q_3=\{0,-1,0,0,0,1,-1,0,0,1,0,0,0,0,-1,0,0,0,1,0,1,-1,0,1,1,0,0,-1,-1,0,0\}$}
    \label{tab:QLR3}
\end{table}

\begin{table}[hbt!]
    \centering\footnotesize
    \begin{tabular*}{\textwidth}{@{\extracolsep{\fill}\quad}cccccccccccccccccccccccccccccccc}
     0 & 2 & 4 & 6 & 8 & 10 & 12 & 14 & 2 & 3 & 0 & 2 & 0 & 2 & 4 & 6 & 4 & 0 & 5 & 4 & 0 & 2 & 4 & 6 & 6 & 2 & 4 & 0 & 2 & 0 & 0 & 2 \\
     8 & 0 & 0 & 2 & 9 & 8 & 8 & 10 & 10 & 2 & 2 & 0 & 8 & 0 & 0 & 2 & 12 & 4 & 4 & 0 & 8 & 0 & 0 & 2 & 14 & 6 & 6 & 2 & 10 & 2 & 2 & 0 \\
     6 & 14 & 14 & 30 & 14 & 30 & 30 & 62 & 14 & 6 & 6 & 14 & 6 & 14 & 14 & 30 & 14 & 6 & 6 & 14 & 6 & 14 & 14 & 30 & 78 & 14 & 14 & 6 & 14 & 6 & 6 & 14 \\
     14 & 6 & 6 & 14 & 10 & 14 & 14 & 30 & 78 & 14 & 14 & 6 & 14 & 6 & 6 & 14 & 78 & 14 & 14 & 6 & 14 & 6 & 6 & 14 & 206 & 78 & 78 & 14 & 78 & 14 & 14 & 6 \\
    \end{tabular*}
    \caption{\small $Q_4=\{-2,-2,0,0,0,2,1,1,1,1,1,1,0,0,0,-1,-1,-1,0,0,0,0,0,0,-1,0,0,0,0,0,0\}$}
    \label{tab:QLR4}
\end{table}

\begin{table}[hbt!]
    \centering\footnotesize
    \begin{tabular*}{\textwidth}{@{\extracolsep{\fill}\quad}cccccccccccccccccccccccccccccccc}
     0 & 1 & 1 & 3 & 1 & 5 & 3 & 1 & 1 & 17 & 3 & 1 & 5 & 1 & 7 & 3 & 2 & 3 & 3 & 11 & 0 & 1 & 1 & 3 & 0 & 1 & 1 & 3 & 1 & 1 & 3 & 1 \\
     4 & 5 & 0 & 1 & 5 & 13 & 1 & 5 & 0 & 1 & 1 & 1 & 1 & 5 & 3 & 1 & 6 & 7 & 2 & 3 & 4 & 5 & 0 & 1 & 2 & 3 & 0 & 1 & 0 & 1 & 1 & 0 \\
     6 & 2 & 7 & 3 & 7 & 3 & 39 & 7 & 7 & 3 & 39 & 7 & 39 & 7 & 103 & 39 & 22 & 6 & 6 & 2 & 6 & 2 & 7 & 3 & 6 & 2 & 7 & 3 & 7 & 3 & 39 & 7 \\
     22 & 6 & 6 & 2 & 6 & 7 & 7 & 3 & 6 & 2 & 7 & 3 & 7 & 3 & 39 & 7 & 150 & 22 & 22 & 6 & 22 & 6 & 6 & 2 & 22 & 6 & 6 & 2 & 6 & 2 & 7 & 3 \\
    \end{tabular*}
    \caption{\small $Q_5=\{-1,-2,0,0,0,2,-1,1,1,1,1,1,0,0,-1,0,-1,-1,0,0,0,0,0,0,1,0,0,0,0,-1,0\}$}
    \label{tab:QLR5}
\end{table}

\begin{table}[hbt!]
    \centering\footnotesize
    \begin{tabular*}{\textwidth}{@{\extracolsep{\fill}\quad}cccccccccccccccccccccccccccccccc}
     0 & 1 & 2 & 3 & 4 & 0 & 6 & 2 & 1 & 9 & 0 & 1 & 0 & 1 & 2 & 0 & 1 & 3 & 3 & 19 & 0 & 1 & 2 & 3 & 3 & 1 & 1 & 3 & 1 & 0 & 0 & 1 \\
     2 & 0 & 6 & 2 & 6 & 2 & 38 & 6 & 3 & 1 & 2 & 0 & 2 & 0 & 6 & 2 & 3 & 1 & 2 & 3 & 2 & 0 & 6 & 2 & 7 & 3 & 3 & 1 & 3 & 1 & 2 & 0 \\
     4 & 0 & 0 & 1 & 12 & 4 & 4 & 0 & 5 & 1 & 1 & 0 & 4 & 0 & 0 & 0 & 5 & 1 & 1 & 3 & 4 & 0 & 0 & 1 & 7 & 3 & 3 & 1 & 5 & 1 & 1 & 0 \\
     6 & 2 & 2 & 0 & 4 & 0 & 6 & 2 & 7 & 3 & 3 & 1 & 6 & 2 & 2 & 0 & 7 & 3 & 3 & 1 & 6 & 2 & 2 & 0 & 71 & 7 & 7 & 3 & 7 & 3 & 3 & 1 \\
    \end{tabular*}
    \caption{\small $Q_6=\{-1,0,-1,-1,0,1,1,1,1,1,0,-1,1,0,1,-1,0,0,-1,0,-1,0,0,0,0,0,0,0,0,0,0\}$}
    \label{tab:QLR6}
\end{table}

\begin{table}[hbt!]
    \centering\footnotesize
    \begin{tabular*}{\textwidth}{@{\extracolsep{\fill}\quad}cccccccccccccccccccccccccccccccc}
     0 & 1 & 1 & 9 & 2 & 3 & 0 & 1 & 4 & 0 & 5 & 1 & 6 & 2 & 4 & 0 & 2 & 3 & 0 & 1 & 3 & 19 & 1 & 3 & 0 & 1 & 1 & 0 & 2 & 3 & 0 & 1 \\
     4 & 0 & 5 & 1 & 0 & 1 & 1 & 0 & 5 & 1 & 37 & 5 & 4 & 0 & 5 & 1 & 6 & 2 & 4 & 0 & 2 & 3 & 0 & 1 & 4 & 0 & 5 & 1 & 0 & 1 & 1 & 0 \\
     2 & 0 & 0 & 1 & 6 & 2 & 2 & 0 & 6 & 2 & 4 & 0 & 70 & 6 & 6 & 2 & 6 & 2 & 2 & 0 & 2 & 3 & 0 & 1 & 2 & 0 & 0 & 0 & 6 & 2 & 2 & 0 \\
     6 & 2 & 4 & 0 & 2 & 0 & 0 & 0 & 4 & 0 & 5 & 1 & 6 & 2 & 4 & 0 & 14 & 6 & 6 & 2 & 6 & 2 & 2 & 0 & 6 & 2 & 4 & 0 & 2 & 0 & 0 & 0 \\
    \end{tabular*}
    \caption{\small $Q_7=\{0,-1,-1,-1,0,1,1,1,-1,1,1,0,0,1,1,-1,-1,0,0,0,0,0,0,0,-1,0,0,0,0,0,0\}$}
    \label{tab:QLR7}
\end{table}

\begin{table}[hbt!]
    \centering\footnotesize
    \begin{tabular*}{\textwidth}{@{\extracolsep{\fill}\quad}cccccccccccccccccccccccccccccccc}
     0 & 1 & 1 & 3 & 3 & 7 & 7 & 15 & 1 & 3 & 3 & 19 & 7 & 15 & 23 & 7 & 4 & 5 & 5 & 7 & 7 & 15 & 39 & 47 & 5 & 7 & 7 & 23 & 23 & 7 & 55 & 39 \\
     4 & 5 & 5 & 7 & 7 & 15 & 23 & 7 & 5 & 7 & 7 & 23 & 71 & 79 & 87 & 71 & 20 & 21 & 21 & 23 & 23 & 7 & 55 & 39 & 21 & 23 & 23 & 7 & 87 & 71 & 119 & 103 \\
     16 & 17 & 17 & 19 & 1 & 3 & 3 & 7 & 17 & 19 & 19 & 147 & 3 & 7 & 7 & 3 & 20 & 21 & 21 & 23 & 5 & 7 & 7 & 15 & 21 & 23 & 23 & 19 & 7 & 7 & 23 & 7 \\
     20 & 21 & 21 & 23 & 5 & 7 & 7 & 7 & 21 & 23 & 23 & 19 & 7 & 15 & 23 & 7 & 28 & 29 & 29 & 31 & 21 & 5 & 23 & 7 & 29 & 31 & 21 & 23 & 23 & 7 & 55 & 39 \\
    \end{tabular*}
    \caption{\small $Q_8=\{0,0,0,-1,0,0,0,1,0,-1,0,-1,0,0,0,1,-1,1,0,-1,0,1,2,1,1,0,-1,0,0,-2,0\}$}
    \label{tab:QLR8}
\end{table}

\begin{table}[hbt!]
    \centering\footnotesize
    \begin{tabular*}{\textwidth}{@{\extracolsep{\fill}\quad}cccccccccccccccccccccccccccccccc}
     0 & 1 & 2 & 3 & 4 & 5 & 6 & 7 & 1 & 9 & 3 & 11 & 5 & 13 & 7 & 15 & 1 & 3 & 3 & 19 & 5 & 7 & 7 & 23 & 3 & 11 & 11 & 27 & 7 & 15 & 15 & 31 \\
     4 & 5 & 6 & 7 & 12 & 13 & 14 & 15 & 5 & 13 & 7 & 15 & 13 & 77 & 15 & 13 & 5 & 7 & 7 & 3 & 13 & 5 & 15 & 7 & 7 & 15 & 15 & 11 & 5 & 13 & 7 & 15 \\
     2 & 3 & 6 & 7 & 6 & 7 & 38 & 39 & 3 & 11 & 7 & 3 & 7 & 15 & 6 & 7 & 3 & 7 & 7 & 23 & 7 & 23 & 39 & 55 & 7 & 3 & 3 & 19 & 7 & 7 & 7 & 23 \\
     6 & 7 & 14 & 15 & 14 & 15 & 46 & 47 & 7 & 15 & 15 & 7 & 15 & 79 & 14 & 15 & 7 & 7 & 15 & 7 & 15 & 7 & 47 & 39 & 71 & 7 & 7 & 3 & 7 & 15 & 15 & 7 \\
    \end{tabular*}
    \caption{\small $Q_9=\{0,0,0,0,0,0,0,-1,0,-1,0,0,0,-1,0,1,1,0,0,1,1,1,1,-1,1,-1,-1,0,-1,0,0\}$}
    \label{tab:QLR9}
\end{table}

\begin{table}[hbt!]
    \centering\footnotesize
    \begin{tabular*}{\textwidth}{@{\extracolsep{\fill}\quad}cccccccccccccccccccccccccccccccc}
     0 & 1 & 1 & 3 & 1 & 3 & 3 & 131 & 3 & 7 & 7 & 15 & 7 & 15 & 15 & 7 & 4 & 5 & 5 & 7 & 5 & 7 & 7 & 135 & 7 & 15 & 15 & 31 & 15 & 31 & 7 & 15 \\
     4 & 5 & 5 & 7 & 5 & 7 & 7 & 135 & 7 & 15 & 15 & 47 & 15 & 7 & 47 & 15 & 12 & 13 & 13 & 15 & 13 & 5 & 15 & 7 & 15 & 31 & 47 & 63 & 7 & 15 & 15 & 31 \\
     4 & 5 & 5 & 7 & 5 & 7 & 7 & 135 & 7 & 15 & 15 & 15 & 15 & 79 & 79 & 15 & 12 & 13 & 13 & 15 & 13 & 15 & 15 & 143 & 15 & 31 & 13 & 15 & 79 & 95 & 15 & 31 \\
     12 & 13 & 13 & 15 & 13 & 15 & 15 & 143 & 15 & 15 & 47 & 15 & 79 & 15 & 111 & 79 & 140 & 141 & 141 & 143 & 141 & 13 & 143 & 15 & 13 & 15 & 15 & 31 & 15 & 31 & 47 & 15 \\
    \end{tabular*}
    \caption{\small $Q_{10}=\{0,0,0,0,0,0,0,0,-1,-1,-1,0,-1,0,0,1,1,0,1,0,0,2,1,1,1,-2,0,0,0,-2,0\}$}
    \label{tab:QLR10}
\end{table}

\begin{table}[hbt!]
    \centering\footnotesize
    \begin{tabular*}{\textwidth}{@{\extracolsep{\fill}\quad}cccccccccccccccccccccccccccccccc}
     0 & 1 & 1 & 3 & 1 & 9 & 3 & 11 & 1 & 5 & 3 & 7 & 5 & 13 & 7 & 15 & 2 & 3 & 3 & 19 & 3 & 11 & 11 & 27 & 3 & 7 & 7 & 23 & 7 & 15 & 15 & 31 \\
     2 & 3 & 3 & 7 & 3 & 11 & 7 & 3 & 3 & 7 & 7 & 39 & 7 & 15 & 39 & 7 & 6 & 7 & 7 & 23 & 7 & 3 & 3 & 19 & 7 & 23 & 39 & 55 & 7 & 7 & 7 & 23 \\
     4 & 5 & 5 & 7 & 5 & 13 & 7 & 15 & 5 & 13 & 7 & 15 & 69 & 77 & 71 & 79 & 6 & 7 & 7 & 3 & 7 & 15 & 15 & 11 & 7 & 15 & 7 & 7 & 5 & 13 & 7 & 15 \\
     6 & 7 & 7 & 7 & 7 & 15 & 39 & 7 & 7 & 15 & 39 & 7 & 71 & 79 & 103 & 71 & 14 & 15 & 15 & 7 & 15 & 7 & 7 & 3 & 15 & 7 & 7 & 39 & 7 & 15 & 39 & 7 \\
    \end{tabular*}
    \caption{\small $Q_{11}=\{0,0,0,0,0,0,0,0,0,-1,-1,0,0,0,-1,1,1,0,0,-1,1,1,1,1,1,-1,0,-1,0,-1,0\}$}
    \label{tab:QLR11}
\end{table}

\begin{table}[hbt!]
    \centering\footnotesize
    \begin{tabular*}{\textwidth}{@{\extracolsep{\fill}\quad}cccccccccccccccccccccccccccccccc}
     0 & 1 & 1 & 3 & 1 & 3 & 5 & 1 & 5 & 13 & 13 & 5 & 13 & 5 & 77 & 13 & 4 & 0 & 5 & 1 & 5 & 1 & 13 & 5 & 13 & 5 & 77 & 13 & 77 & 13 & 205 & 77 \\
     2 & 3 & 3 & 19 & 0 & 1 & 1 & 3 & 1 & 5 & 5 & 1 & 5 & 1 & 13 & 5 & 6 & 2 & 7 & 3 & 4 & 0 & 5 & 1 & 5 & 1 & 13 & 5 & 13 & 5 & 77 & 13 \\
     2 & 3 & 0 & 1 & 3 & 35 & 1 & 3 & 1 & 5 & 5 & 1 & 5 & 1 & 13 & 5 & 6 & 2 & 4 & 0 & 7 & 3 & 5 & 1 & 5 & 1 & 13 & 5 & 13 & 5 & 77 & 13 \\
     6 & 2 & 2 & 3 & 2 & 3 & 0 & 1 & 0 & 1 & 1 & 0 & 1 & 0 & 5 & 1 & 14 & 6 & 6 & 2 & 6 & 2 & 4 & 0 & 4 & 0 & 5 & 1 & 5 & 1 & 13 & 5 \\
     12 & 13 & 4 & 5 & 4 & 5 & 0 & 1 & 29 & 61 & 13 & 29 & 13 & 29 & 5 & 13 & 14 & 12 & 6 & 4 & 6 & 4 & 4 & 0 & 13 & 29 & 5 & 13 & 5 & 13 & 13 & 5 \\
     14 & 15 & 6 & 7 & 6 & 7 & 2 & 3 & 13 & 29 & 5 & 13 & 5 & 13 & 1 & 5 & 270 & 14 & 14 & 6 & 14 & 6 & 6 & 2 & 12 & 13 & 4 & 5 & 4 & 5 & 5 & 1 \\
     14 & 15 & 6 & 7 & 6 & 7 & 2 & 3 & 13 & 29 & 5 & 13 & 5 & 13 & 1 & 5 & 270 & 14 & 14 & 6 & 14 & 6 & 6 & 2 & 12 & 13 & 4 & 5 & 4 & 5 & 5 & 1 \\
     270 & 14 & 14 & 6 & 14 & 6 & 6 & 2 & 12 & 13 & 4 & 5 & 4 & 5 & 0 & 1 & 782 & 270 & 270 & 14 & 270 & 14 & 14 & 6 & 14 & 12 & 6 & 4 & 6 & 4 & 4 & 0 \\
    \end{tabular*}
    \caption{\small $Q_{12}=\{-2,0,0,0,-2,2,1,1,1,-1,-1,2,0,1,1,0,0,-2,-1,0,0,1,0,0,0,0,0,0,0,-1,0\}$}
    \label{tab:QLR12}
\end{table}

\begin{table}[hbt!]
    \centering\footnotesize
    \begin{tabular*}{\textwidth}{@{\extracolsep{\fill}\quad}cccccccccccccccccccccccccccccccc}
     0 & 1 & 1 & 3 & 2 & 3 & 3 & 35 & 2 & 3 & 3 & 7 & 6 & 7 & 7 & 3 & 8 & 9 & 0 & 1 & 10 & 11 & 2 & 3 & 10 & 11 & 2 & 3 & 14 & 15 & 6 & 7 \\
     1 & 3 & 5 & 7 & 0 & 1 & 1 & 3 & 3 & 7 & 7 & 23 & 2 & 3 & 3 & 7 & 0 & 1 & 1 & 5 & 2 & 3 & 0 & 1 & 2 & 3 & 3 & 7 & 6 & 7 & 2 & 3 \\
     1 & 9 & 9 & 1 & 0 & 1 & 1 & 3 & 0 & 1 & 1 & 3 & 2 & 3 & 3 & 1 & 9 & 25 & 1 & 9 & 8 & 9 & 0 & 1 & 8 & 9 & 0 & 1 & 10 & 11 & 2 & 3 \\
     9 & 1 & 13 & 5 & 1 & 1 & 5 & 1 & 1 & 3 & 5 & 7 & 0 & 1 & 1 & 3 & 1 & 9 & 5 & 1 & 0 & 1 & 1 & 1 & 0 & 1 & 1 & 5 & 2 & 3 & 0 & 1 \\
     12 & 4 & 13 & 5 & 14 & 6 & 15 & 7 & 14 & 6 & 15 & 7 & 78 & 14 & 14 & 6 & 14 & 12 & 12 & 4 & 78 & 14 & 14 & 6 & 78 & 14 & 14 & 6 & 206 & 78 & 78 & 14 \\
     13 & 5 & 45 & 37 & 12 & 4 & 13 & 5 & 15 & 7 & 13 & 5 & 14 & 6 & 15 & 7 & 12 & 4 & 13 & 5 & 14 & 6 & 12 & 4 & 14 & 6 & 15 & 7 & 78 & 14 & 14 & 6 \\
     13 & 5 & 45 & 13 & 12 & 4 & 13 & 5 & 12 & 4 & 13 & 5 & 14 & 6 & 15 & 7 & 12 & 13 & 13 & 5 & 14 & 12 & 12 & 4 & 14 & 12 & 12 & 4 & 78 & 14 & 14 & 6 \\
     45 & 13 & 301 & 45 & 13 & 5 & 45 & 13 & 13 & 5 & 45 & 13 & 12 & 4 & 13 & 5 & 13 & 5 & 45 & 13 & 12 & 4 & 13 & 5 & 12 & 4 & 13 & 5 & 14 & 6 & 12 & 4 \\
    \end{tabular*}
    \caption{\small $Q_{13}=\{-1,-2,0,0,0,2,1,1,-1,1,1,1,-1,0,0,-1,-1,0,0,0,1,0,0,-1,1,0,0,0,-1,0,0\}$}
    \label{tab:QLR13}
\end{table}

\begin{table}[hbt!]
    \centering\footnotesize
    \begin{tabular*}{\textwidth}{@{\extracolsep{\fill}\quad}cccccccccccccccccccccccccccccccc}
     0 & 1 & 1 & 3 & 1 & 5 & 3 & 7 & 2 & 3 & 3 & 19 & 3 & 7 & 7 & 23 & 1 & 9 & 3 & 11 & 5 & 13 & 7 & 15 & 3 & 11 & 11 & 27 & 7 & 15 & 15 & 31 \\
     2 & 3 & 3 & 11 & 0 & 1 & 1 & 3 & 10 & 11 & 11 & 27 & 2 & 3 & 3 & 19 & 3 & 11 & 11 & 27 & 1 & 9 & 3 & 11 & 11 & 27 & 27 & 59 & 3 & 11 & 11 & 27 \\
     4 & 5 & 0 & 1 & 5 & 13 & 1 & 5 & 6 & 7 & 2 & 3 & 7 & 5 & 3 & 7 & 5 & 13 & 1 & 9 & 13 & 77 & 5 & 13 & 7 & 15 & 3 & 11 & 5 & 13 & 7 & 15 \\
     6 & 7 & 2 & 3 & 4 & 5 & 0 & 1 & 14 & 15 & 10 & 11 & 6 & 7 & 2 & 3 & 7 & 15 & 3 & 11 & 5 & 13 & 1 & 9 & 15 & 11 & 11 & 27 & 7 & 15 & 3 & 11 \\
     2 & 0 & 3 & 1 & 3 & 1 & 35 & 3 & 6 & 2 & 7 & 3 & 7 & 3 & 3 & 7 & 0 & 1 & 1 & 3 & 1 & 5 & 3 & 7 & 2 & 3 & 3 & 11 & 3 & 7 & 7 & 15 \\
     6 & 2 & 7 & 3 & 2 & 0 & 3 & 1 & 14 & 10 & 15 & 11 & 6 & 2 & 7 & 3 & 2 & 3 & 3 & 11 & 0 & 1 & 1 & 3 & 10 & 11 & 11 & 27 & 2 & 3 & 3 & 11 \\
     6 & 4 & 2 & 0 & 7 & 5 & 3 & 1 & 14 & 6 & 6 & 2 & 15 & 7 & 7 & 3 & 4 & 5 & 0 & 1 & 5 & 13 & 1 & 5 & 6 & 7 & 2 & 3 & 7 & 5 & 3 & 7 \\
     14 & 6 & 6 & 2 & 6 & 4 & 2 & 0 & 142 & 14 & 14 & 10 & 14 & 6 & 6 & 2 & 6 & 7 & 2 & 3 & 4 & 5 & 0 & 1 & 14 & 15 & 10 & 11 & 6 & 7 & 2 & 3 \\
    \end{tabular*}
    \caption{\small $Q_{14}=\{-1,0,-1,0,0,1,1,1,0,0,-1,0,1,0,-1,0,0,0,-1,-1,1,0,1,1,1,0,0,-1,0,-1,0\}$}
    \label{tab:QLR14}
\end{table}

\begin{table}[hbt!]
    \centering\footnotesize
    \begin{tabular*}{\textwidth}{@{\extracolsep{\fill}\quad}cccccccccccccccccccccccccccccccc}
     0 & 3 & 1 & 35 & 1 & 7 & 5 & 3 & 2 & 35 & 3 & 99 & 0 & 3 & 1 & 35 & 2 & 11 & 0 & 3 & 3 & 15 & 1 & 7 & 10 & 3 & 2 & 35 & 2 & 11 & 0 & 3 \\
     12 & 15 & 4 & 7 & 13 & 143 & 5 & 15 & 4 & 7 & 0 & 3 & 5 & 15 & 1 & 7 & 14 & 143 & 6 & 15 & 15 & 399 & 7 & 143 & 6 & 15 & 2 & 7 & 7 & 143 & 3 & 15 \\
     4 & 1 & 5 & 3 & 5 & 3 & 21 & 1 & 0 & 3 & 1 & 35 & 1 & 1 & 5 & 3 & 0 & 3 & 1 & 1 & 1 & 7 & 5 & 3 & 2 & 1 & 0 & 3 & 0 & 3 & 1 & 1 \\
     44 & 13 & 12 & 5 & 12 & 15 & 4 & 7 & 12 & 5 & 4 & 1 & 4 & 7 & 0 & 3 & 12 & 15 & 4 & 7 & 13 & 143 & 5 & 15 & 4 & 7 & 0 & 3 & 5 & 15 & 1 & 7 \\
     8 & 1 & 0 & 3 & 0 & 3 & 1 & 1 & 10 & 3 & 2 & 35 & 2 & 1 & 0 & 3 & 10 & 3 & 2 & 1 & 2 & 7 & 0 & 3 & 26 & 11 & 10 & 3 & 10 & 3 & 2 & 1 \\
     44 & 13 & 12 & 5 & 12 & 15 & 4 & 7 & 12 & 5 & 4 & 1 & 4 & 7 & 0 & 3 & 12 & 15 & 4 & 7 & 14 & 143 & 6 & 15 & 14 & 7 & 6 & 3 & 6 & 15 & 2 & 7 \\
     12 & 0 & 4 & 1 & 4 & 1 & 5 & 0 & 8 & 1 & 0 & 3 & 0 & 0 & 1 & 1 & 8 & 1 & 0 & 0 & 0 & 3 & 1 & 1 & 10 & 3 & 2 & 1 & 2 & 1 & 0 & 0 \\
     556 & 12 & 44 & 4 & 44 & 13 & 12 & 5 & 44 & 4 & 12 & 0 & 12 & 5 & 4 & 1 & 44 & 13 & 12 & 5 & 12 & 15 & 4 & 7 & 12 & 5 & 4 & 1 & 4 & 7 & 0 & 3 \\
    \end{tabular*}
    \caption{\small $Q_{15}=\{-1,0,0,-2,-1,1,1,2,-1,-1,1,1,1,1,2,0,-1,0,-1,0,0,0,0,-1,-1,0,0,0,0,0,0\}$}
    \label{tab:QLR15}
\end{table}

\begin{table}[hbt!]
    \centering\footnotesize
    \begin{tabular*}{\textwidth}{@{\extracolsep{\fill}\quad}cccccccccccccccccccccccccccccccc}
     0 & 1 & 1 & 9 & 1 & 3 & 3 & 11 & 2 & 3 & 3 & 11 & 3 & 19 & 11 & 27 & 1 & 5 & 5 & 13 & 3 & 7 & 7 & 15 & 3 & 7 & 7 & 15 & 7 & 23 & 15 & 31 \\
     1 & 3 & 3 & 1 & 3 & 7 & 7 & 3 & 3 & 7 & 3 & 3 & 7 & 23 & 3 & 19 & 3 & 7 & 1 & 5 & 7 & 23 & 3 & 7 & 7 & 23 & 3 & 7 & 23 & 55 & 7 & 23 \\
     8 & 9 & 9 & 13 & 0 & 1 & 1 & 9 & 10 & 11 & 11 & 9 & 2 & 3 & 3 & 11 & 9 & 13 & 13 & 77 & 1 & 5 & 5 & 13 & 11 & 15 & 15 & 13 & 3 & 7 & 7 & 15 \\
     0 & 1 & 1 & 5 & 1 & 3 & 3 & 1 & 2 & 3 & 3 & 1 & 3 & 7 & 3 & 3 & 1 & 5 & 5 & 13 & 3 & 7 & 1 & 5 & 3 & 7 & 7 & 5 & 7 & 23 & 3 & 7 \\
     2 & 0 & 3 & 1 & 3 & 1 & 7 & 3 & 10 & 2 & 11 & 3 & 11 & 3 & 3 & 11 & 0 & 1 & 1 & 5 & 1 & 3 & 3 & 7 & 2 & 3 & 3 & 7 & 3 & 7 & 7 & 15 \\
     3 & 1 & 7 & 3 & 7 & 3 & 135 & 7 & 2 & 3 & 3 & 3 & 3 & 7 & 7 & 3 & 1 & 3 & 3 & 1 & 3 & 7 & 7 & 3 & 3 & 7 & 3 & 3 & 7 & 23 & 3 & 7 \\
     10 & 8 & 11 & 9 & 2 & 0 & 3 & 1 & 42 & 10 & 10 & 11 & 10 & 2 & 11 & 3 & 8 & 9 & 9 & 13 & 0 & 1 & 1 & 5 & 10 & 11 & 11 & 15 & 2 & 3 & 3 & 7 \\
     2 & 0 & 3 & 1 & 3 & 1 & 7 & 3 & 10 & 2 & 2 & 3 & 2 & 3 & 3 & 3 & 0 & 1 & 1 & 5 & 1 & 3 & 3 & 1 & 2 & 3 & 3 & 7 & 3 & 7 & 3 & 3 \\
    \end{tabular*}
    \caption{\small $Q_{16}=\{0,-1,0,-1,0,1,-1,1,0,1,0,0,1,-1,0,0,0,0,0,1,-1,-1,1,1,1,0,-1,0,0,-1,0\}$}
    \label{tab:QLR16}
\end{table}

\begin{table}[hbt!]
    \centering\footnotesize
    \begin{tabular*}{\textwidth}{@{\extracolsep{\fill}\quad}cccccccccccccccccccccccccccccccc}
     0 & 3 & 12 & 15 & 1 & 19 & 13 & 31 & 4 & 7 & 28 & 31 & 5 & 23 & 29 & 95 & 1 & 35 & 13 & 47 & 3 & 51 & 15 & 63 & 5 & 39 & 29 & 15 & 7 & 55 & 13 & 31 \\
     1 & 7 & 13 & 79 & 3 & 23 & 15 & 95 & 5 & 15 & 29 & 95 & 7 & 31 & 31 & 223 & 3 & 39 & 15 & 111 & 7 & 55 & 31 & 127 & 7 & 7 & 31 & 79 & 15 & 23 & 15 & 95 \\
     4 & 7 & 44 & 47 & 5 & 23 & 45 & 15 & 12 & 15 & 60 & 63 & 13 & 7 & 61 & 31 & 5 & 39 & 45 & 111 & 7 & 55 & 13 & 47 & 13 & 7 & 61 & 47 & 15 & 23 & 29 & 15 \\
     5 & 15 & 45 & 111 & 7 & 7 & 47 & 79 & 13 & 31 & 61 & 127 & 15 & 15 & 63 & 95 & 7 & 47 & 47 & 239 & 15 & 39 & 15 & 111 & 15 & 15 & 63 & 111 & 271 & 7 & 31 & 79 \\
     16 & 19 & 28 & 31 & 17 & 51 & 29 & 63 & 20 & 23 & 60 & 63 & 21 & 55 & 61 & 31 & 17 & 51 & 29 & 63 & 19 & 307 & 31 & 55 & 21 & 55 & 28 & 31 & 23 & 311 & 29 & 63 \\
     0 & 3 & 12 & 15 & 1 & 19 & 13 & 31 & 4 & 7 & 28 & 31 & 5 & 23 & 29 & 95 & 1 & 35 & 13 & 47 & 3 & 51 & 15 & 63 & 5 & 39 & 29 & 15 & 7 & 55 & 13 & 31 \\
     20 & 23 & 60 & 63 & 21 & 55 & 61 & 31 & 28 & 31 & 572 & 61 & 29 & 23 & 60 & 63 & 21 & 55 & 61 & 47 & 23 & 311 & 29 & 63 & 29 & 23 & 60 & 63 & 31 & 55 & 61 & 31 \\
     4 & 7 & 44 & 47 & 5 & 23 & 45 & 15 & 12 & 15 & 60 & 63 & 13 & 7 & 61 & 31 & 5 & 39 & 45 & 111 & 7 & 55 & 13 & 47 & 13 & 7 & 61 & 47 & 15 & 23 & 29 & 15 \\
    \end{tabular*}
    \caption{\small $Q_{17}=\{0,0,-1,0,0,0,0,0,-1,0,-2,0,0,1,0,1,1,1,1,0,1,2,-1,2,-1,-2,0,-2,0,0,0\}$}
    \label{tab:QLR17}
\end{table}

\begin{table}[hbt!]
    \centering\footnotesize
    \begin{tabular*}{\textwidth}{@{\extracolsep{\fill}\quad}cccccccccccccccccccccccccccccccc}
     0 & 1 & 1 & 9 & 2 & 3 & 3 & 11 & 2 & 3 & 3 & 11 & 10 & 11 & 11 & 27 & 1 & 5 & 5 & 13 & 3 & 7 & 7 & 15 & 0 & 1 & 1 & 9 & 2 & 3 & 3 & 11 \\
     1 & 3 & 3 & 1 & 3 & 19 & 3 & 3 & 3 & 3 & 35 & 3 & 2 & 3 & 3 & 19 & 3 & 7 & 1 & 5 & 7 & 23 & 3 & 7 & 1 & 3 & 3 & 1 & 3 & 19 & 3 & 3 \\
     2 & 3 & 0 & 1 & 6 & 7 & 2 & 3 & 6 & 7 & 2 & 3 & 14 & 15 & 10 & 11 & 3 & 7 & 1 & 5 & 7 & 23 & 3 & 7 & 2 & 3 & 0 & 1 & 6 & 7 & 2 & 3 \\
     3 & 7 & 1 & 3 & 7 & 23 & 3 & 19 & 2 & 3 & 3 & 3 & 6 & 7 & 2 & 3 & 7 & 23 & 3 & 7 & 23 & 87 & 7 & 23 & 3 & 7 & 1 & 3 & 7 & 23 & 3 & 19 \\
     4 & 5 & 5 & 13 & 6 & 7 & 7 & 15 & 6 & 7 & 7 & 15 & 14 & 15 & 15 & 11 & 5 & 13 & 13 & 141 & 7 & 15 & 5 & 13 & 4 & 5 & 5 & 13 & 6 & 7 & 7 & 15 \\
     0 & 1 & 1 & 5 & 2 & 3 & 3 & 7 & 2 & 3 & 3 & 7 & 6 & 7 & 7 & 3 & 1 & 5 & 5 & 13 & 3 & 7 & 1 & 5 & 0 & 1 & 1 & 5 & 2 & 3 & 3 & 7 \\
     6 & 7 & 4 & 5 & 14 & 15 & 6 & 7 & 14 & 6 & 6 & 7 & 46 & 14 & 14 & 15 & 7 & 15 & 5 & 13 & 15 & 7 & 7 & 15 & 6 & 7 & 4 & 5 & 14 & 15 & 6 & 7 \\
     2 & 3 & 0 & 1 & 6 & 7 & 2 & 3 & 6 & 2 & 2 & 3 & 14 & 6 & 6 & 7 & 3 & 7 & 1 & 5 & 7 & 23 & 3 & 7 & 2 & 3 & 0 & 1 & 6 & 7 & 2 & 3 \\
    \end{tabular*}
    \caption{\small $Q_{18}=\{0,0,0,-1,-1,0,-1,1,1,0,0,-1,0,1,1,1,-1,1,0,0,-1,1,0,1,0,0,-1,0,0,-1,0\}$}
    \label{tab:QLR18}
\end{table}

\begin{table}[hbt!]
    \centering\footnotesize
    \begin{tabular*}{\textwidth}{@{\extracolsep{\fill}\quad}cccccccccccccccccccccccccccccccc}
     0 & 1 & 1 & 17 & 3 & 7 & 7 & 23 & 2 & 3 & 3 & 19 & 7 & 39 & 23 & 55 & 2 & 3 & 3 & 19 & 7 & 15 & 15 & 7 & 6 & 7 & 7 & 23 & 15 & 47 & 7 & 39 \\
     2 & 3 & 3 & 19 & 7 & 23 & 71 & 87 & 6 & 7 & 7 & 23 & 23 & 55 & 87 & 119 & 6 & 7 & 7 & 23 & 15 & 7 & 79 & 71 & 22 & 23 & 23 & 7 & 7 & 39 & 71 & 103 \\
     1 & 9 & 9 & 25 & 7 & 15 & 15 & 31 & 3 & 11 & 11 & 27 & 15 & 47 & 31 & 63 & 3 & 11 & 11 & 27 & 15 & 47 & 7 & 15 & 7 & 15 & 15 & 31 & 47 & 175 & 15 & 47 \\
     0 & 1 & 1 & 17 & 3 & 7 & 7 & 23 & 2 & 3 & 3 & 19 & 7 & 39 & 23 & 55 & 2 & 3 & 3 & 19 & 7 & 15 & 15 & 7 & 6 & 7 & 7 & 23 & 15 & 47 & 7 & 39 \\
     16 & 17 & 17 & 25 & 1 & 3 & 3 & 19 & 18 & 19 & 19 & 27 & 3 & 7 & 7 & 23 & 18 & 19 & 19 & 27 & 3 & 7 & 7 & 3 & 22 & 23 & 23 & 31 & 7 & 15 & 7 & 7 \\
     18 & 19 & 19 & 27 & 3 & 7 & 7 & 23 & 22 & 23 & 23 & 19 & 7 & 23 & 23 & 55 & 22 & 23 & 23 & 31 & 7 & 7 & 15 & 7 & 150 & 151 & 22 & 23 & 23 & 7 & 7 & 39 \\
     17 & 25 & 25 & 281 & 3 & 11 & 11 & 27 & 19 & 27 & 27 & 25 & 7 & 15 & 15 & 31 & 19 & 27 & 27 & 25 & 7 & 15 & 3 & 11 & 23 & 31 & 31 & 27 & 15 & 47 & 7 & 15 \\
     16 & 17 & 17 & 25 & 1 & 3 & 3 & 19 & 18 & 19 & 19 & 27 & 3 & 7 & 7 & 23 & 18 & 19 & 19 & 27 & 3 & 7 & 7 & 3 & 22 & 23 & 23 & 31 & 7 & 15 & 7 & 7 \\
    \end{tabular*}
    \caption{\small $Q_{19}=\{0,0,0,0,-1,0,0,-1,1,-1,0,0,-1,0,1,1,1,0,1,-1,0,2,1,-1,1,-2,0,0,0,-1,0\}$}
    \label{tab:QLR19}
\end{table}

\begin{table}[hbt!]
    \centering\footnotesize
    \begin{tabular*}{\textwidth}{@{\extracolsep{\fill}\quad}cccccccccccccccccccccccccccccccc}
     0 & 1 & 2 & 3 & 8 & 9 & 10 & 11 & 1 & 17 & 3 & 19 & 9 & 25 & 11 & 27 & 2 & 3 & 6 & 7 & 10 & 11 & 14 & 15 & 3 & 19 & 7 & 23 & 11 & 27 & 15 & 31 \\
     8 & 9 & 10 & 11 & 24 & 25 & 26 & 27 & 9 & 25 & 11 & 27 & 25 & 57 & 27 & 59 & 10 & 11 & 14 & 15 & 26 & 27 & 30 & 31 & 11 & 27 & 15 & 31 & 27 & 59 & 31 & 27 \\
     3 & 7 & 7 & 71 & 11 & 15 & 15 & 79 & 7 & 23 & 23 & 87 & 15 & 31 & 31 & 95 & 7 & 23 & 23 & 87 & 15 & 31 & 31 & 95 & 263 & 279 & 279 & 343 & 7 & 23 & 23 & 87 \\
     11 & 15 & 15 & 7 & 27 & 11 & 31 & 15 & 15 & 31 & 31 & 23 & 11 & 27 & 15 & 31 & 15 & 31 & 31 & 23 & 31 & 15 & 15 & 31 & 271 & 287 & 287 & 279 & 15 & 31 & 31 & 23 \\
     6 & 7 & 14 & 15 & 14 & 15 & 142 & 143 & 7 & 23 & 15 & 31 & 15 & 31 & 14 & 15 & 14 & 15 & 30 & 31 & 30 & 31 & 158 & 159 & 15 & 31 & 31 & 15 & 31 & 15 & 30 & 31 \\
     14 & 15 & 30 & 31 & 30 & 31 & 158 & 159 & 15 & 31 & 31 & 15 & 31 & 63 & 30 & 31 & 526 & 14 & 542 & 30 & 542 & 30 & 670 & 158 & 527 & 15 & 543 & 31 & 543 & 31 & 542 & 30 \\
     15 & 47 & 47 & 111 & 47 & 111 & 175 & 239 & 47 & 15 & 15 & 79 & 15 & 47 & 47 & 111 & 47 & 15 & 15 & 79 & 15 & 79 & 143 & 207 & 303 & 271 & 271 & 335 & 47 & 15 & 15 & 79 \\
     47 & 15 & 15 & 47 & 15 & 47 & 143 & 175 & 303 & 47 & 271 & 15 & 47 & 15 & 15 & 47 & 559 & 47 & 527 & 15 & 527 & 15 & 655 & 143 & 815 & 303 & 783 & 271 & 559 & 47 & 527 & 15 \\
    \end{tabular*}
    \caption{\small $Q_{20}=\{0,0,0,0,0,0,-1,0,-1,-1,0,-1,0,0,0,2,-1,2,1,1,1,1,1,1,-1,-1,-2,-1,0,0,0\}$}
    \label{tab:QLR20}
\end{table}

\begin{table}[hbt!]
    \centering\tiny
    \begin{tabular*}{\textwidth}{@{\extracolsep{\fill}\quad}cccccccccccccccccccccccccccccccc}
     0 & 1 & 3 & 7 & 1 & 3 & 19 & 23 & 1 & 3 & 7 & 15 & 3 & 11 & 3 & 7 & 1 & 3 & 19 & 3 & 3 & 35 & 51 & 19 & 3 & 11 & 3 & 11 & 35 & 43 & 19 & 3 \\
     12 & 13 & 15 & 143 & 4 & 5 & 7 & 15 & 13 & 15 & 143 & 399 & 5 & 7 & 15 & 143 & 4 & 5 & 7 & 15 & 0 & 1 & 3 & 7 & 5 & 7 & 15 & 143 & 1 & 3 & 7 & 15 \\
     4 & 5 & 7 & 23 & 5 & 7 & 23 & 87 & 0 & 1 & 3 & 7 & 1 & 3 & 7 & 23 & 0 & 1 & 3 & 19 & 1 & 3 & 19 & 83 & 1 & 3 & 3 & 3 & 3 & 11 & 3 & 19 \\
     28 & 29 & 31 & 15 & 12 & 13 & 15 & 7 & 12 & 13 & 15 & 143 & 4 & 5 & 7 & 15 & 12 & 13 & 15 & 7 & 4 & 5 & 7 & 3 & 4 & 5 & 7 & 15 & 0 & 1 & 3 & 7 \\
     40 & 41 & 0 & 1 & 41 & 43 & 1 & 3 & 41 & 43 & 1 & 3 & 43 & 107 & 3 & 11 & 41 & 43 & 1 & 3 & 43 & 107 & 3 & 35 & 43 & 107 & 3 & 11 & 107 & 619 & 35 & 43 \\
     60 & 61 & 12 & 13 & 44 & 45 & 4 & 5 & 61 & 45 & 13 & 15 & 45 & 47 & 5 & 7 & 44 & 45 & 4 & 5 & 40 & 41 & 0 & 1 & 45 & 47 & 5 & 7 & 41 & 43 & 1 & 3 \\
     44 & 45 & 4 & 5 & 45 & 47 & 5 & 7 & 40 & 41 & 0 & 1 & 41 & 43 & 1 & 3 & 40 & 41 & 0 & 1 & 41 & 43 & 1 & 3 & 41 & 43 & 1 & 3 & 43 & 107 & 3 & 11 \\
     1084 & 60 & 28 & 29 & 60 & 44 & 12 & 13 & 60 & 61 & 12 & 13 & 44 & 45 & 4 & 5 & 60 & 44 & 12 & 13 & 44 & 45 & 4 & 5 & 44 & 45 & 4 & 5 & 40 & 41 & 0 & 1 \\
     48 & 16 & 51 & 19 & 49 & 17 & 179 & 51 & 16 & 0 & 19 & 3 & 17 & 1 & 51 & 19 & 49 & 17 & 179 & 51 & 51 & 19 & 435 & 179 & 17 & 1 & 51 & 19 & 19 & 3 & 179 & 51 \\
     60 & 28 & 63 & 31 & 52 & 20 & 55 & 23 & 28 & 12 & 31 & 15 & 20 & 4 & 23 & 7 & 52 & 20 & 55 & 23 & 48 & 16 & 51 & 19 & 20 & 4 & 23 & 7 & 16 & 0 & 19 & 3 \\
     52 & 20 & 55 & 23 & 53 & 21 & 51 & 19 & 20 & 4 & 23 & 7 & 21 & 5 & 19 & 3 & 48 & 16 & 51 & 19 & 49 & 17 & 179 & 51 & 16 & 0 & 19 & 3 & 17 & 1 & 51 & 19 \\
     1084 & 60 & 61 & 63 & 60 & 28 & 63 & 31 & 60 & 28 & 29 & 31 & 28 & 12 & 31 & 15 & 60 & 28 & 63 & 31 & 52 & 20 & 55 & 23 & 28 & 12 & 31 & 15 & 20 & 4 & 23 & 7 \\
     60 & 56 & 48 & 16 & 61 & 57 & 49 & 17 & 56 & 40 & 16 & 0 & 57 & 41 & 17 & 1 & 56 & 57 & 49 & 17 & 57 & 59 & 51 & 19 & 57 & 41 & 17 & 1 & 59 & 43 & 19 & 3 \\
     3132 & 1084 & 60 & 28 & 1084 & 60 & 52 & 20 & 1084 & 60 & 28 & 12 & 60 & 44 & 20 & 4 & 1084 & 60 & 52 & 20 & 60 & 56 & 48 & 16 & 60 & 44 & 20 & 4 & 56 & 40 & 16 & 0 \\
     1084 & 60 & 52 & 20 & 60 & 61 & 53 & 21 & 60 & 44 & 20 & 4 & 61 & 45 & 21 & 5 & 60 & 56 & 48 & 16 & 56 & 57 & 49 & 17 & 56 & 40 & 16 & 0 & 57 & 41 & 17 & 1 \\
     7228 & 3132 & 1084 & 60 & 3132 & 1084 & 60 & 28 & 3132 & 1084 & 60 & 28 & 1084 & 60 & 28 & 12 & 3132 & 1084 & 60 & 28 & 1084 & 60 & 52 & 20 & 1084 & 60 & 28 & 12 & 60 & 44 & 20 & 4 \\
    \end{tabular*}
    \caption{\small $Q_{21}=\{-3,0,-1,0,0,2,2,1,2,1,0,-2,-1,1,0,-1,-1,0,0,-1,-1,1,0,2,1,0,0,0,0,-2,0\}$}
    \label{tab:QLR21}
\end{table}

\begin{table}[hbt!]
    \centering\footnotesize
    \begin{tabular*}{\textwidth}{@{\extracolsep{\fill}\quad}cccccccccccccccccccccccccccccccc}
     0 & 1 & 1 & 3 & 1 & 3 & 3 & 19 & 1 & 3 & 3 & 7 & 3 & 7 & 19 & 3 & 1 & 3 & 3 & 7 & 3 & 7 & 1 & 3 & 3 & 7 & 7 & 15 & 7 & 15 & 3 & 7 \\
     16 & 17 & 17 & 19 & 17 & 19 & 19 & 51 & 0 & 1 & 1 & 3 & 1 & 3 & 3 & 19 & 0 & 1 & 1 & 3 & 1 & 3 & 3 & 19 & 1 & 3 & 3 & 7 & 3 & 7 & 1 & 3 \\
     12 & 13 & 13 & 15 & 4 & 5 & 5 & 7 & 13 & 15 & 15 & 47 & 5 & 7 & 7 & 39 & 13 & 15 & 15 & 47 & 5 & 7 & 7 & 15 & 15 & 47 & 47 & 111 & 7 & 15 & 15 & 47 \\
     28 & 29 & 29 & 31 & 20 & 21 & 21 & 23 & 12 & 13 & 13 & 15 & 4 & 5 & 5 & 7 & 12 & 13 & 13 & 15 & 4 & 5 & 5 & 7 & 13 & 15 & 15 & 47 & 5 & 7 & 7 & 15 \\
     4 & 5 & 0 & 1 & 5 & 7 & 1 & 3 & 5 & 7 & 1 & 3 & 7 & 15 & 3 & 7 & 5 & 7 & 1 & 3 & 7 & 15 & 3 & 7 & 7 & 15 & 3 & 7 & 15 & 143 & 7 & 15 \\
     20 & 21 & 16 & 17 & 21 & 23 & 17 & 19 & 4 & 5 & 0 & 1 & 5 & 7 & 1 & 3 & 4 & 5 & 0 & 1 & 5 & 7 & 1 & 3 & 5 & 7 & 1 & 3 & 7 & 15 & 3 & 7 \\
     28 & 29 & 12 & 13 & 12 & 13 & 4 & 5 & 29 & 13 & 13 & 15 & 13 & 5 & 5 & 7 & 12 & 13 & 13 & 15 & 4 & 5 & 5 & 7 & 13 & 15 & 15 & 47 & 5 & 7 & 7 & 15 \\
     284 & 28 & 28 & 29 & 28 & 29 & 20 & 21 & 28 & 29 & 12 & 13 & 12 & 13 & 4 & 5 & 28 & 29 & 12 & 13 & 12 & 13 & 4 & 5 & 12 & 13 & 13 & 15 & 4 & 5 & 5 & 7 \\
     16 & 0 & 17 & 1 & 17 & 1 & 19 & 3 & 17 & 1 & 19 & 3 & 19 & 3 & 83 & 19 & 0 & 1 & 1 & 3 & 1 & 3 & 3 & 1 & 1 & 3 & 3 & 7 & 3 & 7 & 19 & 3 \\
     20 & 16 & 21 & 17 & 21 & 17 & 17 & 19 & 16 & 0 & 17 & 1 & 17 & 1 & 19 & 3 & 4 & 0 & 5 & 1 & 5 & 1 & 1 & 3 & 0 & 1 & 1 & 3 & 1 & 3 & 3 & 1 \\
     28 & 12 & 29 & 13 & 20 & 4 & 21 & 5 & 29 & 13 & 31 & 15 & 21 & 5 & 23 & 7 & 12 & 13 & 13 & 15 & 4 & 5 & 5 & 7 & 13 & 15 & 15 & 47 & 5 & 7 & 7 & 15 \\
     284 & 28 & 28 & 29 & 28 & 20 & 20 & 21 & 28 & 12 & 29 & 13 & 20 & 4 & 21 & 5 & 28 & 12 & 12 & 13 & 12 & 4 & 4 & 5 & 12 & 13 & 13 & 15 & 4 & 5 & 5 & 7 \\
     20 & 4 & 16 & 0 & 21 & 5 & 17 & 1 & 21 & 5 & 17 & 1 & 23 & 7 & 19 & 3 & 4 & 5 & 0 & 1 & 5 & 7 & 1 & 3 & 5 & 7 & 1 & 3 & 7 & 15 & 3 & 7 \\
     28 & 20 & 20 & 16 & 29 & 21 & 21 & 17 & 20 & 4 & 16 & 0 & 21 & 5 & 17 & 1 & 12 & 4 & 4 & 0 & 13 & 5 & 5 & 1 & 4 & 5 & 0 & 1 & 5 & 7 & 1 & 3 \\
     284 & 28 & 28 & 12 & 28 & 12 & 20 & 4 & 28 & 29 & 29 & 13 & 29 & 13 & 21 & 5 & 28 & 12 & 12 & 13 & 12 & 4 & 4 & 5 & 29 & 13 & 13 & 15 & 13 & 5 & 5 & 7 \\
     796 & 284 & 284 & 28 & 284 & 28 & 28 & 20 & 284 & 28 & 28 & 12 & 28 & 12 & 20 & 4 & 284 & 28 & 28 & 12 & 28 & 12 & 12 & 4 & 28 & 12 & 12 & 13 & 12 & 4 & 4 & 5 \\
    \end{tabular*}
    \caption{\small $Q_{22}=\{-2,0,-1,0,0,1,1,2,1,0,-1,0,1,0,-1,0,0,-1,-2,0,0,1,1,1,1,0,0,0,0,-2,0\}$}
    \label{tab:QLR22}
\end{table}

\begin{table}[hbt!]
    \centering\tiny
    \begin{tabular*}{\textwidth}{@{\extracolsep{\fill}\quad}cccccccccccccccccccccccccccccccc}
     0 & 3 & 1 & 7 & 3 & 23 & 7 & 55 & 3 & 15 & 7 & 79 & 15 & 63 & 79 & 31 & 8 & 11 & 9 & 15 & 11 & 31 & 15 & 63 & 11 & 271 & 15 & 335 & 31 & 319 & 15 & 287 \\
     8 & 11 & 9 & 15 & 11 & 31 & 15 & 63 & 11 & 31 & 15 & 15 & 1039 & 1087 & 1103 & 1055 & 24 & 27 & 25 & 31 & 27 & 63 & 31 & 31 & 27 & 287 & 11 & 271 & 1055 & 1343 & 1039 & 1311 \\
     48 & 51 & 49 & 55 & 51 & 183 & 55 & 695 & 0 & 3 & 1 & 7 & 3 & 23 & 7 & 55 & 56 & 59 & 57 & 63 & 59 & 55 & 63 & 183 & 8 & 11 & 9 & 15 & 11 & 31 & 15 & 63 \\
     56 & 59 & 57 & 63 & 59 & 55 & 63 & 567 & 8 & 11 & 9 & 15 & 11 & 31 & 15 & 63 & 120 & 123 & 121 & 127 & 123 & 51 & 127 & 55 & 24 & 27 & 25 & 31 & 27 & 63 & 31 & 31 \\
     8 & 11 & 9 & 15 & 1 & 7 & 3 & 23 & 11 & 79 & 15 & 207 & 7 & 31 & 15 & 15 & 72 & 75 & 73 & 79 & 9 & 15 & 11 & 31 & 75 & 335 & 79 & 463 & 15 & 287 & 15 & 271 \\
     24 & 27 & 25 & 31 & 9 & 15 & 11 & 31 & 27 & 95 & 11 & 79 & 15 & 63 & 79 & 31 & 88 & 91 & 89 & 95 & 25 & 31 & 27 & 15 & 91 & 351 & 75 & 335 & 31 & 319 & 15 & 287 \\
     56 & 59 & 57 & 63 & 49 & 55 & 51 & 183 & 8 & 11 & 9 & 15 & 1 & 7 & 3 & 23 & 120 & 123 & 121 & 127 & 57 & 63 & 59 & 55 & 72 & 75 & 73 & 79 & 9 & 15 & 11 & 31 \\
     120 & 57 & 56 & 59 & 57 & 63 & 59 & 55 & 24 & 27 & 25 & 31 & 9 & 15 & 11 & 31 & 2168 & 121 & 120 & 123 & 121 & 59 & 123 & 63 & 88 & 91 & 89 & 95 & 25 & 31 & 27 & 15 \\
     8 & 1 & 9 & 3 & 11 & 7 & 15 & 23 & 11 & 7 & 15 & 15 & 79 & 31 & 591 & 15 & 24 & 9 & 25 & 11 & 27 & 15 & 31 & 31 & 27 & 15 & 11 & 79 & 95 & 63 & 79 & 31 \\
     72 & 9 & 73 & 11 & 75 & 15 & 79 & 31 & 75 & 15 & 79 & 15 & 1103 & 1055 & 1615 & 1039 & 88 & 25 & 89 & 27 & 91 & 31 & 95 & 15 & 91 & 31 & 75 & 15 & 1119 & 1087 & 1103 & 1055 \\
     56 & 49 & 57 & 51 & 59 & 55 & 63 & 183 & 8 & 1 & 9 & 3 & 11 & 7 & 15 & 23 & 120 & 57 & 56 & 59 & 57 & 63 & 59 & 55 & 24 & 9 & 25 & 11 & 27 & 15 & 31 & 31 \\
     120 & 57 & 121 & 59 & 123 & 63 & 127 & 55 & 72 & 9 & 73 & 11 & 75 & 15 & 79 & 31 & 2168 & 121 & 120 & 123 & 121 & 59 & 123 & 63 & 88 & 25 & 89 & 27 & 91 & 31 & 95 & 15 \\
     24 & 9 & 25 & 11 & 9 & 3 & 11 & 7 & 27 & 15 & 11 & 79 & 15 & 15 & 79 & 7 & 88 & 73 & 89 & 75 & 25 & 11 & 27 & 15 & 91 & 79 & 75 & 207 & 31 & 31 & 15 & 15 \\
     88 & 25 & 89 & 27 & 73 & 11 & 75 & 15 & 91 & 31 & 75 & 15 & 79 & 31 & 591 & 15 & 120 & 89 & 121 & 91 & 89 & 27 & 91 & 11 & 89 & 95 & 73 & 79 & 95 & 63 & 79 & 31 \\
     120 & 57 & 56 & 59 & 57 & 51 & 59 & 55 & 24 & 9 & 25 & 11 & 9 & 3 & 11 & 7 & 2168 & 121 & 120 & 123 & 56 & 59 & 57 & 63 & 88 & 73 & 89 & 75 & 25 & 11 & 27 & 15 \\
     2168 & 56 & 120 & 57 & 121 & 59 & 123 & 63 & 88 & 25 & 89 & 27 & 73 & 11 & 75 & 15 & 6264 & 120 & 2168 & 121 & 120 & 57 & 121 & 59 & 120 & 89 & 121 & 91 & 89 & 27 & 91 & 11 \\
    \end{tabular*}
    \caption{\small $Q_{23}=\{-2,0,0,0,0,1,0,1,2,-1,0,-1,-1,0,-1,1,-1,0,1,-2,0,2,2,1,2,-1,0,0,0,-3,0\}$}
    \label{tab:QLR23}
\end{table}

\begin{table}[hbt!]
    \centering\footnotesize
    \begin{tabular*}{\textwidth}{@{\extracolsep{\fill}\quad}cccccccccccccccccccccccccccccccc}
     0 & 3 & 1 & 11 & 1 & 7 & 3 & 15 & 1 & 19 & 3 & 27 & 3 & 23 & 7 & 31 & 4 & 7 & 5 & 15 & 5 & 39 & 7 & 47 & 5 & 23 & 7 & 31 & 7 & 55 & 15 & 63 \\
     4 & 7 & 5 & 15 & 5 & 15 & 7 & 47 & 0 & 3 & 1 & 11 & 1 & 7 & 3 & 15 & 12 & 15 & 13 & 47 & 13 & 47 & 15 & 111 & 4 & 7 & 5 & 15 & 5 & 39 & 7 & 47 \\
     8 & 11 & 9 & 27 & 0 & 3 & 1 & 11 & 9 & 27 & 11 & 155 & 1 & 19 & 3 & 27 & 12 & 15 & 13 & 31 & 4 & 7 & 5 & 15 & 13 & 31 & 15 & 27 & 5 & 23 & 7 & 31 \\
     12 & 15 & 13 & 31 & 4 & 7 & 5 & 15 & 8 & 11 & 9 & 27 & 0 & 3 & 1 & 11 & 28 & 31 & 29 & 15 & 12 & 15 & 13 & 47 & 12 & 15 & 13 & 31 & 4 & 7 & 5 & 15 \\
     4 & 7 & 0 & 3 & 5 & 23 & 1 & 7 & 5 & 23 & 1 & 19 & 7 & 55 & 3 & 23 & 20 & 23 & 4 & 7 & 21 & 55 & 5 & 39 & 21 & 55 & 5 & 23 & 23 & 119 & 7 & 55 \\
     12 & 15 & 4 & 7 & 13 & 31 & 5 & 15 & 4 & 7 & 0 & 3 & 5 & 23 & 1 & 7 & 28 & 31 & 12 & 15 & 29 & 63 & 13 & 47 & 20 & 23 & 4 & 7 & 21 & 55 & 5 & 39 \\
     12 & 15 & 8 & 11 & 4 & 7 & 0 & 3 & 13 & 31 & 9 & 27 & 5 & 23 & 1 & 19 & 28 & 31 & 12 & 15 & 20 & 23 & 4 & 7 & 29 & 23 & 13 & 31 & 21 & 55 & 5 & 23 \\
     28 & 13 & 12 & 15 & 12 & 15 & 4 & 7 & 12 & 15 & 8 & 11 & 4 & 7 & 0 & 3 & 284 & 29 & 28 & 31 & 28 & 31 & 12 & 15 & 28 & 31 & 12 & 15 & 20 & 23 & 4 & 7 \\
     4 & 1 & 5 & 3 & 5 & 3 & 7 & 7 & 5 & 3 & 7 & 11 & 7 & 7 & 135 & 15 & 12 & 5 & 13 & 7 & 13 & 7 & 15 & 15 & 13 & 7 & 15 & 15 & 15 & 23 & 7 & 31 \\
     12 & 5 & 13 & 7 & 4 & 7 & 5 & 15 & 4 & 1 & 5 & 3 & 5 & 3 & 7 & 7 & 28 & 13 & 29 & 15 & 12 & 15 & 13 & 47 & 12 & 5 & 13 & 7 & 13 & 7 & 15 & 15 \\
     12 & 9 & 13 & 11 & 4 & 1 & 5 & 3 & 13 & 11 & 15 & 27 & 5 & 3 & 7 & 11 & 28 & 13 & 12 & 15 & 12 & 5 & 13 & 7 & 12 & 15 & 13 & 11 & 13 & 7 & 15 & 15 \\
     28 & 13 & 12 & 15 & 12 & 5 & 13 & 7 & 12 & 9 & 13 & 11 & 4 & 1 & 5 & 3 & 284 & 29 & 28 & 31 & 28 & 13 & 29 & 15 & 28 & 13 & 12 & 15 & 12 & 5 & 13 & 7 \\
     12 & 5 & 4 & 1 & 13 & 7 & 5 & 3 & 13 & 7 & 5 & 3 & 15 & 23 & 7 & 7 & 28 & 21 & 12 & 5 & 29 & 23 & 13 & 7 & 29 & 23 & 13 & 7 & 31 & 55 & 15 & 23 \\
     28 & 13 & 12 & 5 & 12 & 15 & 4 & 7 & 12 & 5 & 4 & 1 & 13 & 7 & 5 & 3 & 284 & 29 & 28 & 13 & 28 & 31 & 12 & 15 & 28 & 21 & 12 & 5 & 29 & 23 & 13 & 7 \\
     28 & 13 & 12 & 9 & 12 & 5 & 4 & 1 & 12 & 15 & 13 & 11 & 13 & 7 & 5 & 3 & 284 & 29 & 28 & 13 & 28 & 21 & 12 & 5 & 28 & 31 & 12 & 15 & 29 & 23 & 13 & 7 \\
     284 & 12 & 28 & 13 & 28 & 13 & 12 & 5 & 28 & 13 & 12 & 9 & 12 & 5 & 4 & 1 & 796 & 28 & 284 & 29 & 284 & 29 & 28 & 13 & 284 & 29 & 28 & 13 & 28 & 21 & 12 & 5 \\
    \end{tabular*}
    \caption{\small $Q_{24}=\{-2,0,0,0,0,1,1,1,1,0,-1,0,0,-2,0,0,0,0,-1,1,-1,1,1,1,2,0,-1,0,0,-2,0\}$}
    \label{tab:QLR24}
\end{table}
\end{center}

\clearpage
\bibliographystyle{JHEP}
\bibliography{references}

\providecommand{\href}[2]{#2}\begingroup\raggedright\begin{thebibliography}{10}

\bibitem{ryu2006holographic}
S.~Ryu and T.~Takayanagi, \emph{{Holographic derivation of entanglement entropy
  from AdS/CFT}},
  \href{https://doi.org/10.1103/PhysRevLett.96.181602}{\emph{Phys. Rev. Lett.}
  {\bfseries 96} (2006) 181602}
  [\href{https://arxiv.org/abs/hep-th/0603001}{{\ttfamily hep-th/0603001}}].

\bibitem{hubeny2007covariant}
V.~E. Hubeny, M.~Rangamani and T.~Takayanagi, \emph{{A Covariant holographic
  entanglement entropy proposal}},
  \href{https://doi.org/10.1088/1126-6708/2007/07/062}{\emph{JHEP} {\bfseries
  07} (2007) 062} [\href{https://arxiv.org/abs/0705.0016}{{\ttfamily
  0705.0016}}].

\bibitem{akers2017quantum}
C.~Akers, V.~Chandrasekaran, S.~Leichenauer, A.~Levine and
  A.~Shahbazi~Moghaddam, \emph{{The Quantum Null Energy Condition, Entanglement
  Wedge Nesting, and Quantum Focusing}},
  \href{https://doi.org/10.1103/PhysRevD.101.025011}{\emph{Phys. Rev. D}
  {\bfseries 101} (2020) 025011}
  [\href{https://arxiv.org/abs/1706.04183}{{\ttfamily 1706.04183}}].

\bibitem{balakrishnan2019general}
S.~Balakrishnan, T.~Faulkner, Z.~U. Khandker and H.~Wang, \emph{{A General
  Proof of the Quantum Null Energy Condition}},
  \href{https://doi.org/10.1007/JHEP09(2019)020}{\emph{JHEP} {\bfseries 09}
  (2019) 020} [\href{https://arxiv.org/abs/1706.09432}{{\ttfamily
  1706.09432}}].

\bibitem{koeller2016holographic}
J.~Koeller and S.~Leichenauer, \emph{{Holographic Proof of the Quantum Null
  Energy Condition}},
  \href{https://doi.org/10.1103/PhysRevD.94.024026}{\emph{Phys. Rev. D}
  {\bfseries 94} (2016) 024026}
  [\href{https://arxiv.org/abs/1512.06109}{{\ttfamily 1512.06109}}].

\bibitem{bousso2016proof}
R.~Bousso, Z.~Fisher, J.~Koeller, S.~Leichenauer and A.~C. Wall, \emph{{Proof
  of the Quantum Null Energy Condition}},
  \href{https://doi.org/10.1103/PhysRevD.93.024017}{\emph{Phys. Rev. D}
  {\bfseries 93} (2016) 024017}
  [\href{https://arxiv.org/abs/1509.02542}{{\ttfamily 1509.02542}}].

\bibitem{casini2004finite}
H.~Casini and M.~Huerta, \emph{{A Finite entanglement entropy and the
  c-theorem}},
  \href{https://doi.org/10.1016/j.physletb.2004.08.072}{\emph{Phys. Lett. B}
  {\bfseries 600} (2004) 142}
  [\href{https://arxiv.org/abs/hep-th/0405111}{{\ttfamily hep-th/0405111}}].

\bibitem{casini2017markov}
H.~Casini, E.~Testé and G.~Torroba, \emph{{Markov Property of the Conformal
  Field Theory Vacuum and the a Theorem}},
  \href{https://doi.org/10.1103/PhysRevLett.118.261602}{\emph{Phys. Rev. Lett.}
  {\bfseries 118} (2017) 261602}
  [\href{https://arxiv.org/abs/1704.01870}{{\ttfamily 1704.01870}}].

\bibitem{casini2017modular}
H.~Casini, E.~Teste and G.~Torroba, \emph{{Modular Hamiltonians on the null
  plane and the Markov property of the vacuum state}},
  \href{https://doi.org/10.1088/1751-8121/aa7eaa}{\emph{J. Phys. A} {\bfseries
  50} (2017) 364001} [\href{https://arxiv.org/abs/1703.10656}{{\ttfamily
  1703.10656}}].

\bibitem{swingle2014universality}
B.~Swingle and M.~Van~Raamsdonk, \emph{{Universality of Gravity from
  Entanglement}},  \href{https://arxiv.org/abs/1405.2933}{{\ttfamily
  1405.2933}}.

\bibitem{dong2016reconstruction}
X.~Dong, D.~Harlow and A.~C. Wall, \emph{{Reconstruction of Bulk Operators
  within the Entanglement Wedge in Gauge-Gravity Duality}},
  \href{https://doi.org/10.1103/PhysRevLett.117.021601}{\emph{Phys. Rev. Lett.}
  {\bfseries 117} (2016) 021601}
  [\href{https://arxiv.org/abs/1601.05416}{{\ttfamily 1601.05416}}].

\bibitem{cotler2019entanglement}
J.~Cotler, P.~Hayden, G.~Penington, G.~Salton, B.~Swingle and M.~Walter,
  \emph{{Entanglement Wedge Reconstruction via Universal Recovery Channels}},
  \href{https://doi.org/10.1103/PhysRevX.9.031011}{\emph{Phys. Rev. X}
  {\bfseries 9} (2019) 031011}
  [\href{https://arxiv.org/abs/1704.05839}{{\ttfamily 1704.05839}}].

\bibitem{bao2018beyond}
N.~Bao, G.~Penington, J.~Sorce and A.~C. Wall, \emph{{Beyond Toy Models:
  Distilling Tensor Networks in Full AdS/CFT}},
  \href{https://doi.org/10.1007/JHEP11(2019)069}{\emph{JHEP} {\bfseries 11}
  (2019) 069} [\href{https://arxiv.org/abs/1812.01171}{{\ttfamily
  1812.01171}}].

\bibitem{nezami2016multipartite}
S.~Nezami and M.~Walter, \emph{{Multipartite Entanglement in Stabilizer Tensor
  Networks}},  \href{https://arxiv.org/abs/1608.02595}{{\ttfamily 1608.02595}}.

\bibitem{freedman2017bit}
M.~Freedman and M.~Headrick, \emph{{Bit threads and holographic entanglement}},
  \href{https://doi.org/10.1007/s00220-016-2796-3}{\emph{Commun. Math. Phys.}
  {\bfseries 352} (2017) 407}
  [\href{https://arxiv.org/abs/1604.00354}{{\ttfamily 1604.00354}}].

\bibitem{cui2018bit}
S.~X. Cui, P.~Hayden, T.~He, M.~Headrick, B.~Stoica and M.~Walter, \emph{{Bit
  Threads and Holographic Monogamy}},
  \href{https://arxiv.org/abs/1808.05234}{{\ttfamily 1808.05234}}.

\bibitem{akers2019entanglement}
C.~Akers and P.~Rath, \emph{{Entanglement Wedge Cross Sections Require
  Tripartite Entanglement}},
  \href{https://arxiv.org/abs/1911.07852}{{\ttfamily 1911.07852}}.

\bibitem{hayden2013holographic}
P.~Hayden, M.~Headrick and A.~Maloney, \emph{{Holographic Mutual Information is
  Monogamous}}, \href{https://doi.org/10.1103/PhysRevD.87.046003}{\emph{Phys.
  Rev. D} {\bfseries 87} (2013) 046003}
  [\href{https://arxiv.org/abs/1107.2940}{{\ttfamily 1107.2940}}].

\bibitem{wall2014maximin}
A.~C. Wall, \emph{{Maximin Surfaces, and the Strong Subadditivity of the
  Covariant Holographic Entanglement Entropy}},
  \href{https://doi.org/10.1088/0264-9381/31/22/225007}{\emph{Class. Quant.
  Grav.} {\bfseries 31} (2014) 225007}
  [\href{https://arxiv.org/abs/1211.3494}{{\ttfamily 1211.3494}}].

\bibitem{bao2015holographic}
N.~Bao, S.~Nezami, H.~Ooguri, B.~Stoica, J.~Sully and M.~Walter, \emph{{The
  Holographic Entropy Cone}},
  \href{https://doi.org/10.1007/JHEP09(2015)130}{\emph{JHEP} {\bfseries 09}
  (2015) 130} [\href{https://arxiv.org/abs/1505.07839}{{\ttfamily
  1505.07839}}].

\bibitem{hernandezcuenca2019}
S.~Hernández-Cuenca, \emph{{Holographic entropy cone for five regions}},
  \href{https://doi.org/10.1103/PhysRevD.100.026004}{\emph{Phys. Rev. D}
  {\bfseries 100} (2019) 026004}
  [\href{https://arxiv.org/abs/1903.09148}{{\ttfamily 1903.09148}}].

\bibitem{czech2019holographic}
B.~Czech and X.~Dong, \emph{{Holographic Entropy Cone with Time Dependence in
  Two Dimensions}}, \href{https://doi.org/10.1007/JHEP10(2019)177}{\emph{JHEP}
  {\bfseries 10} (2019) 177}
  [\href{https://arxiv.org/abs/1905.03787}{{\ttfamily 1905.03787}}].

\bibitem{bao2018entropy}
N.~Bao and M.~Mezei, \emph{{On the Entropy Cone for Large Regions at Late
  Times}},  \href{https://arxiv.org/abs/1811.00019}{{\ttfamily 1811.00019}}.

\bibitem{hubeny2018holographic}
V.~E. Hubeny, M.~Rangamani and M.~Rota, \emph{{Holographic entropy relations}},
  \href{https://doi.org/10.1002/prop.201800067}{\emph{Fortsch. Phys.}
  {\bfseries 66} (2018) 1800067}
  [\href{https://arxiv.org/abs/1808.07871}{{\ttfamily 1808.07871}}].

\bibitem{hubeny2019holographic}
V.~E. Hubeny, M.~Rangamani and M.~Rota, \emph{{The holographic entropy
  arrangement}}, \href{https://doi.org/10.1002/prop.201900011}{\emph{Fortsch.
  Phys.} {\bfseries 67} (2019) 1900011}
  [\href{https://arxiv.org/abs/1812.08133}{{\ttfamily 1812.08133}}].

\bibitem{he2019holographic}
T.~He, M.~Headrick and V.~E. Hubeny, \emph{{Holographic Entropy Relations
  Repackaged}}, \href{https://doi.org/10.1007/JHEP10(2019)118}{\emph{JHEP}
  {\bfseries 10} (2019) 118}
  [\href{https://arxiv.org/abs/1905.06985}{{\ttfamily 1905.06985}}].

\bibitem{rota2018new}
M.~Rota and S.~J. Weinberg, \emph{{New constraints for holographic entropy from
  maximin: A no-go theorem}},
  \href{https://doi.org/10.1103/PhysRevD.97.086013}{\emph{Phys. Rev. D}
  {\bfseries 97} (2018) 086013}
  [\href{https://arxiv.org/abs/1712.10004}{{\ttfamily 1712.10004}}].

\bibitem{hernndezcuenca2019quantum}
S.~Hernández-Cuenca, V.~E. Hubeny, M.~Rangamani and M.~Rota, \emph{{The
  quantum marginal independence problem}},
  \href{https://arxiv.org/abs/1912.01041}{{\ttfamily 1912.01041}}.

\bibitem{greenberger1989going}
D.~M. Greenberger, M.~A. Horne and A.~Zeilinger, \emph{Going beyond {B}ell's
  theorem},  in \emph{Bell's theorem, quantum theory and conceptions of the
  universe}, pp.~69--72, Springer, (1989).

\bibitem{umemoto2018entanglement}
T.~Takayanagi and K.~Umemoto, \emph{{Entanglement of purification through
  holographic duality}},
  \href{https://doi.org/10.1038/s41567-018-0075-2}{\emph{Nat. Phys.} {\bfseries
  14} (2018) 573} [\href{https://arxiv.org/abs/1708.09393}{{\ttfamily
  1708.09393}}].

\bibitem{nguyen2018entanglement}
P.~Nguyen, T.~Devakul, M.~G. Halbasch, M.~P. Zaletel and B.~Swingle,
  \emph{{Entanglement of purification: from spin chains to holography}},
  \href{https://doi.org/10.1007/JHEP01(2018)098}{\emph{JHEP} {\bfseries 01}
  (2018) 098} [\href{https://arxiv.org/abs/1709.07424}{{\ttfamily
  1709.07424}}].

\bibitem{bao2018holographic}
N.~Bao and I.~F. Halpern, \emph{{Holographic Inequalities and Entanglement of
  Purification}}, \href{https://doi.org/10.1007/JHEP03(2018)006}{\emph{JHEP}
  {\bfseries 03} (2018) 006}
  [\href{https://arxiv.org/abs/1710.07643}{{\ttfamily 1710.07643}}].

\bibitem{umemoto2018entanglement2}
K.~Umemoto and Y.~Zhou, \emph{{Entanglement of Purification for Multipartite
  States and its Holographic Dual}},
  \href{https://doi.org/10.1007/JHEP10(2018)152}{\emph{JHEP} {\bfseries 10}
  (2018) 152} [\href{https://arxiv.org/abs/1805.02625}{{\ttfamily
  1805.02625}}].

\bibitem{bao2019conditional}
N.~Bao and I.~F. Halpern, \emph{{Conditional and Multipartite Entanglements of
  Purification and Holography}},
  \href{https://doi.org/10.1103/PhysRevD.99.046010}{\emph{Phys. Rev. D}
  {\bfseries 99} (2019) 046010}
  [\href{https://arxiv.org/abs/1805.00476}{{\ttfamily 1805.00476}}].

\bibitem{hein2004multiparty}
M.~Hein, J.~Eisert and H.~J. Briegel, \emph{Multiparty entanglement in graph
  states}, \href{https://doi.org/10.1103/PhysRevA.69.062311}{\emph{Phys. Rev.
  A} {\bfseries 69} (2004) 062311}.

\bibitem{rossi2013quantum}
M.~Rossi, M.~Huber, D.~Bru{\ss} and C.~Macchiavello, \emph{Quantum hypergraph
  states}, \href{https://doi.org/10.1088/1367-2630/15/11/113022}{\emph{New J.
  Phys.} {\bfseries 15} (2013) 113022}.

\bibitem{linden2013quantum}
N.~Linden, F.~Matúš, M.~B. Ruskai and A.~Winter, \emph{{The Quantum Entropy
  Cone of Stabiliser States}},
  \href{https://doi.org/10.4230/LIPIcs.TQC.2013.270}{\emph{LIPIcs} {\bfseries
  22} (2013) 270} [\href{https://arxiv.org/abs/1302.5453}{{\ttfamily
  1302.5453}}].

\bibitem{matus2007infinitely}
F.~{Matus}, \emph{Infinitely many information inequalities},
  \href{https://doi.org/10.1109/ISIT.2007.4557201}{\emph{2007 IEEE Int. Symp.
  Info.} (2007) }.

\bibitem{Marolf_2017}
D.~Marolf, M.~Rota and J.~Wien, \emph{{Handlebody phases and the polyhedrality
  of the holographic entropy cone}},
  \href{https://doi.org/10.1007/JHEP10(2017)069}{\emph{JHEP} {\bfseries 10}
  (2017) 069} [\href{https://arxiv.org/abs/1705.10736}{{\ttfamily
  1705.10736}}].

\bibitem{ingleton1971representation}
A.~W. Ingleton, \emph{Representation of matroids}, {\emph{Combinatorial
  mathematics and its applications} {\bfseries 23} (1971) }.

\bibitem{gottesman1997stabilizer}
D.~Gottesman, \emph{{Stabilizer codes and quantum error correction}},
  \href{https://arxiv.org/abs/quant-ph/9705052}{{\ttfamily quant-ph/9705052}}.

\bibitem{Nielsen:2011:QCQ:1972505}
M.~A. Nielsen and I.~L. Chuang, \emph{Quantum Computation and Quantum
  Information: 10th Anniversary Edition}. Cambridge University Press, New York,
  NY, USA, 10th~ed., 2011.

\bibitem{hammer2000inequalities}
D.~Hammer, A.~Romashchenko, A.~Shen and N.~Vereshchagin, \emph{Inequalities for
  shannon entropy and kolmogorov complexity},
  \href{https://doi.org/10.1006/jcss.1999.1677}{\emph{JCSS} {\bfseries 60}
  (2000) 442}.

\bibitem{Dougherty2009}
R.~Dougherty, C.~F. Freiling and K.~Zeger, \emph{Linear rank inequalities on
  five or more variables},  \href{https://arxiv.org/abs/0910.0284}{{\ttfamily
  0910.0284}}.

\bibitem{dougherty_2014}
R.~Dougherty, \emph{Computations of linear rank inequalities on six variables},
  \href{https://doi.org/10.1109/isit.2014.6875348}{\emph{2014 IEEE Int. Symp.
  Info.} (2014) }.

\bibitem{Facchi_2008}
P.~Facchi, G.~Florio, G.~Parisi and S.~Pascazio, \emph{Maximally multipartite
  entangled states},
  \href{https://doi.org/10.1103/PhysRevA.77.060304}{\emph{Phys. Rev. A}
  {\bfseries 77} (2008) 060304}.

\bibitem{Gour_2010}
G.~Gour and N.~R. Wallach, \emph{All maximally entangled four-qubit states},
  \href{https://doi.org/10.1063/1.3511477}{\emph{J. Math. Phys.} {\bfseries 51}
  (2010) 112201}.

\bibitem{Helwig_2012}
W.~Helwig, W.~Cui, A.~Riera, J.~I. Latorre and H.-K. Lo, \emph{{Absolute
  Maximal Entanglement and Quantum Secret Sharing}},
  \href{https://doi.org/10.1103/PhysRevA.86.052335}{\emph{Phys. Rev. A}
  {\bfseries 86} (2012) 052335}
  [\href{https://arxiv.org/abs/1204.2289}{{\ttfamily 1204.2289}}].

\bibitem{Pastawski_2015}
F.~Pastawski, B.~Yoshida, D.~Harlow and J.~Preskill, \emph{{Holographic quantum
  error-correcting codes: Toy models for the bulk/boundary correspondence}},
  \href{https://doi.org/10.1007/JHEP06(2015)149}{\emph{JHEP} {\bfseries 06}
  (2015) 149} [\href{https://arxiv.org/abs/1503.06237}{{\ttfamily
  1503.06237}}].

\bibitem{Yang:2015uoa}
Z.~Yang, P.~Hayden and X.-L. Qi, \emph{{Bidirectional holographic codes and
  sub-AdS locality}},
  \href{https://doi.org/10.1007/JHEP01(2016)175}{\emph{JHEP} {\bfseries 01}
  (2016) 175} [\href{https://arxiv.org/abs/1510.03784}{{\ttfamily
  1510.03784}}].

\bibitem{Hayden:2016cfa}
P.~Hayden, S.~Nezami, X.-L. Qi, N.~Thomas, M.~Walter and Z.~Yang,
  \emph{{Holographic duality from random tensor networks}},
  \href{https://doi.org/10.1007/JHEP11(2016)009}{\emph{JHEP} {\bfseries 11}
  (2016) 009} [\href{https://arxiv.org/abs/1601.01694}{{\ttfamily
  1601.01694}}].

\bibitem{Raissi:2018bri}
Z.~Raissi, C.~Gogolin, A.~Riera and A.~Acín, \emph{{Optimal quantum error
  correcting codes from absolutely maximally entangled states}},
  \href{https://doi.org/10.1088/1751-8121/aaa151}{\emph{J. Phys.} {\bfseries
  A51} (2018) 075301} [\href{https://arxiv.org/abs/1701.03359}{{\ttfamily
  1701.03359}}].

\bibitem{Bryan:2018zsx}
J.~Bryan, S.~Leutheusser, Z.~Reichstein and M.~Van~Raamsdonk, \emph{{Locally
  Maximally Entangled States of Multipart Quantum Systems}},
  \href{https://arxiv.org/abs/1801.03508}{{\ttfamily 1801.03508}}.

\bibitem{Mazurek:2019xph}
P.~Mazurek, M.~Farkas, A.~Grudka, M.~Horodecki and M.~Studziński,
  \emph{{Quantum error correction codes and absolutely maximally entangled
  states}},  \href{https://arxiv.org/abs/1910.07427}{{\ttfamily 1910.07427}}.

\bibitem{Chan2011TruncationTF}
T.~Chan, A.~J. Grant and D.~Pfl{\"u}ger, \emph{Truncation technique for
  characterizing linear polymatroids},
  \href{https://doi.org/10.1109/TIT.2011.2165133}{\emph{IEEE Trans. Inf.
  Theory.} {\bfseries 57} (2011) 6364}.

\bibitem{zhou2007learning}
D.~Zhou, J.~Huang and B.~Sch{\"o}lkopf, \emph{Learning with hypergraphs:
  Clustering, classification, and embedding},  in \emph{Advances in neural
  information processing systems}, pp.~1601--1608, 2007.

\bibitem{Borndrfer2012ANO}
R.~Bornd{\"o}rfer and M.~Karbstein, \emph{A note on menger's theorem for
  hypergraphs},  2012.

\bibitem{Hubeny_2018}
V.~E. Hubeny, \emph{Bulk locality and cooperative flows},
  \href{https://doi.org/10.1007/jhep12(2018)068}{\emph{JHEP} {\bfseries 2018}
  (2018) }.

\bibitem{bao2019towards}
N.~Bao, A.~Chatwin-Davies, J.~Pollack and G.~N. Remmen, \emph{{Towards a Bit
  Threads Derivation of Holographic Entanglement of Purification}},
  \href{https://doi.org/10.1007/JHEP07(2019)152}{\emph{JHEP} {\bfseries 07}
  (2019) 152} [\href{https://arxiv.org/abs/1905.04317}{{\ttfamily
  1905.04317}}].

\bibitem{Harper:2019lff}
J.~Harper and M.~Headrick, \emph{{Bit threads and holographic entanglement of
  purification}}, \href{https://doi.org/10.1007/JHEP08(2019)101}{\emph{JHEP}
  {\bfseries 08} (2019) 101}
  [\href{https://arxiv.org/abs/1906.05970}{{\ttfamily 1906.05970}}].

\bibitem{Du:2019emy}
D.-H. Du, C.-B. Chen and F.-W. Shu, \emph{{Bit threads and holographic
  entanglement of purification}},
  \href{https://doi.org/10.1007/JHEP08(2019)140}{\emph{JHEP} {\bfseries 08}
  (2019) 140} [\href{https://arxiv.org/abs/1904.06871}{{\ttfamily
  1904.06871}}].

\bibitem{Dong_2014}
X.~Dong, \emph{{Holographic Entanglement Entropy for General Higher Derivative
  Gravity}}, \href{https://doi.org/10.1007/JHEP01(2014)044}{\emph{JHEP}
  {\bfseries 01} (2014) 044} [\href{https://arxiv.org/abs/1310.5713}{{\ttfamily
  1310.5713}}].

\end{thebibliography}\endgroup

\end{document}